\documentclass[aps,pra,11pt,onecolumn,nofootinbib,tightenlines,superscriptaddress]{revtex4-2}

\usepackage{mathrsfs}
\usepackage{afterpage}
\usepackage{graphicx}
\usepackage{comment}
\usepackage{dsfont}
\usepackage{amsmath}
\usepackage{amsfonts}
\usepackage{amssymb}
\usepackage{braket}
\usepackage{amsthm}
\usepackage[colorlinks=true,urlcolor=blue,citecolor=blue,linkcolor=blue]{hyperref}
\usepackage{caption}
\captionsetup{justification=raggedright}
\usepackage{lipsum}
\usepackage{subcaption}
\usepackage{mwe}
\usepackage{enumerate}
\usepackage{sidecap}
\usepackage{natbib}
\usepackage{appendix}
\usepackage{multirow}
\usepackage{array}
\usepackage{physics}
\usepackage{mathtools}
\usepackage{tikz}

\newcommand{\X}{\mathcal{X}}

\newcommand{\Z}{\mathcal{Z}}
\newcommand{\R}{\mathcal{R}}
\newcommand{\pa}{\mathrm{pa}}
\newcommand{\widebar}[1]{\overline{\mkern-1mu#1\mkern-1mu}}

\usepackage[paperwidth=225mm,paperheight=307mm,centering,hmargin=2.45cm,vmargin=2.5cm]{geometry}

\newtheorem{theorem}{Theorem}
\newtheorem{lemma}[theorem]{Lemma}
\newtheorem{proposition}[theorem]{Proposition}
\newtheorem{corollary}[theorem]{Corollary}

\newtheorem{remark}[theorem]{Remark}
\let\oldremark\remark
\renewcommand{\remark}{\oldremark\upshape}

\newcommand{\mc}{\mathcal}

\renewcommand{\P}{\mathcal{P}}

\def \d {\mathrm{d}}

\DeclareMathOperator{\supp}{supp}

\makeatother

\begin{document}


\title{\Large The strong converse exponent of composable randomness extraction against quantum side information}

\begin{abstract}
    We find a tight characterization of the strong converse exponent for randomness extraction against quantum side information. In contrast to previous tight bounds, we employ a composable error criterion given by the fidelity (or purified distance) to a uniform distribution in product with the marginal state. The characterization is in terms of a club-sandwiched conditional entropy recently introduced by Rubboli, Goodarzi and Tomamichel and used by Li, Li and Yu to establish the strong converse exponent for the case of classical side information. This provides the first precise operational interpretation of this family of conditional entropies in the quantum setting.
\end{abstract}

\author{Roberto Rubboli}
\email{ror@math.ku.dk}
\affiliation{Department of Mathematical Sciences, University of Copenhagen, Universitetsparken 5, 2100 Denmar}

\author{Marco Tomamichel}
\affiliation{Department of Electrical and Computer Engineering,
National University of Singapore, Singapore 117583, Singapore}
\affiliation{Centre for Quantum Technologies, National University of Singapore, Singapore 117543, Singapore}

\maketitle

\section{Introduction}

Randomness extraction is the task of extracting a uniformly distributed and secret key from a classical random variable that may be partially correlated with an adversary who possesses side information, possibly of a quantum nature. The objective is to produce a uniformly distributed random variable that is effectively uncorrelated with the eavesdropper’s side information.
This task plays a central role in quantum cryptography and quantum key distribution, where the raw key obtained from quantum measurements is generally noisy and only partially secret, due to information leakage to an eavesdropper during the protocol. Randomness extraction ensures that, after suitable post-processing, the final key is secure and can be safely used for cryptographic purposes.

Formally, let us consider a  classical-quantum state 
$\rho_{XE} =
    \sum_{x \in \X} \rho(x)\, \ketbra{x}{x}_X \otimes \rho_E(x)$.
Randomness extraction is implemented by applying a function, known as hash function,
$h:\mathcal{X}\rightarrow\mathcal{Z}$
to the register $X$. This induces the state
\begin{align}
    \mathcal{R}_h(\rho_{XE})
    =
    \sum_{z\in \Z} \ketbra{z}{z}_Z \otimes \sum_{x \in h^{-1}(z)} \rho(x)\,\rho_E(x),
\end{align}
where the register $Z$ corresponds to the hashed output and is interpreted as the final key. Here, the notation $x \in h^{-1}(z)$ indicates the set of all $x\in \X$ that are mapped to $z \in \Z$ by the hash function $h$. 
The goal of privacy amplification is to ensure that the resulting key $Z$ is close to being uniformly distributed and independent of the adversary's system $E$. Ideally, the output state should approach the product state
\begin{align}
\frac{I_{\mathcal{Z}}}{|\mathcal{Z}|} \otimes \rho_E \qquad \textrm{with} \qquad \rho_E = \sum_{x \in \X} \rho(x) \rho_E(x) \,,
\end{align}
which represents a perfectly uniform key that is completely secure against the adversary.

To study the asymptotic behaviour of this task, we are provided $n$ copies of a state $\rho_{XE}$ and perform a hash function $h_n:\mathcal{X}^{\times n}\rightarrow \mathcal{Z}_n$ on the first register of the $n$ copies. In this work, we adopt as a similarity measure the Uhlmann's fidelity.
As established in~\cite{devetak2005distillation}, there exist families of hash functions such that the fidelity with respect to the ideal state converges to one if (and only if) the size of the alphabet $\mathcal{Z}_n$ grows sufficiently slowly with the number of copies. In particular, its exponential growth rate must not exceed the von Neumann conditional entropy, i.e.,
$\limsup_{n \to \infty} \frac{1}{n}\log |\mathcal{Z}_n|
    < H(X|E)_{\rho}$.
Conversely, if
\begin{align}
    \liminf_{n \to \infty} \frac{1}{n}\log |\mathcal{Z}_n|
    > H(X|E)_{\rho} \, ,
\end{align}
privacy amplification is no longer possible (with any hash function), and the fidelity with respect to the ideal state decays exponentially to zero. This behavior follows, for example, from combining the one-shot converse bounds expressed in terms of the smooth conditional min-entropy~\cite{tomamichel2013hierarchy,shen2024optimal} with the asymptotic equipartition property~\cite{tomamichel2009fully}.

The main result of this work is to establish the exact exponential rate at which Uhlmann's fidelity to the ideal state decays to zero. This rate is referred to as the strong converse exponent with respect to the fidelity.
The strong converse exponent for randomness extraction at extraction rate $R$ under fidelity is defined as 
\begin{align}
   \label{eq: error exponent introduction}
    E_{\pa}(\rho_{XE},R) := \inf\left\{\limsup_{n\rightarrow \infty}-\frac{1}{n}\log F\left(\mathcal{R}_{h_n}(\rho_{XE}^{\otimes n}),\frac{I_{\mathcal{Z}_n}}{|\mathcal{Z}_n|}\otimes \rho_E^{\otimes n}\right) \Bigg\vert\liminf_{n\rightarrow \infty}\frac{1}{n}\log{|\Z_n|\geq R}\right\}
\end{align}
where the optimization is performed over all sequences of channels $\{ \R_{h_n} \}_n$ that implement the hash functions $h_n : \X^{\times n} \to \Z_n$. Here, $F(\rho,\sigma) = (\text{Tr}[(\rho^\frac{1}{2}\sigma\rho^\frac{1}{2})^\frac{1}{2}])^2$ is the Uhlmann's fidelity.
Instead of infidelity, we can express the error in terms of purified distance, defined as $P(\rho,\sigma)\coloneqq \sqrt{1-F(\rho,\sigma)}$. It is easily verified\footnote{This can be verified using the inequality $\frac{x}{2}\le 1-\sqrt{1-x}\le x$ for $x\in[0,1]$, which implies that $\frac{F(\rho,\sigma)}{2} \le 1 - P(\rho,\sigma) \le F(\rho,\sigma)$. Formally, we thus have
\begin{align}
    E_{\pa}(\rho_{XE},R)=\inf\left\{\limsup_{n\rightarrow \infty}-\frac{1}{n}\log \left(1-P\left(\mathcal{R}_{h_n}(\rho_{XE}^{\otimes n}),\frac{I_{\mathcal{Z}_n}}{|\mathcal{Z}_n|}\otimes \rho_E^{\otimes n}\right)\right) \Bigg\vert\liminf_{n\rightarrow \infty}\frac{1}{n}\log{|\Z_n|\geq R}\right\} .
\end{align}
} 
that $F$ and $1 - P$ are equivalent at the level of exponential decay. We note that we consider the fidelity (or purified distance) to a state that is uniform on $\Z_n$ and in product with the marginal state $\rho_E^{\otimes n}$. This yields a composable error criterion~\cite{portmann22_securityreview}, in contrast to an alternative error criterion used in~\cite{li2024operational} that maximizes the fidelity over all marginal states on $E$.

Our main result is a characterization of this strong converse exponent. 
\begin{theorem}
\label{thm: main error exponent}
Let $\rho_{XE}$ be a classical-quantum state and $R \geq 0$. Then, we have
    \begin{align}
         E_{\pa}(\rho_{XE},R) = \max_{\alpha \in [\frac{1}{2},1]}\frac{1-\alpha}{\alpha} \left( R-\widetilde{H}_{\alpha}^\frac{1-2\alpha}{1-\alpha}(X|E)_\rho \right) \,.
    \end{align}
\end{theorem}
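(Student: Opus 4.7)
The plan is to establish matching achievability and converse bounds on $E_{\pa}(\rho_{XE},R)$, following the two-part template that has become standard for strong converse exponents. Throughout, set $z(\alpha) := (1-2\alpha)/(1-\alpha)$, so that the pair $(\alpha, z(\alpha))$ traces out the relevant slice of the club-sandwiched $(\alpha,z)$-Rényi family as $\alpha$ ranges over $[1/2, 1]$. The endpoint $\alpha = 1/2$ is naturally adapted to the fidelity metric, while $\alpha \to 1$ recovers the von Neumann conditional entropy at which the prefactor $(1-\alpha)/\alpha$ vanishes; this structural observation already explains the functional form of the theorem and indicates why this particular diagonal is the right one to study.

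For the \textbf{achievability} direction (the upper bound on $E_{\pa}$), I would use random $2$-universal hashing. The first step is to prove a one-shot expected-fidelity bound of the form
\begin{align}
\mathbb{E}_h\bigl[1 - F\bigl(\mathcal{R}_h(\rho_{XE}), \tfrac{I_Z}{|Z|}\otimes \rho_E\bigr)\bigr] \;\leq\; 2^{-\frac{1-\alpha}{\alpha}\left(\widetilde{H}_\alpha^{z(\alpha)}(X|E)_\rho - \log|Z|\right)}
\end{align}
for every $\alpha \in [1/2, 1]$, by adapting the classical leftover hashing argument (Hayashi, Tomamichel-Hayashi, Dupuis et al.) to the fidelity metric; the exponent $z(\alpha)$ of the club-sandwiched family should emerge from the Hölder split used to control the cross terms in the CQ fidelity formula $F = \tfrac{1}{|Z|}\bigl(\sum_z \sqrt{q(z)\,F(\tau^z,\rho_E)}\bigr)^2$. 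One then applies the bound to $\rho_{XE}^{\otimes n}$ with $\log|Z_n| = \lceil nR\rceil$, invokes additivity of $\widetilde{H}_\alpha^{z(\alpha)}$ under tensor products, extracts a deterministic hash via Markov's inequality, and optimises over $\alpha$. Since $1 - F$ and $-\log F$ decay at the same exponential rate in the regime of interest, this yields the desired lower bound on $E_{\pa}$.

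For the \textbf{converse} direction, the strategy is to prove the one-shot fidelity lower bound
\begin{align}
-\log F\bigl(\mathcal{R}_h(\rho_{XE}), \tfrac{I_Z}{|Z|}\otimes \rho_E\bigr) \;\leq\; \tfrac{1-\alpha}{\alpha}\bigl(\log|Z| - \widetilde{H}_\alpha^{z(\alpha)}(X|E)_\rho\bigr)
\end{align}
for every $\alpha \in [1/2,1]$ and every hash $h$. The key inputs are data processing for the club-sandwiched divergence $\widetilde{D}_\alpha^{z(\alpha)}$ applied to the pair $(\rho_{XE}, I_X/|X|\otimes \rho_E)$ under $\mathcal{R}_h$, so that the reference state is mapped essentially onto the ideal target, together with a variational or duality relation expressing $-\log F$ as an infimum of affine functionals of $\widetilde{D}_\alpha^{z(\alpha)}$, in the spirit of the Audenaert-Mosonyi-Ogata treatment of hypothesis-testing strong converses. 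Passing to the i.i.d.\ regime by additivity and optimising over $\alpha$ then produces the matching upper bound on $E_{\pa}$.

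The \textbf{main obstacle} I anticipate lies in establishing the one-shot leftover-hashing bound in terms of the precise club-sandwiched entropy $\widetilde{H}_\alpha^{z(\alpha)}(X|E)_\rho$: existing proofs give analogous bounds in terms of Petz-Rényi or sandwiched-Rényi conditional entropies, and neither alone is tight for the fidelity in the quantum setting. The club-sandwiched family is designed to interpolate between these, and the particular exponent $z(\alpha)$ is dictated by the requirement that achievability and converse match at each $\alpha$. Verifying this will likely require the variational characterisation of $\widetilde{D}_\alpha^z$ from Rubboli-Goodarzi-Tomamichel, together with additivity of $\widetilde{H}_\alpha^{z(\alpha)}$ under tensor powers along this diagonal slice, which is an expected but non-trivial property that must be checked directly from the definition.
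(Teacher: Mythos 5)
Your proposal does not reproduce the paper's argument, and as written both halves contain directional errors that leave genuine gaps. In the strong-converse regime $R>H(X|E)_\rho$ the fidelity tends to zero, so your ``achievability'' estimate $\mathbb{E}_h[1-F]\le 2^{-\frac{1-\alpha}{\alpha}(\widetilde{H}_\alpha^{z(\alpha)}(X|E)_\rho-\log|Z|)}$ is vacuous there (with $\log|Z_n|\approx nR> n\widetilde{H}_\alpha^{z(\alpha)}$ the right-hand side exceeds $1$); it is a direct-regime leftover-hash bound and cannot control the strong converse exponent. Moreover, $1-F$ and $-\log F$ are \emph{not} exponentially equivalent in this regime ($1-F\to 1$ while $-\log F$ grows linearly in $n$); the equivalence the paper uses is between $F$ and $1-P$. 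Your ``converse'' bound $-\log F\le \frac{1-\alpha}{\alpha}(\log|Z|-\widetilde{H}_\alpha^{z(\alpha)})$ claimed for \emph{every} hash is false: a constant hash gives $F=1/|Z|$, i.e.\ $-\log F=\log|Z|$, violating it whenever $\widetilde{H}^{\downarrow}_{1/2}(X|E)_\rho>0$; and even if it held it would be an upper bound on $-\log F$, i.e.\ an achievability-type statement, whereas the converse $E_{\pa}\ge\max_\alpha(\cdots)$ needs the \emph{reverse} one-shot inequality for every hash. That reverse inequality is exactly the paper's Lemma~\ref{lem: one-shot converse}, proved by a H\"older split $\rho_{XE}^{1/2}(\pi_X\otimes\rho_E)^{1/2}(\pi_X\otimes\sigma_E)^{\frac{1-2\alpha}{2\alpha}}(\pi_X\otimes\sigma_E)^{-\frac{1-2\alpha}{2\alpha}}$ that introduces the tilting state $\sigma_E$ (this is precisely where the optimization inside the club-sandwiched entropy originates), combined with a coarse-graining lemma (Lemma~\ref{lem: Hashing}, via Rotfel'd's trace inequality) showing hashing does not increase $\widetilde{H}_\alpha^\lambda$, and additivity. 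Your alternative of ``data processing applied to $(\rho_{XE}, I_X/|X|\otimes\rho_E)$ under $\mathcal{R}_h$'' does not supply this: $\mathcal{R}_h$ maps $I_X\otimes\rho_E$ to $\sum_z|h^{-1}(z)|\,\ketbra{z}{z}\otimes\rho_E$, which is a multiple of $I_Z\otimes\rho_E$ only for balanced hashes, and monotonicity in $\alpha$ alone does not connect the fidelity (an $\alpha=1/2$ quantity) to the $\alpha$-dependent exponent.

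The genuinely hard half of the theorem --- exhibiting a sequence of hash functions at rate $R>H(X|E)_\rho$ whose fidelity decays \emph{no faster} than $2^{-n\max_\alpha\frac{1-\alpha}{\alpha}(R-\widetilde{H}_\alpha^{(1-2\alpha)/(1-\alpha)})}$, which gives $E_{\pa}\le\max_\alpha(\cdots)$ --- is not addressed by anything in your plan, since random-hashing error bounds say nothing once the rate exceeds the entropy. The paper handles it via the Mosonyi--Ogawa route: a variational formula for the Log-Euclidean conditional entropy $H_{\alpha,\infty}^{\lambda}$ in terms of Umegaki relative entropies (Lemma~\ref{lem: variational LE}), the identity $G(\rho_{XE},R)=\min_{\tau_{XE}}\{D(\tau_E\|\rho_E)+D(\tau_{XE}\|\rho_{XE})+|R-H(X|E)_\tau|^+\}$, Hayashi's achievability applied to the tilted state $\tau_{XE}$ (with the change-of-measure costs $D(\tau_{XE}\|\rho_{XE})$ and $D(\tau_E\|\rho_E)$), and finally a double-pinching/blocking argument with $\mathcal{P}_{\rho_E^{\otimes m}}\circ\mathcal{P}_{\omega_{E^m}}$ and the universal state to convert the LE quantities into the club-sandwiched ones. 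Without a substitute for this machinery (or a proof of a fidelity \emph{lower} bound for specific hash families at supra-entropy rates), the achievability direction remains unproven, so the proposal as it stands does not establish the theorem.
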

The result is expressed in terms of a family of club-sandwiched conditional entropies. For \(\alpha<1\) and \(\lambda< 0\), they are defined as
\begin{equation}
\widetilde{H}_\alpha^{\lambda}(A|E)_{\rho} := \min_{\sigma_E}\frac{1}{1-\alpha}\log{\Tr[\left(\rho_{AE}^\frac{1}{2}\rho_E^{\frac{1-\lambda}{2}\frac{1-\alpha}{\alpha}}\sigma_E^{\lambda\frac{1-\alpha}{\alpha}}\rho_E^{\frac{1-\lambda}{2}\frac{1-\alpha}{\alpha}}\rho_{AE}^\frac{1}{2}\right)^\alpha]}
\end{equation}
Here, the infimum is taken over all density operators $\sigma_E$ on system $E$. These entropies are part of a three-parameter family of entropies recently introduced in~\cite{rubboli2024quantum} by the present authors.
The values at $\alpha=1/2$ and $\alpha=1$ are defined via pointwise limits, and in particular
\begin{align}
    \lim_{\alpha\to 1^-}\widetilde{H}_{\alpha}^{\frac{1-2\alpha}{1-\alpha}}(X|E)_{\rho}
= H(X|E)_{\rho} \,,
\end{align}
that is, the quantity converges to the von Neumann conditional entropy (see Lemma~\ref{lem: limit of lambda conditional entropy}).

 This result provides an operational interpretation for the club-sandwiched conditional entropy for a non-trivial range of parameters. In particular, it demonstrates that this quantity has a meaningful role beyond the traditional conditional Rényi entropies that are obtained in the special cases $\lambda = 0$ and $\lambda = 1$, which correspond respectively to the arrow-down and arrow-up sandwiched conditional entropies,
\begin{align}
&\widetilde{H}^\downarrow_\alpha(A|E)_{\rho} = -\widetilde{D}_\alpha(\rho_{AE}\|I_A \otimes \rho_E)\,,\quad \textrm{and} \quad \widetilde{H}^\uparrow_\alpha(A|E)_{\rho}= \sup_{\sigma_E} -\widetilde{D}_\alpha(\rho_{AE}\|I_A \otimes \sigma_E) \,. 
\end{align}
Here, $\tilde{D}_{\alpha}(\rho \| \sigma) :=
\frac{1}{\alpha-1}\log{\Tr\big[\big(\sigma^{\frac{1-\alpha}{2\alpha}}\rho \sigma^{\frac{1-\alpha}{2\alpha}}\big)^{\alpha}\big]}$ is the sandwiched R\'enyi divergence~\cite{lennert13_renyi, Wilde3}.

\smallskip

In Figure~\ref{fig: error exponent}, we illustrate the strong converse exponent as a function of the rate for a fixed state. 
For rates below the von Neumann conditional entropy, the strong converse exponent vanishes. 
Indeed, in this regime, the fidelity converges exponentially to one, as discussed in the introduction, 
and this behavior is faithfully captured by the expression derived in Theorem~\ref{thm: main error exponent}.\footnote{Indeed, since for all $\alpha \in [1/2,1]$, it holds that (cf.~Lemma~\ref{lem: lower bound lambda})
\begin{align}
    \widetilde{H}_{\alpha}^{\frac{1-2\alpha}{1-\alpha}}(X|E)_\rho 
\geq H(X|E)_\rho,
\end{align}
the objective function inside the maximization is non-positive in this regime. Consequently, the maximum is attained at $\alpha = 1$, where the expression evaluates to zero.} On the other hand, we see that the strong converse exponent is linear in the rate $R$ above a certain critical value. This phenomenon is further discussed in Section~\ref{sec:critical rate}.

\begin{figure}[htbp]
\centering
\begin{tikzpicture}

    \node at (0,0) {\includegraphics[width=.6\textwidth]{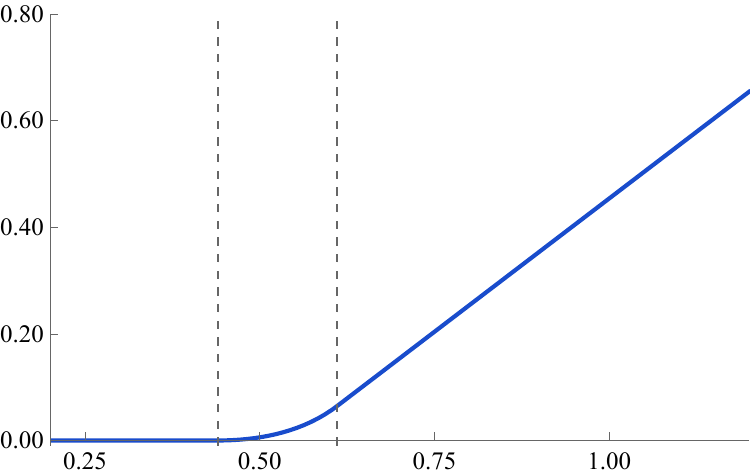}};
    
   \node at (5.5,-2.8) {\Large$R$};
   \node at (-5.7,3.2) {\Large$E_{\pa}$};
   \node at (-2.25,-3.2) {\large$H$};
    \node at (-0.5,-3.2) {\Large$R_c$};

\end{tikzpicture}
\caption{The figure shows the strong converse exponent $E_{\pa}(\rho_{XE},R)$ derived in Theorem~\ref{thm: main error exponent} for the state $\rho_{XE} = \sum_{xy}0.1\ketbra{00}{00}_{XE}+0.1\ketbra{01}{01}_{XE}+0.7\ketbra{10}{10}_{XE}+0.1\ketbra{11}{11}_{XE}$ as a function of the rate $R$. The strong converse exponent is zero when the rate is below the von Neumann conditional entropy $H(X|E)$. Indeed, in this case, the fidelity goes exponentially to one. When the rate is above the von Neumann conditional entropy $H(X|E)$, the strong converse exponent is non-zero since the fidelity decreases exponentially to zero. In addition, the strong converse exponent is linear in $R$ for the rate above a critical rate. Indeed, in this case, the optimization in Theorem~\ref{thm: main error exponent} is achieved for $\alpha=1/2$, in which case the expression becomes linear.}
    \label{fig: error exponent}
\end{figure}

\smallskip

Finally, we remark upon a conceptually important feature of our characterization of the strong converse exponent. While the randomness-extraction criterion entering the definition of the exponent in~\eqref{eq: error exponent introduction} depends only on the marginal state $\rho_E$, the resulting expression in Theorem~\ref{thm: main error exponent} is formulated in terms of conditional entropies involving an optimization over auxiliary states $\sigma_E$.
The origin of this optimization becomes clear upon inspecting the proof of the converse bound in Lemma~\ref{lem: one-shot converse}. There, obtaining a tight exponent necessitates a careful application of Hölder’s inequality, which naturally introduces an auxiliary state 
$\sigma_E$ that must be optimized over. This auxiliary state can be interpreted as a ``tilting" of the original marginal $\rho_E$, allowing one to derive the tightest characterization of the exponent. We view this tilting mechanism as a central technical contribution of the present work. Beyond enabling a tight characterization of the strong converse exponent, it provides a flexible analytic tool that may prove useful in the study of strong converse phenomena and related exponent problems in a variety of information-theoretic settings.

\subsection{Prior works}
The task of randomness extraction (or privacy amplification) has been widely studied. Several alternative definitions of the error criterion have appeared in the literature, leading to different strong converse exponents. The criterion adopted in the present work is the fidelity with the uniform state tensored with the marginal of the side information, i.e.,
\begin{align}
    F\left(\R_h(\rho_{XE}),\frac{I_\Z}{|I_Z|}\otimes \rho_E\right) \,.
\end{align}
This problem has been investigated in a series of papers~\cite{li2025two,berta2025strong,salzmann2022total,li2022tight}.
In~\cite{li2025two,berta2025strong}, the authors establish the result of the present work for the case of classical side information. In addition, in~\cite{li2025two}, the authors derive a more general statement for classical Rényi divergences as a measure of distinguishability, which includes the fidelity as the special case \(\alpha = 1/2\). In~\cite[Theorem 8]{berta2025strong}, the result is presented with an additional optimization over probabilities. However, the optimization admits a closed form.\footnote{Indeed, we have that
\begin{align}
&\inf_{t\in\mathcal Q(\mathcal Y)}
\Bigg\{
2D(t\|p)+\frac{1-\alpha}{\alpha}\Big(r-\mathbb{E}_{y\sim t} H_\alpha(p(\cdot|y))\Big)
\Bigg\} =  \frac{1-\alpha}{\alpha} r+\inf_{t\in\mathcal Q(\mathcal Y)}\Big\{2D(t\|q)-2\log Z\Big\} \,,
\end{align}
where we defined normalizing constant $Z$ and the tilted distribution $q$ as
\begin{equation}
Z:=\sum_x p(y)\exp\!\Big(\frac{1-\alpha}{2\alpha}H_\alpha(\cdot|y)\Big),
\qquad
q(y):=\frac{p(y)\exp\!\big(\frac{1-\alpha}{2\alpha}H_\alpha(\cdot|y)\big)}{Z}.
\end{equation}
The optimization is minimized at  $t^\star=q$. Hence, we obtain the closed form
\begin{equation}
\sup_{\alpha \in [\frac{1}{2},1]}\inf_{t\in\mathcal Q(\mathcal Y)}
\Bigg\{
2D(t\|p)+\frac{1-\alpha}{\alpha}\Big(r-E_{y\sim t} H_\alpha(p(\cdot|y))\Big)
\Bigg\}
=
\frac{1-\alpha}{\alpha}\,r
-2\log\sum_y p(y)\left(\sum_x p(x|y)^\alpha\right)^\frac{1}{2\alpha}.
\end{equation}
Hence, the result reduces to the one of~\cite{li2025two}, which is instead formulated analogously to our result.}

In the case of quantum side information, a lower bound for the strong converse with respect to the purified distance was derived in~\cite[Theorem 3]{salzmann2022total}, given by
\begin{align}
    \sup_{\alpha \in (0,1)}(1-\alpha) \left( R-\widebar{H}_{\alpha}^\downarrow(X|E)_\rho \right) \,.
\end{align}
Our main result in Theorem~\ref{thm: main error exponent} provides a tighter lower bound together with a matching upper bound.

A closely related criterion was studied in~\cite{li2024operational,leditzky2016strong}, where the authors consider the maximum fidelity with the ideal uniform state tensored with an arbitrary side-information state
\begin{align}
    \max_{\sigma_E}F\left(\R_h(\rho_{XE}),\frac{I_\Z}{|I_Z|}\otimes \sigma_E\right)
\end{align}
In particular, in~\cite{li2024operational}, the authors derive the following form for the strong converse exponent
\begin{align}
          \max_{\alpha \in [\frac{1}{2},1]}\frac{1-\alpha}{\alpha} \left( R-\widetilde{H}_{\alpha}^\uparrow(X|E)_\rho \right) \,.
    \end{align}
which includes the sandwiched conditional entropy arrow up.

Under the above criteria, the optimal deterministic function can in general depend on the underlying state \(\rho_{XE}\). To overcome this issue, privacy amplification is often implemented using a randomly chosen hash function, specified by a public seed. 
The purpose of introducing such a seed is to render the protocol universal, i.e., independent of the particular state \(\rho_{XE}\) from which secure randomness is to be extracted. 
In this case, the legitimate parties sample a seed $H$ according to a distribution $p$ over a family of hash functions $\{\mathcal R_h: h\in H\}$, and apply the corresponding map $\mathcal R_h$. Notably, the parties reveal $H$ publicly, and hence the eavesdropper has access to the seed as well.

To formalize this, let $\rho_{XE}$ be the classical--quantum state describing the raw key $X$ and the adversary's quantum side information $E$. For each $h\in H$, let $\mathcal R_h$ denote the privacy-amplification map producing the key register $Z$ from $X$. If the seed is sampled as $H\sim p$, then the joint state after privacy amplification, including the public seed register, is
\begin{align}
\rho_{ZEH}
 \;:=\;
\sum_{h\in  H} p(h)\, \mathcal R_h(\rho_{XE}) \otimes |h\rangle\!\langle h|_H .
\end{align}
The corresponding ideal state is
\begin{align}
\omega_{ZEH}
\;:=\;
\frac{I_Z}{|Z|}\otimes \rho_E \otimes \rho_H,
\qquad
\rho_H := \sum_{h\in H} p(h)\,|h\rangle\!\langle h|_H,
\end{align}
which represents a uniform key independent of both $E$ and the public seed $H$.
The security criterion is then naturally expressed as
\begin{align}
    F(\rho_{ZEH},\omega_{ZEH})\,.
\end{align}
Since the seed register $H$ is classical, the square-root fidelity is exactly equivalent to the average square-root fidelity over the seed
\begin{align}
F(\rho_{ZEH},\omega_{ZEH})^\frac{1}{2}
\;=\;
\sum_{h\in H} p(h)\,
F\left(\mathcal R_h(\rho_{XE}),\frac{I_Z}{|Z|}\otimes \rho_E\right)^\frac{1}{2}
\;=\;
\,\mathbb E_{h\sim p}F\left(\mathcal R_h(\rho_{XE}),\frac{I_Z}{|Z|}\otimes \rho_E\right)^\frac{1}{2}.
\end{align}
Therefore, in this formulation of privacy amplification, the aim is to maximize the quantity
\begin{align}
    E_{h\sim p}F\left(\R_h(\rho_{XE}),\frac{I_\Z}{|I_Z|}\otimes \rho_E\right)^\frac{1}{2} \,.
\end{align}
Here, the distribution $p$ may also be optimized to yield the best value of the fidelity, and it is typically chosen at random. This case was studied in~\cite{hayashi2016equivocations}, where explicit expressions were derived for several error exponents, including the strong converse exponent in the classical setting for rates above the critical rate.\footnote{Note that Theorem 1 in~\cite{hayashi2016equivocations} contains a gap for rates below the critical rate, as discussed in~\cite{10403868}.} In addition, they also derived expressions for other R\'enyi relative entropies as meatric of distinguishability.

In the literature, the trace distance is also frequently used as a measure of distinguishability~\cite{shen2022strong,dupuis2023privacy,renner2005universally}. The aim is to minimize the quantity
\begin{align}
\frac12\big\|\rho_{ZEH}-\omega_{ZEH}\big\|_1
= \frac12\,\mathbb{E}_{h\sim p}\!\left\|
\mathcal{R}_h(\rho_{XE})-\frac{I_Z}{|Z|}\otimes \rho_E
\right\|_1 \, .
\end{align}
The trace distance is often chosen because, by the Helstrom--Holevo theorem, it admits an operational interpretation as the optimal success probability in binary state discrimination.

A lower bound for the strong converse for deterministic hash functions with respect to the trace distance was derived in~\cite[Theorem 3]{salzmann2022total}, given by
\begin{align}
    \sup_{\alpha \in [0,1]}\frac{1-\alpha}{2} \left( R-\widebar{H}_{\alpha}^\downarrow(X|E)_\rho \right) \,.
\end{align}
where $\widebar{H}_{\alpha}^\downarrow(X|E)_\rho = -\widebar{D}_\alpha(\rho_{XE}\|I_X \otimes \rho_E)$ and $\widebar{D}_\alpha(\rho\|\sigma)= \frac{1}{\alpha-1}\log{\Tr[\rho^\alpha \sigma^{1-\alpha}]}$ is the Petz relative entropy~\cite{petz86_book}.
Finally, we note that, for the specific class of strongly 
2-universal hash functions—which constitute a particular family of seeded hash functions—a lower bound on the strong converse exponent was derived in~\cite[Theorem~1]{shen2022strong}. In that setting, the bound is given by
\begin{align}
          \sup_{\alpha \in (1/2,1)}\frac{1-\alpha}{\alpha} \left( R-\widebar{H}_{2-1/\alpha}^\downarrow(X|E)_\rho \right) = \sup_{\alpha \in (0,1)}(1-\alpha) \left( R-\widebar{H}_{\alpha}^\downarrow(X|E)_\rho \right) \,.
    \end{align}

\section{Notation}
\label{sec: notation}
In this section, we introduce the notation used throughout this manuscript.
The Schatten $p$-norm is defined as $\|A\|_p = \big(\text{Tr}[(AA^\dagger)^\frac{p}{2} ]\big)^\frac{1}{p}$ for $p \geq 1$. The Uhlmann's fidelity between two quantum states is defined as
\begin{align}
    F(\rho,\sigma) = (\text{Tr}[(\rho^\frac{1}{2}\sigma\rho^\frac{1}{2})^\frac{1}{2}])^2\,.
\end{align}
The Umegaki relative entropy is defined as
\begin{align}
    D(\rho\|\sigma) = \Tr[\rho(\log{\rho}-\log{\sigma})] \,,
\end{align}
where one sets $D(\rho\|\sigma)=\infty$ if $\supp(\rho) \nsubseteq \supp(\sigma)$.
The sandwiched Rényi divergence~\cite{lennert13_renyi,Wilde3} and the Petz-Rényi divergence~\cite{petz86_book,tomamichel16_book}, are given by
\begin{align}
&\widetilde{D}_{\alpha}(\rho \| \sigma) :=
\frac{1}{\alpha-1}\log{\Tr\big[\big(\sigma^{\frac{1-\alpha}{2\alpha}}\rho \sigma^{\frac{1-\alpha}{2\alpha}}\big)^{\alpha}\big]} \\
&\widebar{D}_{\alpha}(\rho \| \sigma) :=
\frac{1}{\alpha-1} \log{\Tr\big[\big(\rho^\alpha \sigma^{1-\alpha}\big)\big]} \,,
\end{align}
respectively. The Log-Euclidean relative entropy is given by~\cite{hiai1993golden,mosonyi2017strong,audenaert13_alphaz}
\begin{align}
    &D_{\alpha,\infty}(\rho \| \sigma) :=
\frac{1}{\alpha-1} \log{\Tr\big[P\exp\left(P(\alpha\log \rho +(1-\alpha) \log\sigma)P\right)\big]} \,,
\end{align}
where $P$ is the projector into the intersection of the supports of $\rho$ and $\sigma$. The Log-Euclidean relative entropy admits the following useful variational characterization
\begin{align}
\label{eq: variational form LE}
    D_{\alpha,\infty}(\rho\|\sigma) = \frac{1}{1-\alpha} \min_{\tau}\{\alpha D(\tau\|\rho)+(1-\alpha)D(\tau\|\sigma)\} \,,
\end{align}
where the minimum is taken over all states $\tau$.

\bigskip

The von Neumann conditional entropy is
\begin{align}
    H(A|E)_{\rho} = -D(\rho_{AE}\|I_A \otimes \rho_E)
\end{align}
Several variants of Rényi conditional entropy have been introduced in the literature.
Among the most important are the arrow-down and arrow-up sandwiched Rényi conditional entropies~\cite{lennert13_renyi,beigi13_sandwiched,Wilde3}, as well as the Petz Rényi conditional entropies~\cite{petz86_book,tomamichel08_aep}. These quantities are defined as
\begin{align}
\widetilde{H}^\downarrow_\alpha(A|E)_\rho
&:= -\widetilde{D}_\alpha\!\left(\rho_{AE}\,\middle\|\, I_A \otimes \rho_E\right), \qquad \widetilde{H}^\uparrow_\alpha(A|E)_\rho
:= \sup_{\sigma_E}\,
-\widetilde{D}_\alpha\!\left(\rho_{AE}\,\middle\|\, I_A \otimes \sigma_E\right), \\
\widebar{H}^\downarrow_\alpha(A|E)_\rho
&:= -\widebar{D}_\alpha\!\left(\rho_{AE}\,\middle\|\, I_A \otimes \rho_E\right), \qquad \widebar{H}^\uparrow_\alpha(A|E)_\rho
:= \sup_{\sigma_E}\,
-\widebar{D}_\alpha\!\left(\rho_{AE}\,\middle\|\, I_A \otimes \sigma_E\right) \,.
\end{align}

More recently, in~\cite{rubboli2024quantum}, the present authors introduced a three-parameter family of conditional entropies, which we refer to as $(\alpha,z,\lambda)$-conditional entropies. 
In this work, we are particularly interested in the special case $z=\alpha$, which gives rise to the
club-sandwiched conditional entropy. For \(\alpha\in(0,1)\) and \(\lambda<0\),
this quantity is defined as follows~\cite{rubboli2024quantum}:
\begin{equation}
\widetilde{H}_\alpha^{\lambda}(A|E)_\rho=\min_{\sigma_E}\frac{1}{1-\alpha}\log{\Tr[\left(\rho_{AE}^\frac{1}{2}I_A \otimes \Big(\rho_E^{\frac{1-\lambda}{2}\frac{1-\alpha}{\alpha}}\sigma_E^{\lambda\frac{1-\alpha}{\alpha}}\rho_E^{\frac{1-\lambda}{2}\frac{1-\alpha}{\alpha}}\Big)\rho_{AE}^\frac{1}{2}\right)^\alpha]} \,.
\end{equation} 
Here, the infimum is taken over all density operators $\sigma_E$ on system $E$.
We adopt the convention that the objective function is set to 
$\infty$ whenever $\supp(\rho_{AE})\not \subseteq \supp(I_A \otimes \sigma_E)$.
With this convention, negative powers of positive operators are understood in the sense of generalized inverses,
i.e., they are taken only on the support of the corresponding operator. This definition coincides with the limit of the objective function evaluated on full-rank states (see Appendix~\ref{app: non full-rank} for more details). This quantity satisfies the data-processing inequalities for conditional entropies in the range of parameters $\lambda \geq 1-\alpha/(1-\alpha)$~\cite[Theorem~5.1]{rubboli2024quantum}. Moreover, it is additive for tensor products of states~\cite[Theorem~4.1]{rubboli2024quantum}.

In the following, we adopt the standard convention of omitting the explicit identity $I_A$ whenever clear from context. 
We are particularly interested in the boundary of the data-processing region, namely in the case $\lambda = 1-\alpha/(1-\alpha)= (1-2\alpha)/(1-\alpha)$
\begin{equation}
\widetilde{H}_\alpha^{\frac{1-2\alpha}{1-\alpha}}(A|E)_\rho=\min_{\sigma_E}\frac{1}{1-\alpha}\log{\Tr[\left(\rho_{AE}^\frac{1}{2}\rho_E^{\frac{1}{2}}\sigma_E^{\frac{1-2\alpha}{\alpha}}\rho_E^{\frac{1}{2}}\rho_{AE}^\frac{1}{2}\right)^\alpha]} \,.
\end{equation}
This quantity admits the following limits (see Appendix~\ref{app: lemmas})
\begin{align}
    \lim_{\alpha \rightarrow \frac{1}{2}^+} \widetilde{H}_\alpha^{\frac{1-2\alpha}{1-\alpha}}(A|E)_\rho = \widetilde{H}_{\frac{1}{2}}^{\downarrow}(A|E)_\rho \,, \qquad \lim_{\alpha \rightarrow 1^-} \widetilde{H}_\alpha^{\frac{1-2\alpha}{1-\alpha}}(A|E)_\rho = H(A|E)_\rho \,.
\end{align}
Notably, the limit $\alpha\rightarrow 1^-$ it converges to the von Neumann conditional entropy. Below, we define the quantity $\widetilde{H}_\alpha^{\frac{1-2\alpha}{1-\alpha}}$ at the points $\alpha=1/2,1$ via the corresponding pointwise limits.

We will also need the $(\alpha,\lambda)$-Log-Euclidean conditional entropy (or LE conditional entropy for short), which is defined for $\alpha\in (0,1)$ and $\lambda < 0$ as
\begin{align}
\label{def: lambda LE}
&H_{\alpha,\infty}^{\lambda}(A|E)_{\rho} =\min_{\sigma_E}\frac{1}{1-\alpha}\log \Tr [R_{AE}\, \exp \Big(\alpha\log{(\rho_{AE})}+(1-\alpha)R_{AE}\big((1-\lambda) \log{(\rho_E}) + \lambda \log{(\sigma_E})\big)R_{AE} \Big)]\,,
\end{align}
where $R_{AE}$ is the projector onto the support of $\rho_{AE}$. Here, we adopt the convention that the objective function is set to $+\infty$ whenever $\supp(\rho_{AE}) \not\subseteq \supp(I_A \otimes \sigma_E)$. With this convention, the logarithms are evaluated on the support of the states.  This definition coincides with the limit of the objective function evaluated on full-rank states (see Appendix~\ref{app: non full-rank} for more details).

\section{Critical rate}
\label{sec:critical rate}

In this section, we show that the expression for the strong converse exponent of privacy amplification given in Theorem~\ref{thm: main error exponent} simplifies when the extraction rate exceeds a critical value and becomes linear in the extraction rate $R$. Specifically, for rates above a critical rate $R_c$, the maximization over the parameter $\alpha \in [1/2,1]$ attains its maximum at the boundary point $\alpha=1/2$. In this case, the exponent reduces to
\begin{align}
    E_{\pa}(\rho_{XE},R) = R-\widetilde{H}_{\frac{1}{2}}^\downarrow(A|E)\,,
\end{align}
and thus depends only on the sandwiched conditional entropy $\widetilde{H}^\downarrow_{1/2}$, a quantity that has been extensively studied in the literature.
The above expression coincides with the result of~\cite[Theorem~1]{hayashi2016equivocations} concerning the strong converse in the classical setting and in the presence of an additional seed for rates above the critical rate. 

\medskip
To show that, for rates $R$ larger than a critical value $R_c$, the maximization over $\alpha \in [1/2,1]$ in Theorem~\ref{thm: main error exponent} is attained at $\alpha=\tfrac12$, we exploit a standard property of concave functions. In particular, if a concave function is maximized over an interval $[a,b]$ and its left derivative at the endpoint $b$ is nonnegative, then the maximum is attained at $b$. It is useful to perform the change of variable $t=(1-\alpha)/\alpha$. For $\alpha \in [1/2,1]$, it holds that $t \in [0,1]$. Moreover, we define the function
\begin{align}
    G(t) = tR-t \widetilde{H}^{\frac{t-1}{t}}_{\frac{1}{1+t}}(X|E)_\rho \,.
\end{align}
The expression for the strong converse exponent in Theorem~\ref{thm: main error exponent} can be rewritten as 
\begin{align}
     E_{\pa}(\rho_{XE},R) = \max_{t \in [0,1]} G(t) \,.
\end{align}
Next, we show that $G(t)$ is a concave function. To this purpose, we need to show that $t \widetilde{H}^{\frac{t-1}{t}}_{\frac{1}{1+t}}(A|E)_\rho$ is a convex function of $t$.
\begin{lemma}
    Let $t \in [0,1]$ and $\rho_{XE}$ be a quantum state. Then, the function $t \rightarrow t \widetilde{H}^{\frac{t-1}{t}}_{\frac{1}{1+t}}(X|E)_\rho $ is convex. 
\end{lemma}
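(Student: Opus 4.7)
My plan is to prove convexity of $g(t):=t\,\widetilde{H}_{1/(1+t)}^{(t-1)/t}(X|E)_\rho$ by reformulating it in a Schatten-quasinorm form and applying a complex-interpolation argument.

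First, substituting $\alpha=1/(1+t)$ in the definition of $\widetilde{H}_\alpha^\lambda$ and applying the cyclic-trace identity $\Tr[(MAM^*)^p]=\Tr[(A^{1/2}M^*MA^{1/2})^p]$ (valid for $A\succeq 0$) with $M=\rho_{XE}^{1/2}(I_X\otimes\rho_E^{1/2})$ and $A=I_X\otimes\sigma_E^{t-1}$ yields
\[
g(t)=\min_{\sigma_E}\,(1+t)\log\Tr\!\bigl[\bigl(\sigma_E^{(t-1)/2}\,B\,\sigma_E^{(t-1)/2}\bigr)^{1/(1+t)}\bigr],\quad B:=\rho_E^{1/2}\rho_{XE}\rho_E^{1/2}.
\]
The inner expression equals $\log\|\sigma_E^{(t-1)/2}B\sigma_E^{(t-1)/2}\|_{1/(1+t)}$, the logarithm of a Schatten $1/(1+t)$-quasinorm. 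Notably, $1/p(t)=1+t$ is affine in $t$, which is the parametrization compatible with complex interpolation.

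The key step is to construct, for any $t_0,t_1\in[0,1]$ with optimizers $\sigma_0,\sigma_1$, a bounded holomorphic operator-valued function $F(z)$ on the strip $\{z\in\mathbb{C}:0\le\Re z\le 1\}$, together with an interpolating family of density operators $\{\sigma_\theta\}_{\theta\in[0,1]}$ (for instance derived from a suitably rescaled operator geometric mean $\sigma_0\#_\theta\sigma_1$), such that: (i) $F(\theta)$ coincides (up to unitary conjugations that leave the Schatten quasinorm invariant) with $\sigma_\theta^{(t_\theta-1)/2}B\sigma_\theta^{(t_\theta-1)/2}$ for $t_\theta:=(1-\theta)t_0+\theta t_1$, and (ii) the boundary Schatten-quasinorms $\|F(iy)\|_{1/(1+t_0)}$ and $\|F(1+iy)\|_{1/(1+t_1)}$ are bounded by $\exp(g(t_0))$ and $\exp(g(t_1))$ uniformly in $y\in\mathbb{R}$. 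The Stein--Hirschman three-line theorem for Schatten quasinorms, applied with $1/p_\theta=(1-\theta)(1+t_0)+\theta(1+t_1)=1+t_\theta$, then yields
\[
\log\|F(\theta)\|_{1/(1+t_\theta)}\le(1-\theta)g(t_0)+\theta g(t_1),
\]
and since the left-hand side upper-bounds $g(t_\theta)$ via $\sigma_\theta$ being a feasible density operator in the minimization defining $g(t_\theta)$, convexity follows.

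The main obstacle lies in designing $F(z)$ so that (i) and (ii) hold simultaneously: the boundary condition requires the imaginary-exponent parts $\sigma_k^{iy c}$ appearing on the left and right of $B$ to produce unitary factors whose contributions cancel in the Schatten quasinorm, while the interior evaluation requires a symmetric placement that recovers the real positive operator $\sigma_\theta^{(t_\theta-1)/2}B\sigma_\theta^{(t_\theta-1)/2}$; a naive choice $F(z):=\sigma_0^{(1-z)(t_0-1)/2}\sigma_1^{z(t_1-1)/2}B\sigma_1^{z(t_1-1)/2}\sigma_0^{(1-z)(t_0-1)/2}$ does not achieve both due to the non-commutativity of $\sigma_0$ and $\sigma_1$, and a symmetrized choice based on the operator geometric mean together with careful bookkeeping of normalization is required. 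An alternative and potentially cleaner route is to invoke the variational representation of the $(\alpha,z,\lambda)$-family of conditional entropies from~\cite{rubboli2024quantum} to express $g(t)$ directly as a supremum of functions affine in $t$, from which convexity is immediate by Legendre duality; this bypasses the complex-analytic complications at the cost of requiring the explicit dual representation.
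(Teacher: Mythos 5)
You have correctly identified the right family of tools (a three-line/complex-interpolation argument in a Schatten-norm reformulation with the affine parametrization $1/p(t)=1+t$), but your proof has a genuine gap exactly at the point you flag yourself: the construction of an admissible $F(z)$ when the optimizing states $\sigma_0,\sigma_1$ at the two endpoints $t_0,t_1$ differ. The naive choice fails because the unitary factors $\sigma_1^{iy(\cdot)}$ sit \emph{between} the $\sigma_0$-powers and $B$, so they cannot be removed by unitary invariance of the (quasi)norm, and your proposed repair via an operator geometric mean $\sigma_0\#_\theta\sigma_1$ is not carried out — there is no evident way to make its boundary values collapse to the endpoint expressions involving $\sigma_0$ and $\sigma_1$ separately. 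Since the pointwise minimum over $\sigma_E$ of convex functions need not be convex, this step is not a technicality but the heart of the matter, and neither it nor the alternative route you sketch (a representation of $g(t)$ as a supremum of functions affine in $t$, which does not follow from any variational formula stated in~\cite{rubboli2024quantum}) is substantiated. A secondary issue: with your reformulation the Schatten exponent is $1/(1+t)\in[1/2,1]$, a quasinorm, while the three-line theorem of Lemma~\ref{Hadamard} requires $p_0,p_1\ge 1$; this is repairable by passing to the amplitude form $\big\|\rho_{XE}^{1/2}\rho_E^{1/2}\sigma_E^{(t-1)/2}\big\|_{2/(1+t)}$ with exponent in $[1,2]$, but as written the invocation of a ``Stein--Hirschman theorem for Schatten quasinorms'' is not justified.

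The paper's proof resolves precisely the obstruction you ran into by a different preliminary step: it first invokes the asymptotic characterization of~\cite[Proposition~8.1]{rubboli2024quantum},
\begin{align}
    t\,\widetilde{H}^{\frac{t-1}{t}}_{\frac{1}{1+t}}(X|E)_\rho
    = \lim_{n\to\infty}\frac{2}{n}\log\Big\|\rho_{X^nE^n}^{\frac12}\rho_{E^n}^{\frac12}\,\omega_{E^n}^{\frac12(t-1)}\Big\|_{\frac{2}{1+t}},
\end{align}
in which the optimization over $\sigma_E$ has been replaced by the single universal permutation-invariant state $\omega_{E^n}$. Because the \emph{same} operator $\omega_{E^n}$ appears at both endpoints (only its exponent varies affinely in $z$), the naive interpolating function $F(z)=\rho_{X^nE^n}^{1/2}\rho_{E^n}^{1/2}\omega_{E^n}^{\frac12((1-z)t_1+zt_2-1)}$ works: the imaginary powers $\omega_{E^n}^{iy(\cdot)}$ are unitaries multiplying on the right and drop out of the unitarily invariant norm, the exponents $2/(1+t)$ lie in $[1,2]$ so Lemma~\ref{Hadamard} applies, and convexity of the prelimit expressions passes to the limit $n\to\infty$. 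If you want to salvage your one-shot approach, the lesson is that the $\sigma_E$-optimization must be eliminated (here via the universal state and regularization) before interpolating, rather than interpolated through.
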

\begin{proof}
We use the asymptotic characterization provided in~\cite[Proposition 8.1]{rubboli2024quantum}
\begin{align}
    t\widetilde{H}^{\frac{t-1}{t}}_{\frac{1}{1+t}}(X|E)_\rho = \lim_{n\rightarrow\infty}\frac{2}{n}\log{\Big\|\rho_{A^n E^n}^\frac{1}{2}\rho_{E^n}^\frac{1}{2}\omega_{E^n}^{\frac{1}{2}(t-1)}}\Big\|_{\frac{2}{t+1}}\,.
\end{align}
Here, $\rho_{A^n E^n}=\rho_{AE}^{\otimes n}$ and $\rho_{E^n}=\rho_E^{\otimes n}$ and $\omega_{E^{n}}$ is the universal state described in Lemma~\ref{lem: universal state}.
Next, we invoke the complex interpolation result of Lemma~\ref{Hadamard}. For any $\theta \in (0,1)$, we choose the parameters
\begin{align}
    p_\theta = \frac{2}{1+\theta t_1+(1-\theta) t_2} \,, \quad p_0=\frac{2}{1+t_1} \,, \quad  p_1=\frac{2}{1+t_2}
\end{align}
and function 
\begin{align}
    F(z) = \rho_{A^nE^n}^\frac{1}{2}\rho_{E^n}^\frac{1}{2}\omega_{E^n}^{\frac{1}{2}((1-z) t_1+zt_2-1)}
\end{align}
We note that it holds the required relationship 
\begin{align}
    \frac{1}{p_\theta} = \frac{1-\theta}{p_0}+\frac{\theta}{p_1} \,.
\end{align}
In addition, since $t_1,t_2 \in [0,1]$, then $p_0,p_1 \in [1,2]$ and hence the assumptions of Lemma~\ref{Hadamard} are satisfied.
We obtain
\begin{align}
        \Big\|\rho_{A^nE^n}^\frac{1}{2}\rho_{E^n}^\frac{1}{2}\omega_{E^n}^{\frac{1}{2}(\theta t_1+(1-\theta)t_2-1)}\Big\|_{\frac{2}{1+\theta t_1+(1-\theta)t_2}} \leq \Big\|\rho_{A^nE^n}^\frac{1}{2}\rho_{E^n}^\frac{1}{2}\omega_{E^n}^{\frac{1}{2}( t_1-1)}\Big\|_{\frac{2}{1+t_1}}^{1-\theta} \Big\|\rho_{A^nE^n}^\frac{1}{2}\rho_{E^n}^\frac{1}{2}\omega_{E^n}^{\frac{1}{2} (t_2-1)}\Big\|_{\frac{2}{1+t_2}}^{\theta}
    \end{align}
    Here, we used that we could omit the additional phases that involve powers of the form $\omega_{E^n}^{it}$ since the norm is unitarily invariant.
Applying the logarithm, dividing by $2/n$, and taking the limit $n \to \infty$ gives the desired inequality.
\end{proof}

The above result implies the function $G(t)$ is concave in the interval $[0,1]$. Hence, the maximum is attained at $t=1$ (or equivalently $\alpha=1/2$) if the left derivative at $1$ of $G(t)$ is positive. This condition defines the critical rate
\begin{align}
    \frac{\d}{\d t} G(t)\Big\vert_{t=1^-} \geq 0 \quad \implies \quad R \geq R_c = \frac{\d}{\d t} \left( t \widetilde{H}^{\frac{t-1}{t}}_{\frac{1}{1+t}}(X|E)_\rho \right) \Bigg\vert_{t=1^-}
\end{align}

\section{Converse bound}

The proof of the main result is divided into two parts. 
First, we establish an upper bound on the strong converse exponent, commonly referred to as the converse bound. 
In the subsequent section, we derive a matching lower bound, corresponding to the achievability part.

\subsection{Coarse-graining inequality}

The first part is to prove that a deterministic function cannot increase the conditional entropy. Applying such a function can be thought of as coarse-graining, as it maps different inputs to the same output. Intuitively, we expect that the uncertainty about the output of a function is at most as large as the uncertainty about the input of a function since an observer can generate an estimate of the former from an estimate of the latter. This intuition is captured by the following lemma, which we are, however, only able to show for conditional entropies based on sandwiched divergence $(z = \alpha)$. This generalizes~\cite[Lemma 5.14]{tomamichel16_book}, where this property was shown for up-arrow and down-arrow sandwiched R\'enyi divergences.

\begin{lemma}[Coarse-graining reduces the sandwiched conditional entropy]
\label{lem: Hashing}
For any $\alpha, \lambda \in \mathbb{R}$ and any function $h:\mathcal{X}\to\mathcal{Z}$ and any classical-quantum state $\rho_{XBE}$ it holds that
    \begin{align}
\widetilde{H}_{\alpha}^\lambda(ZB|E)_{\mathcal{R}_h(\rho)} \leq \widetilde{H}_{\alpha}^\lambda(XB|E)_{\rho}
    \end{align}
\end{lemma}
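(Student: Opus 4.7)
My plan is to establish the inequality pointwise in the auxiliary state $\sigma_E$ appearing in the definition of $\widetilde{H}^\lambda_\alpha$, so that taking the infimum over $\sigma_E$ on both sides preserves the ordering. The essential enabling observation is that the hashing map $\mathcal{R}_h$ acts trivially on $E$, so the $E$-marginal is preserved: $(\mathcal{R}_h(\rho))_E = \rho_E$. Consequently, the ``middle operator''
\begin{equation}
    M_E := \rho_E^{\frac{1-\lambda}{2}\frac{1-\alpha}{\alpha}} \sigma_E^{\lambda\frac{1-\alpha}{\alpha}} \rho_E^{\frac{1-\lambda}{2}\frac{1-\alpha}{\alpha}}
\end{equation}
and the feasibility condition $\supp(\rho_E) \subseteq \supp(\sigma_E)$ are identical for the original and hashed states, independently of $\lambda$.

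I would then exploit the classical-quantum structure. Writing $\rho_{XBE} = \sum_x \ketbra{x}{x}_X \otimes \omega_{BE}^{(x)}$ with $\omega_{BE}^{(x)} \geq 0$, the square root $\rho_{XBE}^{1/2}$ is block-diagonal in $X$, and since $M_E$ acts trivially on $X$, the operator $\rho_{XBE}^{1/2} M_E \rho_{XBE}^{1/2}$ (with $M_E$ understood as extended by identities) inherits this block structure. Raising to the $\alpha$-th power and tracing yields
\begin{equation}
    \Tr\left[\left(\rho_{XBE}^{1/2} M_E \rho_{XBE}^{1/2}\right)^\alpha\right] = \sum_x g(\omega_{BE}^{(x)}), \qquad g(\omega) := \Tr\left[\left(\omega^{1/2} M_E \omega^{1/2}\right)^\alpha\right].
\end{equation}
Analogously the hashed state decomposes as $\mathcal{R}_h(\rho_{XBE}) = \sum_z \ketbra{z}{z}_Z \otimes \tau_{BE}^{(z)}$ with $\tau_{BE}^{(z)} = \sum_{x \in h^{-1}(z)} \omega_{BE}^{(x)}$, contributing $\sum_z g(\tau_{BE}^{(z)})$ to the objective.

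The proof then reduces to subadditivity of $g$ on positive operators. Using the spectral identity $g(\omega) = \Tr[(M_E^{1/2} \omega M_E^{1/2})^\alpha]$, valid because $\omega^{1/2} M_E \omega^{1/2}$ and $M_E^{1/2} \omega M_E^{1/2}$ share the same nonzero spectrum, $g$ becomes $\Tr[(\cdot)^\alpha]$ composed with the \emph{linear} congruence $\omega \mapsto M_E^{1/2} \omega M_E^{1/2}$. Rotfel'd's trace inequality, $\Tr[(A'+B')^\alpha] \leq \Tr[(A')^\alpha] + \Tr[(B')^\alpha]$ for $A', B' \geq 0$ and $\alpha \in (0,1)$, then yields $g(A+B) \leq g(A) + g(B)$; iterating over each fiber $h^{-1}(z)$ and summing over $z$ gives $\sum_z g(\tau_{BE}^{(z)}) \leq \sum_x g(\omega_{BE}^{(x)})$. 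Taking $\log$, multiplying by the positive factor $1/(1-\alpha)$, and taking infima over $\sigma_E$ on both sides concludes. The case $\alpha > 1$ follows from the reverse (convex) Rotfel'd inequality together with the sign flip of $1/(1-\alpha)$, and the boundary values $\alpha \in \{1/2, 1\}$ by continuity. The main technical obstacle I anticipate is precisely the linearization step: the dependence $\omega \mapsto \omega^{1/2} M_E \omega^{1/2}$ is not linear, so the spectral identity is essential to bring the problem into the scope of Rotfel'd's theorem. The remaining issues, namely the support conditions and the generalized-inverse interpretation of the negative powers of $\sigma_E$, can be handled by a standard full-rank reduction and a continuity argument.
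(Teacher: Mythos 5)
Your proposal is correct and follows essentially the same route as the paper: exploit that the $E$-marginal (and hence the sandwiching operator and the feasibility condition on $\sigma_E$) is unchanged by hashing, decompose the trace over the classical blocks, move the state into the middle of the sandwich so the dependence becomes linear, and apply Rotfel'd's subadditivity of $\Tr[(\cdot)^\alpha]$ for $\alpha\in(0,1)$ pointwise in $\sigma_E$ before taking the infimum, with $\alpha>1$ handled by the reverse inequality plus the sign flip of the prefactor and the remaining values by a limiting argument. Your explicit spectral-identity linearization is just a spelled-out version of the rewriting the paper uses implicitly, so there is no substantive difference.
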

\begin{proof}
Consider a state of the form $\rho_{XBE} = \sum_x \rho(x) \ketbra{x}{x}_X \otimes \rho_{BE}(x)$ where $\rho(x)$ are probabilities and $\rho_{BE}(x)$ states.
Let us denote the preimage of $z$ under $h$ as 
\begin{align}
    h^{-1}(z) = \{\, x \in \mathcal{X} : h(x) = z \,\} \,.
\end{align}
The output of the hash function is 
    \begin{align}
    \mathcal{R}_h(\rho_{XBE})=\sum_z \ketbra{z}{z}_Z \otimes \sum_{x \in h^{-1}(z)}\rho(x) \hat{\rho}_{BE}(x)
\end{align}
Note that since the Hashing acts on the first register only
\begin{align}
    \Tr_Z(\mathcal{R}_h(\rho_{XBE})) = \rho_{BE}
\end{align}
We have
\begin{equation}
\widetilde{H}_{\alpha}^\lambda(ZB|E)_{\mathcal{R}_h(\rho)}=\sup_{\sigma_E}\frac{1}{1-\alpha}\log{\sum_z\Tr[\left(\sigma_E^{\frac{\lambda}{2}\frac{1-\alpha}{\alpha}}\rho_{E}^{\frac{1-\lambda}{2}\frac{1-\alpha}{\alpha}}\sum_{x \in h^{-1}(z)}\rho(x)\hat{\rho}_{BE}(x)\rho_E^{\frac{1-\lambda}{2}\frac{1-\alpha}{\alpha}}\sigma_E^{\frac{\lambda}{2}\frac{1-\alpha}{\alpha}}\right)^\alpha]}
\end{equation}
The Rotfel'd’s trace inequality implies that for a concave function $f$, it holds $\Tr(f(\sum_i A_i)) \leq \sum_i\Tr(f(A_i))$. Here the function is the $\alpha$-power with $\alpha<1$. Hence, it holds that
\begin{align}
    &\sum_z\Tr[\left(\sigma_E^{\frac{\lambda}{2}\frac{1-\alpha}{\alpha}}\rho_E^{\frac{1-\lambda}{2}\frac{1-\alpha}{\alpha}}\sum_{x \in h^{-1}(z)}\rho(x)\hat{\rho}_{BE}(x)\rho_E^{\frac{1-\lambda}{2}\frac{1-\alpha}{\alpha}}\sigma_E^{\frac{\lambda}{2}\frac{1-\alpha}{\alpha}}\right)^\alpha]\\
    &\qquad \qquad \qquad \qquad \leq \sum_x\Tr[\left(\sigma_E^{\frac{\lambda}{2}\frac{1-\alpha}{\alpha}}\rho_E^{\frac{1-\lambda}{2}\frac{1-\alpha}{\alpha}}\rho(x)\hat{\rho}_{BE}(x)\rho_E^{\frac{1-\lambda}{2}\frac{1-\alpha}{\alpha}}\sigma_E^{\frac{\lambda}{2}\frac{1-\alpha}{\alpha}}\right)^\alpha], 
\end{align}
where the latter is the term inside the logarithm of $\widetilde{H}_{\alpha}^\lambda(XB|E)_{\rho}$. If $\alpha > 1$ we instead take advantage of the concavity of the negative of the power function, and not that the prefactor changes sign. For $\alpha \in \{ 1, \infty\}$ we can use a limiting argument.
\end{proof}

A direct application of coarse-graining for states with two classical registers is the fact that
\begin{align} 
\widetilde{H}_{\alpha}^\lambda(XZ|E)_{\rho} \geq \widetilde{H}_{\alpha}^\lambda(Z|E)_{\rho} 
\label{eq:coarse-graining simplified}\,,
\end{align}
i.e., that discarding classical information cannot increase conditional entropy. In fact, since an application of a function can be decomposed into an invertible (isometric) mapping that computes the output while keeping the input, the above inequality also implies the coarse-graining inequality since the R\'enyi entropies are invariant under local isometries.

\begin{remark}
    Two commonly used definitions of conditional entropy are based on Petz-R\'enyi relative entropy, namely $\widebar{H}^\downarrow_\alpha(X|E)_\rho$ and $\widebar{H}^\uparrow_\alpha(X|E)_\rho$. Numerical evidence suggest that they do not always satisfy the coarse-graining inequality, not even in the special case of discarding classical information as in~\eqref{eq:coarse-graining simplified}.
\end{remark}

\subsection{A one-shot converse bound}

We first derive a one-shot bound as an application of H\"older’s inequality. The use of H\"older inequalities in converse theorems can be traced back to~\cite{leditzky2016strong}. Compared to the converse bound in~\cite{li2024operational}, we note that in our error criterion, we do not need to optimize over the marginal on $E$, which yields a converse bound that requires a more refined application of the H\"older inequality.

\begin{lemma}
\label{lem: one-shot converse}
Let $\rho_{XE}$ be a quantum state. For any $\alpha \in [\frac{1}{2},1)$, we have
    \begin{align}
        -\frac{\alpha}{1-\alpha}\log\left(F\left(\rho_{XE},\frac{I_{\mathcal{X}}}{|\mathcal{X}|}\otimes \rho_E\right)\right) \geq \log{|\mathcal{X}|}-\widetilde{H}_{\alpha}^\frac{1-2\alpha}{1-\alpha}(X|E)_\rho \,.
    \end{align}
\end{lemma}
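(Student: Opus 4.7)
The plan is to rewrite the fidelity as a Schatten $1$-norm and apply H\"older's inequality with an auxiliary density operator $\sigma_E$---the ``tilting'' mechanism highlighted in the introduction---so that the two factors produced by H\"older match precisely the $|\mathcal{X}|$-factor and the quantity appearing inside the logarithm in the definition of the club-sandwiched conditional entropy $\widetilde{H}_\alpha^{(1-2\alpha)/(1-\alpha)}(X|E)_\rho$.

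Concretely, since $\tfrac{I_X}{|\mathcal{X}|}\otimes \rho_E$ is a product operator with an identity factor on $X$, a direct computation gives
\begin{align}
F\!\left(\rho_{XE},\, \tfrac{I_X}{|\mathcal{X}|}\otimes \rho_E\right) \;=\; \tfrac{1}{|\mathcal{X}|}\,\bigl\|\rho_{XE}^{1/2}\,\rho_E^{1/2}\bigr\|_1^2.
\end{align}
For any state $\sigma_E$ on $E$ (on whose support $\rho_E$ must lie, else the entropy is $+\infty$ and there is nothing to prove), I then factor
\begin{align}
\rho_{XE}^{1/2}\rho_E^{1/2} \;=\; \bigl(\rho_{XE}^{1/2}\rho_E^{1/2}\sigma_E^{\tfrac{1-2\alpha}{2\alpha}}\bigr)\cdot\bigl(I_X \otimes \sigma_E^{\tfrac{2\alpha-1}{2\alpha}}\bigr)
\end{align}
and apply H\"older's inequality for Schatten norms with conjugate exponents $p=2\alpha$ and $q=\tfrac{2\alpha}{2\alpha-1}$, both of which lie in $[1,\infty]$ for $\alpha \in [\tfrac{1}{2},1)$.

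Both factors then evaluate exactly as needed. Using $\|B\|_{2\alpha}^{2\alpha}=\Tr[(BB^*)^\alpha]$, the first factor raised to the power $2\alpha$ becomes $\Tr\bigl[\bigl(\rho_{XE}^{1/2}\rho_E^{1/2}\sigma_E^{(1-2\alpha)/\alpha}\rho_E^{1/2}\rho_{XE}^{1/2}\bigr)^\alpha\bigr]$, which is precisely the argument of the logarithm in $\widetilde{H}_\alpha^{(1-2\alpha)/(1-\alpha)}(X|E)_\rho$ for the chosen $\sigma_E$. For the second factor, the exponent on $\sigma_E$ is tuned so that $\tfrac{2\alpha-1}{2\alpha}\cdot q = 1$; together with $\Tr[\sigma_E]=1$ and the identity $\|I_X\otimes A\|_q^q = |\mathcal{X}|\,\Tr[A^q]$, this evaluates cleanly to $|\mathcal{X}|^{(2\alpha-1)/(2\alpha)}$. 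Squaring the H\"older bound, rearranging, and taking logarithms yields
\begin{align}
-\tfrac{\alpha}{1-\alpha}\log F \;\geq\; \log|\mathcal{X}| \;-\; \tfrac{1}{1-\alpha}\log\Tr\!\Bigl[\bigl(\rho_{XE}^{1/2}\rho_E^{1/2}\sigma_E^{(1-2\alpha)/\alpha}\rho_E^{1/2}\rho_{XE}^{1/2}\bigr)^\alpha\Bigr],
\end{align}
and since $\sigma_E$ was arbitrary, optimizing the right-hand side over $\sigma_E$ reproduces the club-sandwiched conditional entropy and hence the claim.

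The main subtlety is really conceptual rather than technical: one must discover the correct tilted factorization. The Schatten exponents $p=2\alpha$ and $q=2\alpha/(2\alpha-1)$ are forced by the need to convert a $2\alpha$-norm into a trace of an $\alpha$-power, while the specific power $(2\alpha-1)/(2\alpha)$ on $\sigma_E$ is the unique choice making $\Tr[\sigma_E^{(2\alpha-1)/(2\alpha)\cdot q}]=\Tr[\sigma_E]=1$, which in turn produces the clean factor $|\mathcal{X}|^{(2\alpha-1)/(2\alpha)}$. Because these choices are rigid, the boundary value $\lambda=(1-2\alpha)/(1-\alpha)$ of the $(\alpha,\lambda)$-family emerges inevitably on the right-hand side, consistent with the interpretation of tilting as the central technical mechanism emphasized in the introduction.
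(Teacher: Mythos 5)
Your proof is correct and follows essentially the same route as the paper: rewrite the fidelity as a Schatten $1$-norm, insert a $\sigma_E$-dependent factor with the powers $\pm\tfrac{2\alpha-1}{2\alpha}$, and apply H\"older's inequality with conjugate exponents $2\alpha$ and $\tfrac{2\alpha}{2\alpha-1}$, finally optimizing over admissible $\sigma_E$. The only difference is cosmetic bookkeeping of the $|\mathcal{X}|$ factor (you pull out $1/|\mathcal{X}|$ first so the second H\"older factor equals $|\mathcal{X}|^{(2\alpha-1)/(2\alpha)}$, while the paper keeps $\pi_X$ inside the norms so that factor equals $1$), and both arguments handle the support condition on $\sigma_E$ in the same way.
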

\begin{proof}
Let $\alpha \in [1/2,1)$. We define $\beta$ to be the H\"older conjugate of $\alpha$, i.e., the unique parameter satisfying
\begin{align}
    \frac{1}{2\alpha} + \frac{1}{2\beta} = 1 .
\end{align}
Equivalently, $\beta = \frac{\alpha}{2\alpha - 1}$ and hence $\beta \in (1, \infty]$. H\"older's inequality implies that for linear operators $X,Y$, it holds that $\|XY\|_1\leq \|X\|_{2\alpha}\|Y\|_{2\beta}$ (see e.g.,~\cite[Exercise IV.2.7]{bhatia1997matrix}).  
 Let $\sigma_E$ be a state such that $\supp(\rho_{AE}) \subseteq\supp(I_A \otimes \sigma_E)$. We have
    \begin{align}
        &\frac{\alpha}{1-\alpha} \log F\left(\rho_{XE},\pi_X\otimes \rho_E\right) \\
        &\qquad = \frac{2\alpha}{1-\alpha}\log\|\rho_{XE}^\frac{1}{2}(\pi_X\otimes \rho_E)^\frac{1}{2}\|_1 \\
        &\qquad =\frac{2\alpha}{1-\alpha}\log\|\rho_{XE}^\frac{1}{2} (\pi_X \otimes \rho_E)^\frac{1}{2}(\pi_X \otimes \sigma_E)^\frac{1-2\alpha}{2\alpha}(\pi_X \otimes \sigma_E)^{-\frac{1-2\alpha}{2\alpha}}\|_1 \\
        &\qquad \leq \frac{2\alpha}{1-\alpha}\log\|\rho_{XE}^\frac{1}{2} (\pi_X \otimes \rho_E)^\frac{1}{2}(\pi_X \otimes \sigma_E)^\frac{1-2\alpha}{2\alpha}\|_{2\alpha} \|(\pi_X \otimes \sigma_E)^{-\frac{1-2\alpha}{2\alpha}}\|_{2\beta} \\
        & \qquad   = \frac{2\alpha}{1-\alpha}\log\|\rho_{XE}^\frac{1}{2} (\pi_X \otimes \rho_E)^\frac{1}{2}(\pi_X \otimes \sigma_E)^\frac{1-2\alpha}{2\alpha}\|_{2\alpha}
    \end{align}
    In the second equality, all negative powers are understood as generalized inverses,
that is, they are taken only on the support of the corresponding operators.
Ey Lemma~\ref{lem: support lemma}, the assumption $\supp(\rho_{AE}) \subseteq\supp(I_A \otimes \sigma_E)$ implies that $\supp(\rho_{AE}) \subseteq \supp(I_A\otimes \rho_E) \subseteq \supp(I_A\otimes \sigma_E)$. 
Hence, the projector
\begin{align}
    (\pi_X \otimes \sigma_E)^{\frac{1-2\alpha}{2\alpha}}
(\pi_X \otimes \sigma_E)^{-\frac{1-2\alpha}{2\alpha}}
\end{align}
act as the identity on the support of the operator inside the norm, and therefore
do not affect its value.
In the last line we used that
\begin{equation}
     \|(\pi_X \otimes \sigma_E)^{-\frac{1-2\alpha}{2\alpha}}\|_{2\beta} = (\Tr(\pi_X \otimes \sigma_E))^\frac{1}{2\beta} =1
\end{equation}
Since the inequality holds for every $\sigma_E$ satisfying the required support
condition, and since in the definition of
$\widetilde{H}_{\alpha}^{\frac{1-2\alpha}{1-\alpha}}(X|E)_\rho$
the optimization may be restricted to such states---as the objective function
is equal to $+\infty$ otherwise—we may take the infimum over all admissible
$\sigma_E$ and use that
\begin{equation}
     \inf_{\sigma_E}\frac{2\alpha}{1-\alpha}\log\|\rho_{XE}^\frac{1}{2} (\pi_X \otimes \rho_E)^\frac{1}{2}(\pi_X \otimes \sigma_E)^\frac{1-2\alpha}{2\alpha}\|_{2\alpha} = \widetilde{H}_{\alpha}^\frac{1-2\alpha}{1-\alpha}(X|E)_\rho - \log{|\mathcal{X}|} \,.
\end{equation}
\end{proof}

\subsection{Proof of converse}

The above one-shot lower bound, combined with the coarse-graining inequality, implies the following converse bound for privacy amplification with quantum side information.
\begin{lemma}
Let $\rho_{XE}$ be a classical-quantum state and $R\geq0$. It holds that
    \begin{align}
         E_{\pa}(\rho_{XE},R) \geq \max_{\alpha \in [\frac{1}{2},1]}\frac{1-\alpha}{\alpha} \left( R-\widetilde{H}_{\alpha}^\frac{1-2\alpha}{1-\alpha}(X|E)_\rho \right) \,.
    \end{align}
\end{lemma}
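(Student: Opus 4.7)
The plan is to lift the one-shot converse (Lemma~\ref{lem: one-shot converse}) to the asymptotic regime by applying it to the $n$-fold output state, then reducing to a single-letter quantity via coarse-graining (Lemma~\ref{lem: Hashing}) and additivity of the club-sandwiched conditional entropy. Fix any admissible sequence of hash functions $\{h_n\}$ with $\liminf_n \frac{1}{n}\log|\mathcal{Z}_n|\geq R$ and fix $\alpha\in[1/2,1)$. Because $\mathcal{R}_{h_n}$ acts trivially on $E^n$, the marginal of $\mathcal{R}_{h_n}(\rho_{XE}^{\otimes n})$ on $E^n$ is exactly $\rho_E^{\otimes n}$. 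Consequently, the one-shot bound applies verbatim, with the role of $\mathcal{X}$ played by $\mathcal{Z}_n$:
\begin{align}
-\frac{\alpha}{1-\alpha}\log F\!\left(\mathcal{R}_{h_n}(\rho_{XE}^{\otimes n}),\frac{I_{\mathcal{Z}_n}}{|\mathcal{Z}_n|}\otimes \rho_E^{\otimes n}\right) \;\geq\; \log|\mathcal{Z}_n|\;-\;\widetilde{H}_\alpha^{\frac{1-2\alpha}{1-\alpha}}(Z_n|E^n)_{\mathcal{R}_{h_n}(\rho^{\otimes n})}.
\end{align}

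The next step is to single-letterize the right-hand side. Coarse-graining (Lemma~\ref{lem: Hashing}) followed by additivity on tensor products~\cite[Theorem~4.1]{rubboli2024quantum} gives
\begin{align}
\widetilde{H}_\alpha^{\frac{1-2\alpha}{1-\alpha}}(Z_n|E^n)_{\mathcal{R}_{h_n}(\rho^{\otimes n})}
\;\leq\; \widetilde{H}_\alpha^{\frac{1-2\alpha}{1-\alpha}}(X^n|E^n)_{\rho^{\otimes n}}
\;=\; n\,\widetilde{H}_\alpha^{\frac{1-2\alpha}{1-\alpha}}(X|E)_\rho.
\end{align}
Substituting and dividing through by $n$, then taking $\limsup$ on the left and using that the prefactor $(1-\alpha)/\alpha$ is nonnegative together with the rate constraint, yields
\begin{align}
\limsup_{n\to\infty}\;-\frac{1}{n}\log F\!\left(\mathcal{R}_{h_n}(\rho_{XE}^{\otimes n}),\tfrac{I_{\mathcal{Z}_n}}{|\mathcal{Z}_n|}\otimes \rho_E^{\otimes n}\right) \;\geq\; \frac{1-\alpha}{\alpha}\!\left(R-\widetilde{H}_\alpha^{\frac{1-2\alpha}{1-\alpha}}(X|E)_\rho\right).
\end{align}
Passing to the infimum over admissible sequences establishes the converse bound for every $\alpha\in[1/2,1)$, and then the supremum over $\alpha$ in this open range can be taken. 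The endpoint $\alpha=1$ is included for free: the prefactor $(1-\alpha)/\alpha$ vanishes while $\widetilde{H}_\alpha^{(1-2\alpha)/(1-\alpha)}(X|E)_\rho \to H(X|E)_\rho$ is finite, so the objective reduces to $0$, which is in any case a trivial lower bound on $E_{\pa}$ since fidelity is bounded by one. The maximum over the closed interval $[1/2,1]$ therefore coincides with the supremum over the open subinterval and bounds $E_{\pa}(\rho_{XE},R)$ from below.

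I do not anticipate any genuine obstacle in this step: the analytic heavy lifting is already contained in the one-shot H\"older estimate and in the additivity statement, both of which have been cited. The only points that require care are confirming that the marginal of the $n$-fold output on $E^n$ is precisely $\rho_E^{\otimes n}$ (so that the one-shot bound applies without any re-optimization over the reference state on $E$), and a clean treatment of the $\alpha=1$ boundary which, as noted above, is essentially automatic because the vanishing prefactor absorbs any ambiguity in the entropy term.
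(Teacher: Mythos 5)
Your proposal is correct and follows essentially the same route as the paper: apply the one-shot H\"older-based converse to the $n$-fold output (using that the $E^n$-marginal is $\rho_E^{\otimes n}$), single-letterize via the coarse-graining inequality and additivity, then take limits and optimize over $\alpha$. The only cosmetic difference is at the endpoint $\alpha=1$, which the paper handles by continuity of the objective on $[\tfrac12,1]$ while you note the objective vanishes there and is trivially dominated by $E_{\pa}\geq 0$; both arguments are valid.
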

\begin{proof}
Let  $\{h_n~: \mc{X}^{\times n} \rightarrow \mc{Z}_n\}_{n \in \mathbb{N}}$ be an arbitrary sequence of hash functions such that
\begin{align}
\label{eq: lower rate converse}
    \liminf_{n\rightarrow \infty} \frac{1}{n}\log|\Z_n| \geq R \,.
\end{align}

For $\alpha \in [1/2,1)$,  Lemma~\ref{lem: one-shot converse} implies that
\begin{align}
\label{eq: first step converse asymptotics}
    -\frac{\alpha}{1-\alpha}\log F\left(\mathcal{R}_{h_n}(\rho_{XE}^{\otimes n}),\frac{I_{\mathcal{Z}_n}}{|\mathcal{Z}_n|}\otimes \rho_E^{\otimes n}\right) &\geq \log{|\mathcal{Z}_n|}-\widetilde{H}_{\alpha}^\frac{1-2\alpha}{1-\alpha}(Z_n|E^n)_{\mathcal{R}_n(\rho^{\otimes n})}\\
    & \geq \log{|\mathcal{Z}_n|}-\widetilde{H}_{\alpha}^\frac{1-2\alpha}{1-\alpha}(X^n|E^n)_{\rho^{\otimes n}} \\
    &  =\log{|\mathcal{Z}_n|}-n\widetilde{H}_{\alpha}^\frac{1-2\alpha}{1-\alpha}(X|E)_{\rho}
\end{align}
where the inequality follows from the fact that hashing does not increase the club-sandwiched conditional entropy (see Lemma~\ref{lem: Hashing}), and the final equality follows from the additivity of the club-sandwiched conditional entropy~\cite[Theorem~4.1]{rubboli2024quantum}.
We then divide by $1/n$ and take the $\limsup$ for $n \to \infty$, note that $\limsup \geq \liminf$, and apply equation~\eqref{eq: lower rate converse}.
Finally, we take the supremum over $\alpha \in [1/2,1)$. Since the objective function is continuous on $[1/2,1]$, this supremum is attained and can be replaced by a maximum.
\end{proof}

\section{Achievability}
The proof of achievability follows standard techniques, first introduced by Mosonyi and Ogawa~\cite{mosonyi2017strong} and subsequently applied to the study of strong converse exponent for a variety of problems~\cite{li2023strong,li2024operational,berta2024strong,berta2025strong}. 

The approach is based on the Log–Euclidean formalism combined with a double–blocking argument.
More precisely, one first derives an achievability bound for the exponent expressed in terms of Log–Euclidean quantities, which coincide with the sandwiched ones whenever all operators commute. This bound has the same form as the final error exponent, with the sandwiched quantities replaced by their Log–Euclidean counterparts. The reason is that Log–Euclidean quantities are particularly useful as they admit variational representations in terms of expressions involving the Umegaki relative entropy, thereby allowing one to invoke known achievability results for the latter.
The final step consists of a double–blocking argument, which allows one to replace the Log–Euclidean quantities with the sandwiched ones. One first performs a pinching over blocks of copies, which can be shown not to affect the error exponent and can therefore be inserted without loss. One then lets the block size tend to infinity. Since the pinching procedure introduces only logarithmic correction terms, these vanish in the asymptotic limit. Moreover, this allows one to replace the Log–Euclidean quantities by the corresponding sandwiched ones, using the fact that the sandwiched quantities arise as the infinite–block limit of the pinched Log–Euclidean expressions.

While our proof closely resembles the aforementioned arguments, our main technical contribution goes beyond a direct adaptation of the existing proofs. In particular, the expressions for the club-sandwiched conditional entropies involves three non-commuting operators, namely $\rho_{AE}$, $\rho_E$, and the state $\sigma_E$, rather than just two as for previously studied settings. As a consequence, it is not a priori clear how to implement a pinching argument, since one cannot simply pinch one state with respect to another.
Our key contribution is to introduce a double pinching procedure, consisting of the composition of a pinching with respect to the marginal state $\rho_E$ and a pinching with respect to the universal state, which can be asymptotically inserted and used to replace the optimization over $\sigma_E$. In this way, all three operators are rendered mutually commuting, and the sandwiched conditional entropies emerge as the asymptotic limit of the corresponding Log--Euclidean quantities.

\subsection{Variational expression for the LE conditional entropy}
\label{sec:variational LE}
In this subsection, we provide a variational expression for the LE conditional entropy in terms of the Umegaki relative entropy. 
This result is obtained by an argument similar to the one used to express the LE relative entropy in terms of the Umegaki relative entropy~\cite{mosonyi2017strong}.

\begin{lemma}[Variational expression LE conditional entropy]
\label{lem: variational LE}
Let $\alpha \in (0,1)$, $\lambda \leq 0$ and $\rho_{AE}$ be a state. Then, we have
\begin{equation}
(\alpha-1)H_{\alpha,\infty}^{\lambda}(A|E)_\rho =  \min_{\tau_{AE}} \{(1-\lambda)(1-\alpha) D(\tau_E\|\rho_E)+\alpha D(\tau_{AE}\|\rho_{AE})+(\alpha-1)H(A|E)_{\tau_{AE}}\} \,,
\end{equation}
where we minimize over all states $\tau_{AE}$.
\end{lemma}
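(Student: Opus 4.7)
The plan is to recognize the trace inside the definition of $H_{\alpha,\infty}^{\lambda}$ as a Log--Euclidean relative entropy $D_{\alpha,\infty}$, invoke the known variational formula~\eqref{eq: variational form LE} to convert it into a minimization over states in Umegaki relative entropy, and swap the resulting outer max-min via Sion's theorem so that the $\sigma_E$-optimization can be carried out in closed form. The conditional entropy $H(A|E)_\tau$ will emerge as the bookkeeping difference between the von Neumann entropies of $\tau_{AE}$ and $\tau_E$ when the relative entropy against the $E$-only operator is expanded.

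\textbf{Concrete steps.} Set $\mu_{\sigma_E}:=\exp\bigl((1-\lambda)\log\rho_E+\lambda\log\sigma_E\bigr)$ and $\nu_{AE}^{\sigma_E}:=I_A\otimes\mu_{\sigma_E}$. Under the support convention $\supp(\rho_E)\subseteq\supp(\sigma_E)$, the projector $R_{AE}$ is exactly the projector onto $\supp(\rho_{AE})\cap\supp(\nu_{AE}^{\sigma_E})$, and the operator inside the exponential in the definition of $H_{\alpha,\infty}^{\lambda}$ matches the one appearing in $D_{\alpha,\infty}(\rho_{AE}\|\nu_{AE}^{\sigma_E})$. Using that $(\alpha-1)\min_{\sigma_E}\tfrac{1}{1-\alpha}\log(\cdot)=\max_{\sigma_E}(-\log(\cdot))$, one obtains
\[
(\alpha-1)H_{\alpha,\infty}^{\lambda}(A|E)_\rho \;=\; \max_{\sigma_E}\,(1-\alpha)\,D_{\alpha,\infty}\!\bigl(\rho_{AE}\bigm\|\nu_{AE}^{\sigma_E}\bigr).
\]
Invoking~\eqref{eq: variational form LE} and then expanding $(1-\alpha)D(\tau_{AE}\|\nu_{AE}^{\sigma_E})=-(1-\alpha)H(A|E)_\tau+(1-\alpha)(1-\lambda)D(\tau_E\|\rho_E)+(1-\alpha)\lambda D(\tau_E\|\sigma_E)$, which uses $\Tr[\tau_{AE}\log(I_A\otimes\mu_{\sigma_E})]=\Tr[\tau_E\log\mu_{\sigma_E}]$ together with the identity $\Tr[\tau\log\mu]=-\Tr[\tau\log\tau]-D(\tau\|\mu)$ applied at the $AE$ and $E$ levels, one arrives at
\begin{align*}
(\alpha-1)H_{\alpha,\infty}^{\lambda}(A|E)_\rho
=\max_{\sigma_E}\min_{\tau_{AE}}\Bigl\{&\alpha D(\tau_{AE}\|\rho_{AE})-(1-\alpha)H(A|E)_\tau\\
&+(1-\alpha)(1-\lambda)D(\tau_E\|\rho_E)+(1-\alpha)\lambda D(\tau_E\|\sigma_E)\Bigr\}.
\end{align*}
The bracket is convex in $\tau_{AE}$ (both $D(\cdot\|\cdot)$'s and $-H(A|E)$ are) and, since $(1-\alpha)\lambda\leq 0$ and $D(\tau_E\|\sigma_E)$ is convex in $\sigma_E$, it is concave in $\sigma_E$, so Sion's minimax theorem allows swapping the two optimizations. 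The inner $\max_{\sigma_E}$ then only sees the term $(1-\alpha)\lambda D(\tau_E\|\sigma_E)$ with a nonpositive coefficient, and is attained (or approached in the limit) at $\sigma_E=\tau_E$, giving $0$. Rearranging produces the stated identity.

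\textbf{Expected obstacle.} The main technical care lies in the support bookkeeping and the justification of the swap. One must verify that, under the definition's conventions, $R_{AE}$ really is the projector onto the intersection of supports of $\rho_{AE}$ and $\nu_{AE}^{\sigma_E}$, and that the optimizer $\sigma_E=\tau_E$ is either admissible or is approximated by the sequence $\sigma_E^{\varepsilon}:=(1-\varepsilon)\tau_E+\varepsilon\,\rho_E/\Tr[\rho_E]$, which satisfies $\supp(\rho_E)\subseteq\supp(\sigma_E^{\varepsilon})$ and along which $D(\tau_E\|\sigma_E^{\varepsilon})\to 0$ by continuity. Sion's theorem is standard but also requires compact convex domains, which one handles by first restricting to $\sigma_E\geq\varepsilon\rho_E$ and taking $\varepsilon\downarrow 0$ at the end, in line with the full-rank limiting convention discussed in Appendix~\ref{app: non full-rank}. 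None of these issues is conceptually deep, but their careful treatment is essential for the equality to hold on the nose.
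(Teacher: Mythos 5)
Your proposal is correct and is, in essence, the paper's own argument in different packaging: the paper applies the constrained Gibbs variational principle (Lemma~\ref{lem: Gibbs principle}) to the Hamiltonian $-\alpha\log\rho_{AE}-(1-\alpha)R_{AE}\big((1-\lambda)\log\rho_E+\lambda\log\sigma_E\big)R_{AE}$, which is exactly what your invocation of~\eqref{eq: variational form LE} for $D_{\alpha,\infty}\big(\rho_{AE}\,\big\|\,I_A\otimes e^{(1-\lambda)\log\rho_E+\lambda\log\sigma_E}\big)$ unpacks to; both proofs then exchange $\max_{\sigma_E}$ and $\min_{\tau_{AE}}$ via Sion and resolve the inner $\sigma_E$-optimization at (or approaching) $\sigma_E=\tau_E$, your approximating sequence $\sigma_E^{\varepsilon}=(1-\varepsilon)\tau_E+\varepsilon\rho_E$ being a legitimate way to handle non-attainment.

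Two points in your write-up need repair, though neither sinks the argument. First, \eqref{eq: variational form LE} is stated for states, whereas your second argument $\nu_{AE}^{\sigma_E}$ is an unnormalized positive operator; the formula does extend to positive semidefinite operators (a scaling $\sigma\mapsto c\,\sigma$ shifts both sides by $-\log c$), but this should be said explicitly, or one can bypass it by running the Gibbs principle directly as the paper does. Second, and more substantively, your justification of convexity in $\tau_{AE}$ (``both $D(\cdot\|\cdot)$'s and $-H(A|E)$ are'') overlooks the term $(1-\alpha)\lambda\,D(\tau_E\|\sigma_E)$, whose coefficient is nonpositive, so that term is \emph{concave} in $\tau_{AE}$ and term-by-term convexity fails. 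The bracket is nevertheless convex: since $(1-\lambda)+\lambda=1$, the $-S(\tau_E)$ contributions of $(1-\alpha)(1-\lambda)D(\tau_E\|\rho_E)$ and $(1-\alpha)\lambda D(\tau_E\|\sigma_E)$ exactly cancel the $+(1-\alpha)S(\tau_E)$ coming from $-(1-\alpha)H(A|E)_\tau$, and the objective collapses to $\Tr[\tau_{AE}\log\tau_{AE}]$ minus a linear functional of $\tau_{AE}$ --- precisely the Gibbs form used in the paper --- so Sion applies, with compactness supplied by the set $\{\tau_{AE}:\supp(\tau_{AE})\subseteq\supp(\rho_{AE})\}$ on the minimizing side. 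With these two fixes your proof is complete and matches the paper's.
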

\begin{proof}
The case $\lambda=0$ is simpler and follows directly from the variational characterization of the Log–Euclidean relative entropy (see, e.g.,~\cite[Theorem~III.6]{mosonyi2017strong}).

Let us now turn to the case $\lambda<0$. We use the (constrained) Gibbs variational principle in Lemma~\ref{lem: Gibbs principle} with Hamiltonian 
\begin{align}
    H= -\alpha\log{\rho_{AE}-(1-\lambda)(1-\alpha)R_{AE}\log{\rho_E}R_{AE}-\lambda(1-\alpha)R_{AE}\log{\sigma_E}R_{AE}} \,.
\end{align}
We obtain
\begin{align}
    &(\alpha-1)H_{\alpha,\infty}^{\lambda}(A|E) \\
    &\qquad = \sup_{\sigma_E}
 -\log{\Tr[R_{AE}\exp(\alpha\log{\rho_{AE}+(1-\alpha)R_{AE}\big((1-\lambda)\log{\rho_E}+\lambda(1-\alpha)\log{\sigma_E}\big)R_{AE}})]} \\
    &\qquad = \sup_{\sigma_E}
 \inf_{\tau_{AE}}\Big\{\Tr[\tau_{AE}\log{\tau_{AE}}] \\
    & \qquad\qquad \qquad \qquad   - \Tr[\tau_{AE}(\alpha\log{\rho_{AE}+(1-\alpha)R_{AE}\big((1-\lambda)\log{\rho_E}+\lambda(1-\alpha)\log{\sigma_E}\big)R_{AE}})]\Big\}\,,
\end{align}
where the minimization is taken over all states \(\tau_{AE}\) whose support is contained
in that of \(\rho_{AE}\). 
We then observe that, in the definition of the LE conditional entropy,
the optimization over all states \(I_A \otimes \sigma_E\) may be restricted to those whose support contains the support of $\rho_{AE}$.
Indeed, any state that violates this condition yields an objective value equal to
\(+\infty\) and therefore cannot be optimal. Hence, the optimizations can be restricted to the sets 
\begin{align}
    \{\tau_{AE}: \supp(\tau_{AE}) \subseteq \supp(\rho_{AE})\} \,\quad \{\sigma_{E}: \supp(\rho_{AE}) \subseteq \supp(I_A \otimes \sigma_{E})\}
\end{align}
Both of these sets are compact, and the objective function is finite and continuous on both $\sigma_E$ and $\tau_{AE}$.
 The above objective function is convex in $\tau_{AE}$ since the von Neumann entropy is concave in its argument and the remaining term is linear in $\tau_{AE}$. Moreover, the objective function is concave in $\sigma_{E}$ since the logarithm is operator concave and the trace is linear. We can then use 
 Sion's minimax theorem~\cite{sion1958general} to exchange the supremum with the infimum.
Next, we use the known fact that
\begin{align}
\sup_{\sigma_E}\Tr[\tau_{AE}\log{\sigma_E}] = \Tr[\tau_{AE}\log{\tau_E}] \,.
\end{align}
Here, the supremum is taken over all states. Nonetheless, the supremum could be restricted to the above compact set since $\supp(\tau_{AE}) \subseteq \supp(I_A \otimes \tau_E)$.
We obtain
\begin{align}
    (\alpha-1)H_{\alpha,\infty}^{\lambda}(A|E) &= \inf_{\tau_{AE}}\Big\{\Tr[\tau_{AE}\log{\tau_{AE}}] \\
    & \quad  - \Tr[\tau_{AE}(\alpha\log{\rho_{AE}+(1-\alpha)R_{AE}\big((1-\lambda)\log{\rho_E}+\lambda(1-\alpha)\log{\tau_E}\big)R_{AE}})]\Big\}\\
    \label{eq: clear continuity of infimum}
    &= \inf_{\tau_{AE}} \{(1-\lambda)(1-\alpha) D(\tau_E\|\rho_E)+\alpha D(\tau_{AE}\|\rho_{AE})+(\alpha-1)H(A|E)_{\tau_{AE}}\} \,,
\end{align}
In the second equality, we used that we can drop the projector $R_{AE}$ in the objective function, as the optimization is over states $\tau_{AE}$ satisfying
\(R_{AE}\tau_{AE}R_{AE}=\tau_{AE}\). The equality then follows from a straightforward calculation. 

From this variational expression, it is clear that the objective function is continuous in 
$\tau_{AE}$ on the set of states satisfying the required support condition. Consequently, the infimum may be replaced by a minimum, since a continuous function on a compact set always attains its minimum.

Finally, the support restriction on the states \(\tau_{AE}\)
may be removed from the infimum.
Indeed, any state \(\tau_{AE}\) that violates this condition yields an infinite
value of the relative entropy \(D(\tau_{AE}\|\rho_{AE})\) and therefore cannot
attain the infimum.

\end{proof}

\subsection{A LE conditional entropy bound}
In this subsection, we combine the previous variational formula with a standard achievability argument to derive an upper bound to the strong converse exponent in terms of the LE conditional entropy. This method is rather standard in the literature~\cite{li2025two,li2024operational}. In particular, the following proof is a straightforward adaptation of the proof provided in~\cite{li2025two} for the classical case.

We begin by deriving a variational expression for the final form of the strong converse exponent that we aim to prove, in which the club-sandwiched conditional entropy is replaced by the LE conditional entropy. We define
\begin{align}
\label{eq: definition G}
      G(\rho_{XE},R) = \max_{\alpha \in [\frac{1}{2},1]}\frac{1-\alpha}{\alpha} \left( R-H_{\alpha,\infty}^\frac{1-2\alpha}{1-\alpha}(X|E)_\rho \right) \,.
\end{align}
The following lemma exploits the variational characterization of the LE conditional entropy established in Lemma~\ref{lem: variational LE} to derive a variational expression for the above quantity in terms of the Umegaki relative entropy.

\begin{lemma}
Let $\rho_{XE}$ be a classical quantum state and $R\geq 0$. We have
    \begin{align}
        G(\rho_{XE},R)= \min_{\tau_{XE}} \left\{  D(\tau_E\|\rho_E)+ D(\tau_{XE}\|\rho_{XE})+|R-H(X|E)_{\tau}|^+)\right\} \,.
    \end{align}
\end{lemma}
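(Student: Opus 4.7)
The plan is to apply the variational expression of Lemma~\ref{lem: variational LE} at the distinguished value $\lambda = (1-2\alpha)/(1-\alpha)$, perform the change of variables $t = (1-\alpha)/\alpha$, and then exchange the resulting $\max$ and $\min$ via Sion's minimax theorem. The positive part $|\cdot|^+$ then arises simply because $\max_{t\in[0,1]} t c = \max(c,0)$.

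First I would observe that for $\lambda = (1-2\alpha)/(1-\alpha)$ one has $(1-\lambda)(1-\alpha) = \alpha$, so Lemma~\ref{lem: variational LE} reduces to
\begin{align}
(\alpha-1)\, H_{\alpha,\infty}^{\frac{1-2\alpha}{1-\alpha}}(X|E)_\rho
= \min_{\tau_{XE}} \Big\{ \alpha D(\tau_E\|\rho_E) + \alpha D(\tau_{XE}\|\rho_{XE}) + (\alpha-1) H(X|E)_\tau \Big\}.
\end{align}
Dividing by $\alpha-1<0$ (which turns the minimum into a maximum), multiplying by $(1-\alpha)/\alpha>0$ and subtracting from $(1-\alpha)R/\alpha$, a direct algebraic manipulation yields, for each $\alpha\in[1/2,1)$,
\begin{align}
\frac{1-\alpha}{\alpha}\Big(R - H_{\alpha,\infty}^{\frac{1-2\alpha}{1-\alpha}}(X|E)_\rho\Big) = \min_{\tau_{XE}} \Big\{ D(\tau_E\|\rho_E) + D(\tau_{XE}\|\rho_{XE}) + \tfrac{1-\alpha}{\alpha}\big(R - H(X|E)_\tau\big) \Big\}.
\end{align}
The boundary values $\alpha = 1/2$ and $\alpha = 1$ are handled by continuity (for $\alpha\to 1$ the objective collapses to $D(\tau_E\|\rho_E) + D(\tau_{XE}\|\rho_{XE})$). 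Substituting $t := (1-\alpha)/\alpha\in[0,1]$, the quantity $G(\rho_{XE},R)$ becomes
\begin{align}
G(\rho_{XE},R) = \max_{t\in[0,1]} \min_{\tau_{XE}} f(\tau,t), \qquad f(\tau,t) := D(\tau_E\|\rho_E) + D(\tau_{XE}\|\rho_{XE}) + t\big(R - H(X|E)_\tau\big).
\end{align}

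Next I would justify the exchange of $\max$ and $\min$. The function $f(\tau,t)$ is affine (and hence concave and continuous) in $t\in[0,1]$; in the variable $\tau_{XE}$ it is the sum of the two relative entropies (jointly convex, hence convex in $\tau_{XE}$ with $\tau_E = \Tr_X \tau_{XE}$ linear) and the term $-t\,H(X|E)_\tau$, which is convex in $\tau_{XE}$ for $t\geq 0$ since the conditional von Neumann entropy is concave. The minimization can be restricted to the compact convex set of classical-quantum states $\tau_{XE}$ with $\supp(\tau_{XE}) \subseteq \supp(\rho_{XE})$ (outside this set the objective is $+\infty$), on which $f$ is continuous. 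Since $[0,1]$ is also compact and convex, Sion's minimax theorem applies and gives
\begin{align}
\max_{t\in[0,1]} \min_{\tau_{XE}} f(\tau,t) = \min_{\tau_{XE}} \max_{t\in[0,1]} f(\tau,t).
\end{align}
Finally, for fixed $\tau$, the inner maximization in $t$ is linear, so $\max_{t\in[0,1]} t\big(R - H(X|E)_\tau\big) = |R - H(X|E)_\tau|^+$, and the claimed identity follows.

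The main obstacle is a careful verification of the hypotheses of Sion's theorem: in particular the convexity of $\tau \mapsto -H(X|E)_\tau$ (standard for cq-states, but worth recalling), and the joint lower semicontinuity of the objective on the compact set of states with support in $\rho_{XE}$. A minor technical point is the continuity extension of the variational identity to the endpoints $\alpha=1/2$ and $\alpha=1$, which is clear from the explicit variational representation derived in~\eqref{eq: clear continuity of infimum}.
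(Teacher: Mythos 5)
Your proposal follows essentially the same route as the paper: specialize Lemma~\ref{lem: variational LE} to $\lambda=(1-2\alpha)/(1-\alpha)$ (where indeed $(1-\lambda)(1-\alpha)=\alpha$), change variables to $t=(1-\alpha)/\alpha\in[0,1]$, exchange $\max$ and $\min$ via Sion's theorem, and perform the explicit linear optimization over $t$ to produce $|R-H(X|E)_\tau|^+$; this is exactly the paper's argument with $\mu=(1-\alpha)/\alpha$. The one flaw is the justification of your restriction of the inner minimization: you claim the objective is $+\infty$ outside the set of classical-quantum states with $\supp(\tau_{XE})\subseteq\supp(\rho_{XE})$, but that is false for the ``classical-quantum'' part of the restriction --- a non-block-diagonal $\tau_{XE}$ supported inside $\supp(\rho_{XE})$ (e.g.\ a superposition across the blocks $\ket{x}\otimes\supp\rho_E(x)$) has finite objective value. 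Only the support condition forces $+\infty$. As written, your chain therefore establishes the identity with the minimum restricted to cq states, and an extra step is needed to match the statement, which minimizes over all states: this is the pinching/data-processing argument the paper uses (pinching $\tau_{XE}$ in the $X$-basis leaves $\tau_E$ invariant, leaves $\rho_{XE}$ invariant since it is cq, and does not increase any of the three terms, so the cq restriction is lossless). Alternatively, simply drop ``classical-quantum'' from your restriction: the set of all states supported in $\supp(\rho_{XE})$ is already compact and convex, the objective is continuous there, convex in $\tau_{XE}$ (concavity of the conditional entropy holds for general states) and affine in $t$, so Sion applies and this restricted minimum trivially equals the unrestricted one. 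With either fix the proof is complete and coincides with the paper's.
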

\begin{proof}
We make use of the variational characterization established in Lemma~\ref{lem: variational LE}, specialized to the choice \(\lambda = (1-2\alpha)/(1-\alpha)\). 
We have
\begin{align}
    &\max_{\alpha \in [\frac{1}{2},1]}\frac{1-\alpha}{\alpha} \left( R-H_{\alpha,\infty}^\frac{1-2\alpha}{1-\alpha}(X|E)_\rho \right)\\
    &\qquad  = \max_{\alpha \in [\frac{1}{2},1]}\min_{\tau_{XE}} \left\{\frac{1-\alpha}{\alpha}R + D(\tau_E\|\rho_E)+ D(\tau_{XE}\|\rho_{XE})+\frac{\alpha-1}{\alpha}H(X|E)_{\tau}\right\} \\
    &\qquad  = \max_{\mu \in [0,1]}\min_{\tau_{XE}} \left\{  D(\tau_E\|\rho_E)+ D(\tau_{XE}\|\rho_{XE})+\mu(R-H(X|E)_{\tau})\right\} \\
    &\qquad  = \min_{\tau_{XE}} \max_{\mu \in [0,1]}\left\{  D(\tau_E\|\rho_E)+ D(\tau_{XE}\|\rho_{XE})+\mu(R-H(X|E)_{\tau})\right\}\\
    &\qquad  = \min_{\tau_{XE}} \left\{  D(\tau_E\|\rho_E)+ D(\tau_{XE}\|\rho_{XE})+|R-H(X|E)_{\tau}|^+)\right\}
\end{align}
In the second equality, we used that for a classical-quantum state \(\rho_{XE}\), the optimization over all quantum states can be restricted to classical-quantum states $\tau_{XE}$. 
Indeed, by the data-processing inequality for the Umegaki relative entropy, pinching a candidate state with respect to the eigenbasis of the classical register \(X\) does not increase the value of the objective function. 
Consequently, such a pinched state yields an equal or smaller value and is therefore sufficient to achieve the optimum in the variational expression. In the third equality, we set $\mu=(1-\alpha)/\alpha$ and used the Sion's minimax theorem~\cite{sion1958general} to exchange the maximum over $\mu$ and minimum over $\tau_{XE}$. Indeed, the objective function is linear (and hence concave) and continuous in \(\mu\), and convex in \(\tau_{XE}\). The latter follows from the joint convexity of the Umegaki relative entropy together with the concavity of the von Neumann conditional entropy. Moreover, the optimization over \(\tau_{XE}\) can be equivalently restricted to the compact set of states whose support is contained in the support of \(\rho_{XE}\). On this restricted domain, the objective function is finite and continuous in \(\tau_{XE}\). The final equality follows from an explicit optimization over \(\mu\).
\end{proof}

Next, we write $G(\rho_{XE},R)$ as the following minimization 
\begin{align}
        G(\rho_{XE},R)= \min \left\{  G_1(\rho_{XE},R),G_2(\rho_{XE},R)\right\} \,.
    \end{align}
    Here,
\begin{align}
        &G_1(\rho_{XE},R)= \inf_{\tau_{XE} \in \mathcal{F}_1} \left\{  D(\tau_E\|\rho_E)+ D(\tau_{XE}\|\rho_{XE})\right\} \\
        & G_2(\rho_{XE},R)= \inf_{\tau_{XE} \in \mathcal{F}_2} \left\{  D(\tau_E\|\rho_E)+ D(\tau_{XE}\|\rho_{XE})+R-H(X|E)_{\tau})\right\}
    \end{align}
where 
\begin{align}
    &\mathcal{F}_1=\{\tau_{XE}: R < H(X|E)_{\tau}\}\\
    &\mathcal{F}_2=\{\tau_{XE}: R \geq H(X|E)_{\tau}\}
\end{align}
We then show that the strong converse exponent is upper bounded by each of these two functions, and hence also by their minimum. 
We start with the first case.

\begin{lemma}
\label{lem: bound G1}
Let $\rho_{XE}$ be a classical-quantum state and $R\geq0$. We have
    \begin{align}
        E_{\pa}(\rho_{XE},R) \leq G_1(\rho_{XE},R)
    \end{align}
\end{lemma}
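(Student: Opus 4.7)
The plan is to fix an arbitrary $\tau_{XE} \in \mathcal{F}_1$ (so that $R < H(X|E)_\tau$), construct a sequence of hash functions that performs randomness extraction well on the tilted state $\tau^{\otimes n}$, and then transfer the resulting fidelity estimate back to $\rho^{\otimes n}$ via a change-of-measure inequality. Infimizing over $\tau \in \mathcal{F}_1$ then yields $E_{\pa}(\rho_{XE},R) \leq G_1(\rho_{XE},R)$.

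Since $R < H(X|E)_\tau$, the standard asymptotic achievability of privacy amplification against quantum side information (for instance via the leftover hash lemma combined with the asymptotic equipartition property for the conditional min-entropy) furnishes hash functions $h_n : \mathcal{X}^{\times n} \to \mathcal{Z}_n$ with $\log|\mathcal{Z}_n| \geq nR$ and
\[
\Delta_n \;:=\; D\!\left(\mathcal{R}_{h_n}(\tau_{XE}^{\otimes n}) \,\Big\|\, \tfrac{I_{\mathcal{Z}_n}}{|\mathcal{Z}_n|} \otimes \tau_E^{\otimes n}\right) \;\xrightarrow[n\to\infty]{}\; 0 .
\]
The same $h_n$ are then used on $\rho^{\otimes n}$. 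The main analytical ingredient is the universal inequality
\[
-\log F(\omega,\mu) \;=\; \widetilde{D}_{1/2}(\omega\|\mu) \;\leq\; D_{1/2,\infty}(\omega\|\mu) \;\leq\; D(\xi\|\omega) + D(\xi\|\mu),
\]
valid for arbitrary states $\omega,\mu,\xi$, which combines the standard ordering $\widetilde{D}_{\alpha} \leq D_{\alpha,\infty}$ for $\alpha \in (0,1)$ with the variational characterization~\eqref{eq: variational form LE} evaluated at $\alpha = 1/2$.

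I would apply this with $\omega = \mathcal{R}_{h_n}(\rho_{XE}^{\otimes n})$, $\mu = \tfrac{I_{\mathcal{Z}_n}}{|\mathcal{Z}_n|}\otimes\rho_E^{\otimes n}$, and the key choice $\xi = \mathcal{R}_{h_n}(\tau_{XE}^{\otimes n})$. Data processing and additivity of the Umegaki relative entropy bound the first term by $n\,D(\tau_{XE}\|\rho_{XE})$, while a direct expansion that uses the fact that the marginal of $\mathcal{R}_{h_n}(\tau^{\otimes n})$ on $E^n$ equals $\tau_E^{\otimes n}$ gives $D(\xi\|\mu) = \Delta_n + n\,D(\tau_E\|\rho_E)$. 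Summing, dividing by $n$, and taking $\limsup$ yields
\[
\limsup_{n\to\infty}\;-\tfrac{1}{n}\log F\!\left(\mathcal{R}_{h_n}(\rho^{\otimes n}),\,\tfrac{I_{\mathcal{Z}_n}}{|\mathcal{Z}_n|}\otimes\rho_E^{\otimes n}\right) \;\leq\; D(\tau_{XE}\|\rho_{XE}) + D(\tau_E\|\rho_E),
\]
and taking the infimum over $\tau \in \mathcal{F}_1$ completes the proof. The crux of the argument is the change-of-measure inequality combined with the clean choice of auxiliary state $\xi$, which keeps both resulting relative entropies asymptotically tractable; once this is in place, the remaining steps are routine applications of data processing, additivity, and cq-state algebra, so I do not anticipate a serious obstacle beyond verifying the vanishing of $\Delta_n$, which is standard achievability theory.
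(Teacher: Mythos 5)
Your argument is, in its skeleton, exactly the paper's proof: the same chain $-\log F=\widetilde{D}_{1/2}\le D_{1/2,\infty}$ (Araki--Lieb--Thirring) followed by the variational formula~\eqref{eq: variational form LE} at $\alpha=1/2$, the same trial state $\xi=\mathcal{R}_{h_n}(\tau_{XE}^{\otimes n})$, the same split $D(\xi\|\mu)=\Delta_n+nD(\tau_E\|\rho_E)$ using $\Tr_{Z_n}$-marginal $\tau_E^{\otimes n}$, data processing plus additivity for $D(\xi\|\omega)\le nD(\tau_{XE}\|\rho_{XE})$, and finally the infimum over $\tau\in\mathcal F_1$.

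The one place where your justification is too quick is the achievability input. You assert that the leftover hash lemma combined with the AEP for the smooth conditional min-entropy ``furnishes'' hash functions with $\Delta_n=D\big(\mathcal{R}_{h_n}(\tau_{XE}^{\otimes n})\,\big\|\,\tfrac{I_{\mathcal{Z}_n}}{|\mathcal{Z}_n|}\otimes\tau_E^{\otimes n}\big)\to 0$, but those tools give closeness in trace distance (or purified distance), not vanishing \emph{unnormalized} relative entropy, and relative entropy does not follow from trace-distance closeness for free. The paper instead cites Hayashi's precise equivocation result [hayashi2015precise, Theorem~1], which directly yields the relative-entropy convergence (together with a derandomization remark, since that statement is formulated for an averaged family of hash functions). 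Your route can be repaired: note that $\Delta_n=\log|\mathcal{Z}_n|-H(Z_n|E^n)$, obtain \emph{exponentially} small trace-distance error from the one-shot leftover hash bound together with the AEP at a rate strictly below $H(X|E)_\tau$, and then apply an Alicki--Fannes--Winter type continuity bound, whose error term scales like $\epsilon_n\log|\mathcal{Z}_n|\sim \epsilon_n\, nR$ and hence vanishes when $\epsilon_n$ decays exponentially; you also need the standard averaging argument to pass from a seeded two-universal family to a deterministic sequence $h_n$. Without these two additional steps the claim $\Delta_n\to 0$ is unsupported as written, although the overall strategy is sound and matches the paper's.
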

\begin{proof}
    According to the definition of $G_1(\rho_{XE},R)$, for any $\varepsilon > 0$, there exists a joint distribution $\tau_{XE}$ such that
    \begin{align}
        &R <H(X|Y)_\tau \\
        \label{eq: second constraint G1}
        &  D(\tau_E\|\rho_E)+ D(\tau_{XE}\|\rho_{XE}) < G_1(\rho_{XE},R)+\varepsilon
    \end{align}
The result~\cite[Theorem 1]{hayashi2015precise}
 implies that for any rate $R < H(X|Y)_{\tau_{XE}}$ 
 there exists a sequence of hash functions $\{h_n: \mc{X}^{\times n} \rightarrow \mc{Z}_n=\{1,\ldots,2^{nR}\}\}_{n \in \mathbb{N}}$ such that\footnote{Theorem 1 of~\cite{hayashi2015precise} is formulated in terms of an expectation over a family of deterministic hash functions, with the stochasticity captured by an auxiliary random variable determining which function is applied. However, as observed later in equation (28), since the expectation is smaller than the largest over all realizations, there must exist a deterministic hash function that attains the same upper bound. The fact that Theorem 1 implies that the relative entropy tends to zero for rates below the von Neumann conditional entropy is discussed after equation (29).}
\begin{equation}
\label{equ:f3}
\lim_{n \rightarrow \infty}D\left(\mc{R}_{h_n}(\tau_{XE}^{\otimes n})\middle\| \frac{I_{Z_n}}{|\mc{Z}_n|}  \otimes \tau_E^{\otimes n}\right)=0.
\end{equation}
Next, we use that the Log-Euclidean relative entropy is larger than the sandwiched relative entropy. This is a consequence of the Araki-Lieb-Thirring trace inequality (see e.g.,~\cite[Proposition 6]{lin2015investigating}). Hence, we have that
\begin{align}
    -\log F\left(\mathcal{R}_{h_n}(\rho_{XE}^{\otimes n}),\frac{I_{\mathcal{Z}_n}}{|\mathcal{Z}_n|}\otimes \rho_E^{\otimes n}\right) \leq D_{1/2,\infty}\left(\mathcal{R}_{h_n}(\rho_{XE}^{\otimes n})\middle\|\frac{I_{\mathcal{Z}_n}}{|\mathcal{Z}_n|}\otimes \rho_E^{\otimes n}\right) \,.
\end{align}
We then use the variational form of the Log-Euclidean relative entropy in~\eqref{eq: variational form LE} and
 upper bound the minimization over all states with the specific choice $\mathcal{R}_{h_n}(\tau_{XE}^{\otimes n})$. We have
\begin{align}
    &-\log F\left(\mathcal{R}_{h_n}(\rho_{XE}^{\otimes n}),\frac{I_{\mathcal{Z}_n}}{|\mathcal{Z}_n|}\otimes \rho_E^{\otimes n}\right)  \\
    \label{eq: bound fidelity with relative entropies}
    &\qquad \leq D\left(\mathcal{R}_{h_n}(\tau_{XE}^{\otimes n})\middle\|\frac{I_{\mathcal{Z}_n}}{|\mathcal{Z}_n|}\otimes \rho_E^{\otimes n}\right)+D\left(\mathcal{R}_{h_n}(\tau_{XE}^{\otimes n})\middle\|\mathcal{R}_{h_n}(\rho_{XE}^{\otimes n})\right) \\
    & \qquad = D\left(\mathcal{R}_{h_n}(\tau_{XE}^{\otimes n})\middle\|\frac{I_{\mathcal{Z}_n}}{|\mathcal{Z}_n|}\otimes \tau_E^{\otimes n}\right)+D(\tau_{E}^{\otimes n}\|\rho_{E}^{\otimes n})+D\left(\mathcal{R}_{h_n}(\tau_{XE}^{\otimes n})\|\mathcal{R}_{h_n}(\rho_{XE}^{\otimes n})\right)\,.
\end{align}
The last equality follows by expanding the relative entropy terms and observing that
\(\Tr_{X^n}\!\big[\mathcal{R}_{h_n}(\tau_{XE}^{\otimes n})\big]=\tau_E^{\otimes n}\), since the hashing map acts only on the register \(X^n\).

We then apply the data-processing inequality to the final term and use the additivity of the relative entropy for the second term to obtain
\begin{align}
    -\log F\left(\mathcal{R}_{h_n}(\rho_{XE}^{\otimes n}),\frac{I_{\mathcal{Z}_n}}{|\mathcal{Z}_n|}\otimes \rho_E^{\otimes n}\right) \leq D\left(\mathcal{R}_{h_n}(\tau_{XE}^{\otimes n})\middle\|\frac{I_{\mathcal{Z}_n}}{|\mathcal{Z}_n|}\otimes \tau_E^{\otimes n}\right)+nD(\tau_{E} \|\rho_{E})+nD\left(\tau_{XE}\|\rho_{XE}\right) \,.
\end{align}
We then divide $1/n$ and take the limit  on both sides to obtain
\begin{align}
\label{eq: eq for later}
    E_{\pa}(\rho_{XE},R) = \limsup_{n\rightarrow+\infty}-\frac{1}{n}\log F\left(\mathcal{R}_{h'_n}(\rho_{XE}^{\otimes n}),\frac{I_{\mathcal{Z}_n}}{|\mathcal{Z}_n|}\otimes \rho_E^{\otimes n}\right) \leq D(\tau_{E}\|\rho_{E})+D\left(\tau_{XE}\|\rho_{XE}\right) \,.
\end{align}
Here, we used that the first term vanishes because of equation~\eqref{equ:f3}. We also notice that
\begin{align}
    \frac{1}{n}\log{|\Z_n|} = R \,.
\end{align}
By the definition of the strong converse exponent together with the constraint~\eqref{eq: second constraint G1}, it follows that
\begin{align}
     E_{\pa}(\rho_{XE},R) \leq G_1(\rho_{XE},R)+\varepsilon
\end{align}
Letting $\varepsilon\rightarrow 0$ yields the result.

\end{proof}
Next, we prove that the strong converse exponent of privacy amplification is upper bounded by $G_2$.
\begin{lemma}
\label{lem: bound G2}
Let $\rho_{XE}$ be a classical-quantum state and $R\geq 0$. We have
    \begin{align}
        E_{\pa}(\rho_{XE},R) \leq G_2(\rho_{XE},R) \,.
    \end{align}
\end{lemma}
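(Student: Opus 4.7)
The plan is to mirror the proof of Lemma~\ref{lem: bound G1}, but to accommodate the regime $R \geq H(X|E)_\tau$ where the achievability result~\cite[Theorem~1]{hayashi2015precise} cannot be applied directly at rate $R$. The key idea is a padding trick: extract bits at an auxiliary rate $R_0 < H(X|E)_\tau$ using Hayashi's result, then pad the hash output with constant bits to reach rate $R$. The padding contributes precisely $n(R - R_0)$ to the relative entropy, which in the limit $R_0 \to H(X|E)_\tau$ produces the $R - H(X|E)_\tau$ summand appearing in $G_2$.

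Concretely, fix $\tau_{XE} \in \mathcal{F}_2$ and $\varepsilon > 0$, and choose $R_0 := \max\{H(X|E)_\tau - \varepsilon,\, 0\}$, so that $R_0 < H(X|E)_\tau$ (the edge case $H(X|E)_\tau = 0$ reduces to $R_0 = 0$ with a trivial inner hash). By~\cite[Theorem~1]{hayashi2015precise}, there exists a sequence of hash functions $h^{(0)}_n : \mathcal{X}^{\times n} \to \mathcal{Z}^{(0)}_n$ with $|\mathcal{Z}^{(0)}_n| = 2^{nR_0}$ and
\begin{equation*}
\lim_{n\to\infty} D\!\left(\mathcal{R}_{h^{(0)}_n}(\tau_{XE}^{\otimes n}) \,\middle\|\, \tfrac{I_{\mathcal{Z}^{(0)}_n}}{|\mathcal{Z}^{(0)}_n|} \otimes \tau_E^{\otimes n}\right) = 0.
\end{equation*}
Define the padded hash $h_n: \mathcal{X}^{\times n} \to \mathcal{Z}_n := \mathcal{Z}^{(0)}_n \times \mathcal{Z}''_n$, with $|\mathcal{Z}''_n| = 2^{n(R-R_0)}$, by $h_n(x) := (h^{(0)}_n(x),\, 1)$. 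Then $|\mathcal{Z}_n| = 2^{nR}$, so the rate constraint $\liminf_n \tfrac{1}{n}\log|\mathcal{Z}_n| \geq R$ is satisfied.

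Next, applying the same chain of inequalities as in Lemma~\ref{lem: bound G1}, i.e., $-\log F \leq D_{1/2,\infty}$ followed by the variational formula~\eqref{eq: variational form LE} evaluated at $\mu = \mathcal{R}_{h_n}(\tau_{XE}^{\otimes n})$, data processing, and additivity, gives
\begin{equation*}
-\log F\!\left(\mathcal{R}_{h_n}(\rho_{XE}^{\otimes n}), \tfrac{I_{\mathcal{Z}_n}}{|\mathcal{Z}_n|} \otimes \rho_E^{\otimes n}\right) \leq D\!\left(\mathcal{R}_{h_n}(\tau_{XE}^{\otimes n}) \,\middle\|\, \tfrac{I_{\mathcal{Z}_n}}{|\mathcal{Z}_n|} \otimes \tau_E^{\otimes n}\right) + n D(\tau_E\|\rho_E) + n D(\tau_{XE}\|\rho_{XE}).
\end{equation*}
The new ingredient is the product structure $\mathcal{R}_{h_n}(\tau_{XE}^{\otimes n}) = \mathcal{R}_{h^{(0)}_n}(\tau_{XE}^{\otimes n}) \otimes \ketbra{1}{1}_{\mathcal{Z}''_n}$, which by additivity of the relative entropy yields
\begin{equation*}
D\!\left(\mathcal{R}_{h_n}(\tau_{XE}^{\otimes n}) \,\middle\|\, \tfrac{I_{\mathcal{Z}_n}}{|\mathcal{Z}_n|} \otimes \tau_E^{\otimes n}\right) = D\!\left(\mathcal{R}_{h^{(0)}_n}(\tau_{XE}^{\otimes n}) \,\middle\|\, \tfrac{I_{\mathcal{Z}^{(0)}_n}}{|\mathcal{Z}^{(0)}_n|} \otimes \tau_E^{\otimes n}\right) + \log|\mathcal{Z}''_n| = o(n) + n(R - R_0).
\end{equation*}
Dividing by $n$, taking $\limsup_n$, then $\varepsilon \to 0$, and finally the infimum over $\tau_{XE} \in \mathcal{F}_2$ yields $E_{\pa}(\rho_{XE}, R) \leq G_2(\rho_{XE}, R)$.

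Once the padding idea is in place, the argument is a direct adaptation of Lemma~\ref{lem: bound G1} with no significant technical obstacle; the only conceptual point worth naming is recognizing that the $R - H(X|E)_\tau$ summand in $G_2$ is precisely the deterministic cost of padding extra bits that cannot be made uniform against $E$ when the extraction rate exceeds the conditional entropy, which is cleanly encoded by the $\log|\mathcal{Z}''_n|$ term above.
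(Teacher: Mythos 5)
Your proposal is correct and follows essentially the same route as the paper: the paper also runs the Lemma~\ref{lem: bound G1} machinery at an auxiliary rate $R'=H(X|E)_\tau-\varepsilon$ below the conditional entropy and then embeds the rate-$R'$ hash into the larger alphabet $\mathcal{Z}_n$ of size $2^{nR}$, which is exactly your constant-padding construction. The only cosmetic difference is where the penalty $n(R-R')$ is booked --- the paper extracts it directly from the fidelity via the rescaled uniform reference $I_{\mathcal{Z}'_n}/|\mathcal{Z}_n|$, whereas you extract it from the relative entropy term via tensor additivity over the padded register --- and these are equivalent computations.
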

\begin{proof}
    By definition of \(G_2(\rho_{XE},R)\), for every \(\varepsilon>0\) there exists a joint distribution \(\tau_{XE}\) such that
    \begin{align}
        &H(X|Y)_\tau \leq R \\
        \label{eq: second constraint G2}
        &  G_2(\rho_{XE},R) =D(\tau_E\|\rho_E)+ D(\tau_{XE}\|\rho_{XE}) + R-H(X|Y)_{\tau} 
    \end{align}
    For any $\varepsilon>0$, let $R'= H(X|E)_\tau-\varepsilon$. Since, $R'<  H(X|E)_\tau$, Equation~\eqref{eq: eq for later} in the proof of the Lemma~\ref{lem: bound G1} 
implies that there exists a sequence of hash functions $\{h'_n: \mc{X}^{\times n} \rightarrow \Z'_n=\{1,\ldots,2^{nR'}\}\}_{n \in \mathbb{N}}$ such that 
\begin{equation}
\label{equ:from other}
\limsup_{n\rightarrow+\infty}-\frac{1}{n}\log F\left(\mathcal{R}_{h'_n}(\rho_{XE}^{\otimes n}),\frac{I_{\Z'_n}}{|\Z'_n|}\otimes \rho_E^{\otimes n}\right) \leq D(\tau_{E}\|\rho_{E})+D\left(\tau_{XE}\|\rho_{XE}\right) 
\end{equation}
Note that by definition, it holds that $R'< R$. We then construct a new sequence of hash functions  $\{h_n~: \mc{X}^{\times n} \rightarrow \mc{Z}_n=\{1,\ldots,2^{nR}\}\}_{n \in \mathbb{N}}$ by embedding into a larger space the above sequence of hash functions $\{\mc{R}_{h'_n}\}_{n \in \mathbb{N}}$. We then have
\begin{align}
    &-\log F\left(\mathcal{R}_{h_n}(\rho_{XE}^{\otimes n}),\frac{I_{\mathcal{Z}_n}}{|\mathcal{Z}_n|}\otimes \rho_E^{\otimes n}\right) \\
    & \qquad \qquad =-\log F\left(\mathcal{R}_{h'_n}(\rho_{XE}^{\otimes n}),\frac{I_{\Z'_n}}{|\mathcal{Z}_n|}\otimes \rho_E^{\otimes n}\right) \\
    & \qquad \qquad =-\log F\left(\mathcal{R}_{h'_n}(\rho_{XE}^{\otimes n}),\frac{I_{\Z'_n}}{|\mathcal{Z}_n|}\otimes \rho_E^{\otimes n}\right) \\
    & \qquad \qquad =-\log F\left(\mathcal{R}_{h'_n}(\rho_{XE}^{\otimes n}),\frac{I_{\Z'_n}}{|\Z'_n|}\otimes \rho_E^{\otimes n}\right) + n(R-R')
\end{align}
We then multiply by $1/n$ and taking the limit $n\rightarrow +\infty$ on both sides. We then use equation~\eqref{equ:from other} to obtain 
\begin{align}
    E_{\pa}(\rho_{XE},R) &\leq D(\tau_{E}\|\rho_{E})+D\left(\tau_{XE}\|\rho_{XE}\right) + (R-R') \\
    & = G_2(\rho_{XE},R)+\varepsilon
\end{align}
Letting $\varepsilon \rightarrow 0$ concludes the proof.

\end{proof}

Lemma~\ref{lem: bound G1} and~\ref{lem: bound G2} prove that the strong converse exponent is smaller than both $G_1$ and $G_2$. Hence, it is also smaller than their minimum
\begin{align}
    E_{\pa}(\rho_{XE},R) \leq \min\{G_1(\rho_{XE},R),G_2(\rho_{XE},R)\} = G(\rho_{XE},R) = \max_{\alpha \in [\frac{1}{2},1]}\frac{1-\alpha}{\alpha}\left( R-H_{\alpha,\infty}^\frac{1-2\alpha}{1-\alpha}(X|E)_{\rho} \right) 
\end{align}
Hence, as a corollary, we obtain the following bound in terms of the LE conditional entropy.
\begin{corollary}
\label{cor: Bound LE}
Let $\rho_{XE}$ be a classical-quantum state and $R\geq 0$. We have
    \begin{align}
         E_{\pa}(\rho_{XE},R) \leq \max_{\alpha \in [\frac{1}{2},1]}\frac{1-\alpha}{\alpha} \left( R-H_{\alpha,\infty}^\frac{1-2\alpha}{1-\alpha}(X|E)_{\rho} \right)
    \end{align}
\end{corollary}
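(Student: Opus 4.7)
The plan is to obtain this corollary as an immediate consequence of Lemmas~\ref{lem: bound G1} and~\ref{lem: bound G2} together with the variational expression for $G(\rho_{XE},R)$ established earlier in this subsection, decomposed according to the sign of $R - H(X|E)_\tau$.

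First, I would observe that the variational formula
\begin{align}
G(\rho_{XE},R) = \min_{\tau_{XE}} \left\{ D(\tau_E\|\rho_E) + D(\tau_{XE}\|\rho_{XE}) + |R - H(X|E)_\tau|^+ \right\}
\end{align}
splits naturally as $G(\rho_{XE},R) = \min\{ G_1(\rho_{XE},R), G_2(\rho_{XE},R) \}$. This is because the sets $\mathcal{F}_1$ and $\mathcal{F}_2$ partition the admissible joint classical-quantum states $\tau_{XE}$ according to whether $R < H(X|E)_\tau$ or $R \geq H(X|E)_\tau$, and on each region the positive-part term $|R - H(X|E)_\tau|^+$ collapses to either $0$ or $R - H(X|E)_\tau$, reproducing exactly the objective functions defining $G_1$ and $G_2$ respectively.

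Next, I would simply combine the two upper bounds $E_{\pa}(\rho_{XE},R) \leq G_1(\rho_{XE},R)$ from Lemma~\ref{lem: bound G1} and $E_{\pa}(\rho_{XE},R) \leq G_2(\rho_{XE},R)$ from Lemma~\ref{lem: bound G2} to conclude
\begin{align}
E_{\pa}(\rho_{XE},R) \leq \min\{ G_1(\rho_{XE},R), G_2(\rho_{XE},R) \} = G(\rho_{XE},R),
\end{align}
and then substitute the definition of $G(\rho_{XE},R)$ from equation~\eqref{eq: definition G} to arrive at the claimed bound in terms of the LE conditional entropy.

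Since all the technical content is already encapsulated in Lemmas~\ref{lem: bound G1} and~\ref{lem: bound G2}, as well as in the variational rewriting of $G(\rho_{XE},R)$ in terms of Umegaki relative entropies, there is no substantive obstacle at this stage; the corollary is a direct repackaging of these preceding results. The only point worth verifying carefully is that the partition $\mathcal{F}_1 \cup \mathcal{F}_2$ covers all relevant candidate states so that no infimum is lost when passing to the minimum of the two regional bounds, which is immediate from the definitions.
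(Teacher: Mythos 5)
Your proposal is correct and follows exactly the paper's own route: combine Lemmas~\ref{lem: bound G1} and~\ref{lem: bound G2} to bound $E_{\pa}$ by $\min\{G_1,G_2\}$, identify this minimum with $G(\rho_{XE},R)$ via the split of $|R-H(X|E)_\tau|^+$ over the regions $\mathcal F_1$ and $\mathcal F_2$, and then use the variational expression~\eqref{eq: definition G} to restate the bound in terms of the LE conditional entropy. No gaps to report.
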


\subsection{A club-sandwiched conditional entropy bound}

The final step is to derive an analogous upper bound as above in terms of the club-sandwiched conditional entropy. The proof is similar to that of~\cite{li2024operational} and leverages a pinching argument.

The first step is to show that the strong converse exponent is essentially unaffected by pinching the state with respect to an operator whose spectral cardinality grows only polynomially with the number of copies. This property follows from the fact that the fidelity belongs to the family of sandwiched relative entropies. For this class of quantities, it is known that an additional pinching operation can be asymptotically inserted without affecting their value; see, for instance,~\cite[Proposition~4.12]{tomamichel16_book}. Since we will later require the pinched $\rho_{XE}^{\otimes m}$ to commute with both $\rho_E^{\otimes m}$ and $\omega_{E^m}$, we consider the composition of the corresponding pinching maps $\mathcal P_{\rho_E^{\otimes m}}\circ \mathcal P_{\omega_{E^m}}(\rho_{XE}^{\otimes m})$.

\begin{lemma}
\label{lem: LE bound with pinched state}
Let $\rho_{XE}$ be a classical-quantum state and $R\geq 0$. Moreover, let  $\tilde{\rho}_{X^mE^m}=\P_{\rho_E^{\otimes m}} \circ\P_{\omega_{E^m}}(\rho_{XE}^{\otimes m})$. For any $m\geq 1$, it holds that
    \begin{align}
    E_{\pa}(\rho_{XE},R) \leq \max_{\alpha \in [\frac{1}{2},1]}\frac{1-\alpha}{\alpha}\left(R-\frac{1}{m}H_{\alpha,\infty}^\frac{1-2\alpha}{1-\alpha}(X^m|E^m)_{\widetilde{\rho}}\right) + \frac{2}{m} (d_E-1)\log{(m+1)}
\end{align}
\end{lemma}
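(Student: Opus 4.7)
The plan is to derive the inequality from Corollary~\ref{cor: Bound LE} applied to the $m$-blocked state, combined with a pinching argument that trades $\rho_{XE}^{\otimes m}$ for its pinched counterpart $\tilde{\rho}_{X^mE^m}$ at a polynomial-in-$m$ additive cost. As a first step, I would view $\rho_{XE}^{\otimes m}$ as a single classical-quantum state on $X^m\otimes E^m$ and apply Corollary~\ref{cor: Bound LE} to it at rate $mR$. Combined with the standard blocking inequality $E_{\pa}(\rho_{XE},R)\le \frac{1}{m}\,E_{\pa}(\rho_{XE}^{\otimes m}, mR)$, obtained by grouping $n = km$ copies into $k$ blocks of size $m$ (with the $o(1)$ remainder handled in the usual way), this immediately yields
\begin{equation*}
E_{\pa}(\rho_{XE},R)\le \max_{\alpha\in[\frac{1}{2},1]}\frac{1-\alpha}{\alpha}\!\left(R-\frac{1}{m}H_{\alpha,\infty}^{\frac{1-2\alpha}{1-\alpha}}(X^m|E^m)_{\rho^{\otimes m}}\right).
\end{equation*}

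The remaining task is to replace $\rho^{\otimes m}$ by $\tilde{\rho}$ in the LE conditional entropy. Since the universal state $\omega_{E^m}$ is permutation invariant, it commutes with $\rho_E^{\otimes m}$; hence the two pinchings commute, one checks directly that $\tilde{\rho}_{E^m}=\rho_E^{\otimes m}$, and Hayashi's pinching inequality applied successively gives
\begin{equation*}
\rho_{XE}^{\otimes m}\le v\cdot \tilde{\rho}_{X^mE^m}, \qquad v:=|\mathrm{spec}(\omega_{E^m})|\cdot|\mathrm{spec}(\rho_E^{\otimes m})|\le (m+1)^{2(d_E-1)},
\end{equation*}
using the type-counting bound $|\mathrm{spec}(\rho_E^{\otimes m})|\le (m+1)^{d_E-1}$ and the analogous bound on the universal state from Lemma~\ref{lem: universal state}. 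Operator monotonicity of $\log$ then gives $\alpha\log\rho_{XE}^{\otimes m}\le \alpha\log v + \alpha\log\tilde{\rho}$ on the support of $\rho^{\otimes m}$, and since $\tilde{\rho}_{E^m}=\rho_E^{\otimes m}$ the remaining terms $(1-\alpha)\bigl((1-\lambda)\log\rho_E^{\otimes m}+\lambda\log\sigma_{E^m}\bigr)$ appearing in the LE integrand are identical for both states for every fixed admissible $\sigma_{E^m}$. Applying monotonicity of $\Tr\,\exp(\,\cdot\,)$ in its operator argument and then minimizing over $\sigma_{E^m}$ yields
\begin{equation*}
H_{\alpha,\infty}^{\frac{1-2\alpha}{1-\alpha}}(X^m|E^m)_{\rho^{\otimes m}}\le \frac{\alpha}{1-\alpha}\log v + H_{\alpha,\infty}^{\frac{1-2\alpha}{1-\alpha}}(X^m|E^m)_{\tilde{\rho}}.
\end{equation*}

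Substituting this into the bound from the first step, the prefactor $\frac{1-\alpha}{\alpha}$ cancels the factor $\frac{\alpha}{1-\alpha}$ multiplying $\log v$, leaving an $\alpha$-independent additive term $\frac{\log v}{m}\le \frac{2(d_E-1)\log(m+1)}{m}$ that can be pulled outside the maximum, delivering exactly the claimed bound. The main technical obstacle I anticipate is careful bookkeeping of the support projectors in the definition of the LE conditional entropy: a pinching on $E^m$ may strictly enlarge the support of the state, so the inequality $\log\rho^{\otimes m}\le \log v+\log\tilde{\rho}$ is meaningful only on $\supp(\rho^{\otimes m})$ and must be combined consistently with the convention that the objective function is set to $+\infty$ off-support. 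This is most cleanly resolved by first establishing the comparison in the full-rank case (cf.~Appendix~\ref{app: non full-rank}) and passing to the limit; beyond this the argument is purely algebraic.
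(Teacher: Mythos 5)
There is a genuine gap: the entropy comparison in your second step points in the wrong direction for the substitution you make at the end. From the pinching inequality $\rho_{XE}^{\otimes m}\le v\,\tilde\rho_{X^mE^m}$ and operator monotonicity you obtain (modulo support issues) $H_{\alpha,\infty}^{\frac{1-2\alpha}{1-\alpha}}(X^m|E^m)_{\rho^{\otimes m}}\le \frac{\alpha}{1-\alpha}\log v + H_{\alpha,\infty}^{\frac{1-2\alpha}{1-\alpha}}(X^m|E^m)_{\tilde\rho}$, i.e.\ an \emph{upper} bound on the entropy of $\rho^{\otimes m}$, equivalently a \emph{lower} bound on $-\tfrac1m H_{\alpha,\infty}(\rho^{\otimes m})$. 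But your first step gives $E_{\pa}(\rho_{XE},R)\le \max_\alpha \frac{1-\alpha}{\alpha}\bigl(R-\tfrac1m H_{\alpha,\infty}(X^m|E^m)_{\rho^{\otimes m}}\bigr)$, and to convert this into the claimed bound you would need the reverse inequality, $H_{\alpha,\infty}(X^m|E^m)_{\tilde\rho}\le H_{\alpha,\infty}(X^m|E^m)_{\rho^{\otimes m}}+\frac{\alpha}{1-\alpha}\log v$; plugging a lower bound on $-\tfrac1m H_{\alpha,\infty}(\rho^{\otimes m})$ into an upper bound on $E_{\pa}$ proves nothing. The direction you would need cannot be produced by the same trick: it would require an operator inequality of the form $\tilde\rho\le c\,\rho^{\otimes m}$, which fails because pinching on $E$ generically enlarges the support of the state (indeed, data processing only gives $\widetilde H_\alpha^\lambda(\tilde\rho)\ge\widetilde H_\alpha^\lambda(\rho^{\otimes m})$, which is useful later in Proposition~\ref{prop: bound LE} precisely because there the entropy of $\tilde\rho$ enters with the favourable sign).

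The paper avoids comparing $H_{\alpha,\infty}$ of $\rho^{\otimes m}$ and $\tilde\rho$ altogether. It runs the achievability argument behind Corollary~\ref{cor: Bound LE} with the pinched state $\tilde\rho_{X^mE^m}$ playing the role of the source, which produces hash functions $h_{m,k}$ whose fidelity on $\tilde\rho^{\otimes k}$ (with reference $\rho_E^{\otimes mk}$, unchanged by the pinching, cf.\ Lemma~\ref{lem: marginal of commuting pinching}) decays with exponent at most $\max_\alpha\frac{1-\alpha}{\alpha}\bigl(mR-H_{\alpha,\infty}(X^m|E^m)_{\tilde\rho}\bigr)$. The pinching cost is then paid at the level of the \emph{fidelity}, not of the entropies: two applications of Lemma~\ref{lem: pinching lemma} show that the same (suitably extended) hash functions applied to the true state $\rho_{XE}^{\otimes n}$ have $-\log F$ at most that on the pinched state plus $2k(d_E-1)\log(m+1)$, which after normalising by $n\ge mk$ yields the additive term $\frac{2}{m}(d_E-1)\log(m+1)$. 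Your blocking step and the type-counting constants are fine, but without this operational use of the pinching inequality (or a proof of the reverse entropy comparison, which your argument does not supply) the proposal does not establish the lemma.
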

\begin{proof}
We fix $m\geq 1$. The proof of Corollary~\ref{cor: Bound LE} applied to this state implies that there exists a sequence of hash functions $\{h_{m,k}: \X^{\times mk} \rightarrow Z_{m,k}\}_{k\in\mathbb{N}}$ such that 
\begin{align}
\label{eq: hashing m,k}
    \liminf_{k\rightarrow \infty}\frac{1}{k}\log{|\Z_{m,k}|}\geq mR
\end{align}
and 
\begin{align}
\label{eq: hashing previous lemma}
\limsup_{k\rightarrow\infty}-\frac{1}{k}\log F\left(\R_{h_{m,k}}(\widetilde{\rho}_{X^mE^m}^{\otimes k}),\frac{I_{\Z_{m,k}}}{|\Z_{m,k}|}\otimes \rho_E^{\otimes mk}\right) \leq \max_{\alpha \in [\frac{1}{2},1]}\frac{1-\alpha}{\alpha}\{mR-H_{\alpha,\infty}^\frac{1-2\alpha}{1-\alpha}(X^m|E^m)_{\widetilde{\rho}_{X^mE^m}}\}
    \end{align}
    Here, in the second argument of the fidelity, we used that the marginal on $E^m$ is invariant under the applied pinching. Indeed, since the hash function acts only on the first marginal, and since the universal state $\omega_{E^m}$ commutes with $\rho_E^{\otimes m}$, Lemma~\ref{lem: marginal of commuting pinching} implies that
    \begin{align}
        \Tr_{X^{mk}}[\R_{h_{m,k}}(\widetilde{\rho}_{X^mE^m}^{\otimes k})] = \big(\mathcal{P}_{\rho_E^{\otimes m}} \circ \mathcal{P}_{\omega_E^{\otimes m}}(\rho_E^{\otimes m})\big)^{\otimes k} = \rho_E^{\otimes mk} \,.
    \end{align}
Next, for each such \(n\), we construct a hashing map \(\mathcal{R}_{h_n}\) from \(\mathcal{R}_{h_{m,k}}\) as follows. 
We first determine the integer \(k\) such that \(n \in [km,(k+1)m)\), and choose
\(\mathcal Z_n = \mathcal Z_{m,k}\). Here, note that, for all \(n \in [km,(k+1)m)\), the value of $k$ remains fixed. 
We then define \(h_n\) by
\begin{align}
    h_n(x_1,\ldots,x_{mk},x_{mk+1},\ldots,x_n)
:=  h_n(x_1,\ldots,x_{mk}).
\end{align}
It is straightforward to verify that
\begin{align}
    \mathcal  \R_{h_n}\bigl(\rho_{XE}^{\otimes n}\bigr)
=
\mathcal  \R_{h_{m,k}}\bigl(\rho_{XE}^{\otimes mk}\bigr)
\otimes
\rho_E^{\otimes (n-mk)}.
\end{align}
Moreover, since the choice of $k$ depends on $n$, and we can write $n = mk+l$, with $0\leq l \leq m-1$, we have that
\begin{align}
    \liminf_{n\rightarrow \infty}\frac{1}{n}\log{|\Z_{n}|} &\geq  \liminf_{n\rightarrow \infty}\frac{1}{mk(n)+l(n)}\log{|\Z_{m,k(n)}|} \\
    &\geq  \liminf_{n\rightarrow \infty}\frac{1}{mk(n)+m-1}\log{|\Z_{m,k(n)}|}\\
    &=\liminf_{k\rightarrow \infty} \frac{mk}{mk+m-1}\frac{1}{mk}\log{|\Z_{m,k}|} \\
    & = \frac{1}{m} \liminf_{k\rightarrow \infty} \frac{1}{k}\log{|\Z_{m,k}|}\\
    &\geq R\,.
\end{align}
Here, in order to rewrite the limit in $n$ as a limit in $k$, we used that repeating each term a fixed number $m$ of times leaves the $\liminf$ unchanged. In addition, in the last equality, we used that $\lim_{k\to\infty}\frac{mk}{mk+m-1}=1$.
Finally, the last inequality follows from Eq.~\eqref{eq: hashing m,k}. Hence, the hash functions $\mathcal R_n$ achieve extraction rate $R$, and therefore can be used to upper bound the strong converse exponent.

We then insert the pinching operators. We  apply twice the inequality in Lemma~\ref{lem: pinching lemma} together with
\begin{align}
    & \max\{|\mathrm{spec}(\rho_E^{\otimes m})|, |\mathrm{spec}(\omega_{E^{m}})|\} \leq (m+1)^{d_E-1} \,.
\end{align}
To apply this lemma, we must check that $\rho_E^{\otimes m}$ is block-diagonal with respect to the joint eigenspaces of $\rho_E^{\otimes m}$ and $\omega_{E^m}$. This follows from the fact that $\rho_E^{\otimes m}$ commutes with $\omega_{E^m}$ and is therefore invariant under pinching with respect to either operator, i.e.,
\begin{equation}
\mathcal{P}_{\rho_E^{\otimes m}}\!\left(\rho_E^{\otimes m}\right)
=
\mathcal{P}_{\omega_{E^m}}\!\left(\rho_E^{\otimes m}\right)
=
\rho_E^{\otimes m}.
\end{equation}
We then obtain
\begin{align}
    &-\log F\left(\R_{h_{m,k}}(\widetilde{\rho}_{X^mE^m}^{\otimes k}),\frac{I_{\Z_{m,k}}}{|\Z_{m,k}|}\otimes \rho_E^{\otimes mk}\right) \\
    & \qquad\qquad = -\log F\left(\mathcal{P}^{\otimes k}_{\rho_E^{\otimes m}}(\R_{h_{m,k}}(\mathcal{P}^{\otimes k}_{\omega_E^{m}}(\rho_{XE}^{\otimes m k}))),\frac{I_{\Z_{m,k}}}{|\Z_{m,k}|}\otimes \rho_E^{\otimes mk}\right) \\
    & \qquad \qquad \geq -\log F\left(\R_{h_{m,k}}(\mathcal{P}^{\otimes k}_{\omega_E^{m}}(\rho_{XE}^{\otimes m k})),\frac{I_{\Z_{m,k}}}{|\Z_{m,k}|}\otimes \rho_E^{\otimes mk}\right) - k(d_E-1)\log{(m+1)} \\
    & \qquad \qquad \geq -\log F\left(\R_{h_{m,k}}(\rho_{XE}^{\otimes m k})\otimes \rho_E^{\otimes (n-mk)},\frac{I_{\Z_{m,k}}}{|\Z_{m,k}|}\otimes \rho_E^{\otimes n}\right) - 2k(d_E-1)\log{(m+1)} \\
     & \qquad \qquad = -\log F\left(\R_{h_{n}}(\rho_{XE}^{\otimes n}),\frac{I_{\Z_{m,k}}}{|\Z_{m,k}|}\otimes \rho_E^{\otimes n}\right) - 2k(d_E-1)\log{(m+1)}
\end{align}
Using this inequality, the fact that $n \geq mk$ together with the positivity of the negative logarithm of the fidelity between states, and inequality~\eqref{eq: hashing previous lemma}, we can upper bound the strong converse exponent as
\begin{align}
    E_{\pa}(\rho_{XE},R) &\leq \limsup_{n\rightarrow\infty}-\frac{1}{n}\log F\left(\R_{h_{n}}(\rho_{XE}^{\otimes n}),\frac{I_{\Z_{m,k}}}{|\Z_{m,k}|}\otimes \rho_E^{\otimes n}\right) \\
    & \leq \limsup_{k\rightarrow\infty}-\frac{1}{mk}\log F\left(\R_{h_{m,k}}(\widetilde{\rho}_{X^mE^m}^{\otimes k}),\frac{I_{\Z_{m,k}}}{|\Z_{m,k}|}\otimes \rho_E^{\otimes mk}\right) + 2\frac{1}{m} (d_E-1)\log{(m+1)} \\
    & \leq \max_{\alpha \in [\frac{1}{2},1]}\frac{1-\alpha}{\alpha}\left( R-\frac{1}{m}H_{\alpha,\infty}^\frac{1-2\alpha}{1-\alpha}(X^m|E^m)_{\widetilde{\rho}} \right) + 2\frac{1}{m} (d_E-1)\log{(m+1)}
\end{align}
\end{proof}
In the next step, we leverage the pinching structure in the above expression to establish that, 
for large $m$, the LE conditional entropy approaches the 
corresponding club-sandwiched conditional entropy.

\begin{proposition}
\label{prop: bound LE}
Let $\rho_{XE}$ be a classical-quantum state and $R\geq 0$. We have
    \begin{align}
         E_{\pa}(\rho_{XE},R) \leq \max_{\alpha \in [\frac{1}{2},1]}\frac{1-\alpha}{\alpha}\left( R-\widetilde{H}_{\alpha}^\frac{1-2\alpha}{1-\alpha}(X|E)_{\rho} \right)
    \end{align}
\end{proposition}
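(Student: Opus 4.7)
The plan is to take $m\to\infty$ in the inequality from Lemma~\ref{lem: LE bound with pinched state}. The correction term $\tfrac{2}{m}(d_E-1)\log(m+1)$ vanishes, so it suffices to prove that for every $\alpha\in[1/2,1]$,
\begin{align}
\frac{1}{m}\,H_{\alpha,\infty}^{\frac{1-2\alpha}{1-\alpha}}(X^m|E^m)_{\widetilde{\rho}}\;\geq\;\widetilde{H}_{\alpha}^{\frac{1-2\alpha}{1-\alpha}}(X|E)_{\rho}\,-\,o_m(1),
\end{align}
where the error $o_m(1)$ is uniform in $\alpha$ on the compact interval $[1/2,1]$. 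This uniformity is what allows the $\limsup$ in $m$ to be pulled inside the maximum over $\alpha$.

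The central structural fact is that, by construction, $\widetilde{\rho}_{X^mE^m}$ commutes with both $\rho_E^{\otimes m}$ and $\omega_{E^m}$, and the latter two also commute with each other. The first step is to show that the infimum over auxiliary states $\sigma_{E^m}$ in $H_{\alpha,\infty}^{\lambda}(X^m|E^m)_{\widetilde{\rho}}$ can be restricted to states commuting with $\omega_{E^m}$ at a cost that is at most logarithmic in $m$: replacing a candidate $\sigma_{E^m}$ by its pinching $\mathcal{P}_{\omega_{E^m}}(\sigma_{E^m})$ does not change the Log-Euclidean trace expression beyond a polynomial-in-$m$ factor, which one argues from operator Jensen inequalities applied to the logarithm together with the universality of $\omega_{E^m}$. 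Once the optimization is restricted to commuting $\sigma_{E^m}$, the Log-Euclidean and club-sandwiched objective functions coincide, so
\begin{align}
H_{\alpha,\infty}^{\lambda}(X^m|E^m)_{\widetilde{\rho}}\;\geq\;\widetilde{H}_{\alpha}^{\lambda}(X^m|E^m)_{\widetilde{\rho}}\,-\,O(\log m).
\end{align}

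The second step removes the pinching and invokes additivity. At the boundary value $\lambda=(1-2\alpha)/(1-\alpha)$, the club-sandwiched conditional entropy satisfies the data-processing inequality~\cite[Theorem~5.1]{rubboli2024quantum}, and the double pinching is a channel on the conditioning system $E^m$, so
\begin{align}
\widetilde{H}_{\alpha}^{\lambda}(X^m|E^m)_{\widetilde{\rho}}\;\geq\;\widetilde{H}_{\alpha}^{\lambda}(X^m|E^m)_{\rho^{\otimes m}}\;=\;m\,\widetilde{H}_{\alpha}^{\lambda}(X|E)_{\rho},
\end{align}
the equality being additivity on tensor products~\cite[Theorem~4.1]{rubboli2024quantum}. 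Dividing by $m$ and combining with the first step delivers the desired asymptotic bound.

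The main technical obstacle is the first step, namely quantitatively controlling the error one incurs by restricting the Log-Euclidean infimum to auxiliary states commuting with $\omega_{E^m}$. This is precisely where the universality property of $\omega_{E^m}$—its polynomial-in-$m$ domination of all permutation-symmetric states on $E^{\otimes m}$—is indispensable, since it guarantees that the incurred error is at most logarithmic in $m$ and hence negligible after normalisation by $1/m$. The remaining ingredients are already in place: data processing and additivity of the club-sandwiched conditional entropy are established in~\cite{rubboli2024quantum}, and the final exchange of $\limsup_m$ with $\max_{\alpha\in[1/2,1]}$ is a standard continuity/compactness argument once the error in $m$ has been made uniform in $\alpha$.
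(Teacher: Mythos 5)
Your overall architecture matches the paper's: start from Lemma~\ref{lem: LE bound with pinched state}, lower bound $\frac1m H_{\alpha,\infty}^{\frac{1-2\alpha}{1-\alpha}}(X^m|E^m)_{\widetilde\rho}$ by $\widetilde H_\alpha^{\frac{1-2\alpha}{1-\alpha}}(X|E)_\rho$ up to $O(\tfrac{\log m}{m})$ terms, then use data processing and additivity of the club-sandwiched entropy and let $m\to\infty$ (uniformity in $\alpha$ is automatic here, since the correction terms can be bounded independently of $\alpha$). However, your first step contains a genuine gap. You claim that restricting the infimum in the LE conditional entropy to auxiliary states commuting with $\omega_{E^m}$ costs only $O(\log m)$, and that ``once the optimization is restricted to commuting $\sigma_{E^m}$, the Log-Euclidean and club-sandwiched objective functions coincide.'' Neither part holds as stated. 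Commutation of $\sigma_{E^m}$ with $\omega_{E^m}$ alone is not enough: for the two objectives to coincide you need the three operators $\widetilde\rho_{X^mE^m}$, $I\otimes\rho_E^{\otimes m}$ and $I\otimes\sigma_{E^m}$ to commute pairwise, and $\mathcal P_{\omega_{E^m}}(\sigma_{E^m})$ need not commute with either $\widetilde\rho_{X^mE^m}$ or $\rho_E^{\otimes m}$ (within a joint eigenspace of $\rho_E^{\otimes m}$ and $\omega_{E^m}$ both operators are unconstrained). Moreover, even granting commutation of $\sigma_{E^m}$ with $\rho_E^{\otimes m}$, the generic comparison between the two objectives goes the wrong way: Araki--Lieb--Thirring/Golden--Thompson gives the club-sandwiched objective $\geq$ the LE objective (for $\alpha<1$), whereas you need the reverse inequality up to a $\log m$ penalty. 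So pinching an \emph{arbitrary} candidate $\sigma_{E^m}$ cannot close the argument.

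The paper resolves this differently, and the difference is essential: it first shows (Lemma~\ref{lem: permutation invariant optimum}, via convexity of the objective in $\sigma_{E^m}$ and symmetrization) that the optimizer of the LE conditional entropy of $\widetilde\rho$ may be chosen permutation invariant; only then does the universality of $\omega_{E^m}$ apply, in the form $\tau_{E^m}\le (m+1)^{d_E^2-1}\omega_{E^m}$, and operator monotonicity of the logarithm together with the sign $\lambda=\frac{1-2\alpha}{1-\alpha}\le 0$ lets one replace the optimizer by the single state $\omega_{E^m}$ at cost $|\lambda|(d_E^2-1)\log(m+1)$. Because $\omega_{E^m}$ does commute with both $\widetilde\rho_{X^mE^m}$ and $\rho_E^{\otimes m}$ (this is exactly what the double pinching buys), the resulting LE expression equals the club-sandwiched expression evaluated at $\omega_{E^m}$, which is in turn lower bounded by the infimum over all $\sigma_{E^m}$, i.e.\ by $\widetilde H_\alpha^{\lambda}(X^m|E^m)_{\widetilde\rho}$; your second step (DPI at the boundary value of $\lambda$ plus additivity) then proceeds as you describe. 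To repair your proof you would need both the symmetrization of the optimizer and the replacement by $\omega_{E^m}$ itself, not merely a pinching of an arbitrary candidate.
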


\begin{proof}
We have that for any permutation invariant state $\tau_{E^{ m}}$, it holds that
\begin{align}
     \tau_{E^m} \leq \omega_{E^m}(m+1)^{d^2_E-1} \,.
\end{align}
We denote with $R_{X^m E^m}$ the projector onto the support of $\tilde{\rho}_{X^mE^m} = \mathcal{P}_{\rho_E^{\otimes m}}\circ \mathcal{P}_{\omega_{E^m}}(\rho_{XE}^{\otimes m})$. We know by Lemma~\ref{lem: permutation invariant optimum} that we can choose the optimizer $\tau_{E^m}$ of the LE conditional entropy of $\tilde{\rho}_{X^mE^m}$ to be permutation invariant. By Lemma~\ref{lem: marginal of commuting pinching}, we also have that $\Tr_{X^m}[\tilde{\rho}_{X^mE^m}] = \rho_E^{\otimes m}$. This implies that
    \begin{align}
&H_{\alpha,\infty}^\lambda(X^m|E^m)_{\tilde{\rho}} \notag\\
&\; =\frac{1}{1-\alpha}\log\Tr\Big[R_{X^m E^m}\exp\Big(\alpha\log \tilde{\rho}_{X^mE^m} + (1-\alpha)R_{X^m E^m}\big((1-\lambda)\log{\rho_E^{\otimes m}}+\lambda\log{\tau_{E^m}}\big)R_{X^m E^m}\Big)\Big] \\
&\; \geq \frac{1}{1-\alpha}\log\Tr\Big[R_{X^m E^m}\exp\Big(\alpha\log \tilde{\rho}_{X^mE^m}
+(1-\alpha)R_{X^m E^m}\big((1-\lambda)\log{\rho_E^{\otimes m}}+\lambda\log{\omega_{E^m}}\big)R_{X^m E^m}\Big)\Big] \notag\\
& \qquad\qquad   +\lambda (d^2_E-1)\log{(m+1)} 
\end{align}
The inequality follows from the operator monotonicity of the logarithm, together with the fact that the trace functional \(M \mapsto \Tr[f(M)]\) inherits the monotonicity of the function \(f\) (see, e.g.,~\cite{carlen2010trace}). In the present setting, this applies to the exponential function. Next, we invoke Lemma~\ref{lem: pinching with identity}, which implies that pinching with respect to 
$I_A \otimes \rho_E$ factorizes as the identity on $A$ tensored with pinching on $E$, i.e., $ \mathcal P_{I_A \otimes \rho_E} = I_A \otimes \mathcal P_{\rho_E}$.
As a result, all operators under consideration commute pairwise, i.e.,  
\begin{align}
    [\rho_E^{\otimes m},\omega_{E^m} ] =[\tilde{\rho}_{X^mE^m}, I_{X^m} \otimes \rho_E^{\otimes m}] =  [\tilde{\rho}_{X^mE^m}, I_{X^m} \otimes \omega_{E^m}]  = 0 . 
\end{align}
Here, the first commutation relation follows from the defining properties of the universal state. 
The remaining commutation relations follow from the fact that the pinching maps can be interchanged, 
since their spectral projectors commute as the underlying operators commute. 
Hence, since all operators commute, we can rewrite the LE conditional entropy in terms of the club-sandwiched conditional entropy. We have that
\begin{align}
&H_{\alpha,\infty}^\lambda(X^m|E^m)_{\tilde{\rho}} \notag\\
&\;\geq \frac{1}{1-\alpha}\log{\Tr(\omega_{E^m}^{\frac{\lambda}{2}\frac{1-\alpha}{\alpha}}\rho_{E}^{\otimes m \frac{1-\lambda}{2}\frac{1-\alpha}{\alpha}} \tilde{\rho}_{X^mE^m} \rho_{E}^{\otimes m\frac{1-\lambda}{2}\frac{1-\alpha}{\alpha}}\omega_{E^m}^{\frac{\lambda}{2}\frac{1-\alpha}{\alpha}})^\alpha} \\
& \qquad +\lambda (d^2_E-1)\log{(m+1)}  \\
& \; \geq \inf_{\sigma_{E^m}}\frac{1}{1-\alpha}\log{\Tr(\sigma_{E^m}^{\frac{\lambda}{2}\frac{1-\alpha}{\alpha}}\rho_{E}^{\otimes m \frac{1-\lambda}{2}\frac{1-\alpha}{\alpha}} \tilde{\rho}_{X^mE^m} \rho_{E}^{\otimes m\frac{1-\lambda}{2}\frac{1-\alpha}{\alpha}}\sigma_{E^m}^{\frac{\lambda}{2}\frac{1-\alpha}{\alpha}})^\alpha}  + \lambda (d^2_E-1)\log{(m+1)} \\
& \; = \widetilde{H}_{\alpha}^\lambda(X^m|E^m)_{\tilde{\rho}} +\lambda (d^2_E-1)\log{(m+1)}\\
& \; \geq \widetilde{H}_{\alpha}^\lambda(X^m|E^m)_{\rho^{\otimes m}} +\lambda (d^2_E-1)\log{(m+1)} \\
& \; = m\widetilde{H}_{\alpha}^\lambda(X|E)_{\rho} +\lambda (d^2_E-1)\log{(m+1)} 
\end{align}
The third inequality is a consequence of the monotonicity of the conditional entropy under maps acting on the side information. The last equality is due to additivity.

Inserting this inequality in the bound for the error exponent in Lemma~\ref{lem: LE bound with pinched state} gives 
\begin{align}
    E_{\pa}(\rho_{XE},R)& \leq \max_{\alpha \in [\frac{1}{2},1]}\frac{1-\alpha}{\alpha}\left( R-\widetilde{H}_{\alpha}^\frac{1-2\alpha}{1-\alpha}(X|E)_\rho \right) + \frac{1}{m}\left(\frac{2\alpha-1}{\alpha}+2\right)(d^2_E-1)\log{(m+1)} \\
    &\leq \max_{\alpha \in [\frac{1}{2},1]}\frac{1-\alpha}{\alpha}\left( R-\widetilde{H}_{\alpha}^\frac{1-2\alpha}{1-\alpha}(X|E)_{\rho} \right) + \frac{3}{m}(d^2_E-1)\log{(m+1)} 
\end{align}
Since the last term is a constant that does not depend on $\alpha$, we can let $m$ be arbitrarily large to complete the proof.
    
\end{proof}

\section{Aknowledgements}
We thank Bartosz Regula for discussions. R.R. acknowledges financial support from the ERC grant GIFNEQ 101163938. M.T.\ acknowledge support from the National Research Foundation Investigatorship Award (NRF-NRFI10-2024-0006) and from the National Research Foundation, Singapore through the National Quantum Office, hosted in A*STAR, under its Centre for Quantum Technologies Funding Initiative (S24Q2d0009).

\appendix

\section{Definitions for states that are not full-rank}
\label{app: non full-rank}
In Section~\ref{sec: notation}, we introduced the club-sandwiched conditional entropy. In what follows, we show that this definition is consistent with the extension obtained by taking limits of the corresponding objective function evaluated on full-rank states, which gives a consistent definition.

Let us denote
\begin{align}
   \widetilde{\Phi}_{\alpha,\lambda}(\rho_{AE},\sigma_E;\varepsilon) = \Tr[\left(\rho_{AE}(\varepsilon)^\frac{1}{2}\rho_E(\varepsilon)^{\frac{1-\lambda}{2}\frac{1-\alpha}{\alpha}}\sigma_E(\varepsilon)^{\lambda\frac{1-\alpha}{\alpha}}\rho_E(\varepsilon)^{\frac{1-\lambda}{2}\frac{1-\alpha}{\alpha}}\rho_{AE}(\varepsilon)^\frac{1}{2}\right)^\alpha] \,,
\end{align}
where we defined the depolarized states as
\begin{align}
    \rho_{AE}(\varepsilon)=(1-\varepsilon)\rho_{AE}+\varepsilon\pi_{AE}\,,\quad \rho_{E}(\varepsilon)=(1-\varepsilon)\rho_{E}+\varepsilon\pi_{E} \,,\quad \sigma_{E}(\varepsilon)=(1-\varepsilon)\sigma_{E}+\varepsilon\pi_{E}
\end{align}
Here, $\pi_{AE}=I_{AE}/d_{AE}$ is the fully mixed state. In the following, we show that the definition of the objective function is consistent with the limit $\varepsilon \rightarrow 0$ of full-rank states. In particular, whenever $\supp(\rho_{AE}) \subseteq \supp(I_A\otimes \sigma_E)$ the limit $\varepsilon \rightarrow 0$ is finite, and the objective function converges to the definition given in Section~\ref{sec: notation}, where negative powers are taken only on the support of the states. If the support condition is not met, the objective function is infinite.
\begin{lemma}
Let $\alpha \in (0,1)$, $\lambda <0$ and  $\rho_{AE}$ and $\sigma_E$ be quantum states. 
If $\supp(\rho_{AE}) \subseteq \supp(I_A\otimes \sigma_E)$ then
\begin{align}
     &\lim_{\varepsilon \rightarrow 0^+} \widetilde{\Phi}_{\alpha,\lambda}(\rho_{AE},\sigma_E;\varepsilon) = \Tr[\left(\rho_{AE}^\frac{1}{2}\rho_E^{\frac{1-\lambda}{2}\frac{1-\alpha}{\alpha}}\sigma_E^{\lambda\frac{1-\alpha}{\alpha}}\rho_E^{\frac{1-\lambda}{2}\frac{1-\alpha}{\alpha}}\rho_{AE}^\frac{1}{2}\right)^\alpha]
\end{align}
where negative powers are taken on the support of the states.
Moreover, if $\supp(\rho_{AE}) \not\subseteq \supp(I_A\otimes \sigma_E)$ then
    \begin{align}
        \lim_{\varepsilon \rightarrow 0^+} \widetilde{\Phi}_{\alpha,\lambda}(\rho_{AE},\sigma_E;\varepsilon) = +\infty \,.
    \end{align}
\end{lemma}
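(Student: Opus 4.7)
The plan is to treat the two cases separately. Set $c := (1-\alpha)/\alpha > 0$, $d := (1-\lambda)c/2 > 0$, $e := \lambda c < 0$, so that $2d+e = c$ and $1+2d+e = 1/\alpha$. Throughout, write
\[
M(\varepsilon) := \rho_{AE}(\varepsilon)^{1/2}\,\rho_E(\varepsilon)^d\,\sigma_E(\varepsilon)^e\,\rho_E(\varepsilon)^d\,\rho_{AE}(\varepsilon)^{1/2}
\]
(operators on $E$ are tacitly extended to $AE$ by tensoring with $I_A$), and let $\tilde{P} := I_A \otimes P_\sigma$ with $P_\sigma$ the support projector of $\sigma_E$.

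\textbf{Case 1 ($\supp(\rho_{AE}) \subseteq \supp(I_A\otimes\sigma_E)$).} By Lemma~\ref{lem: support lemma} the hypothesis also yields $\supp(\rho_E)\subseteq \supp(\sigma_E)$, so $\rho_{AE}$, $I_A\otimes\rho_E$, and $I_A\otimes\sigma_E$ all annihilate $\tilde P^\perp$. The identity-proportional depolarization preserves block-diagonality, so $\rho_{AE}(\varepsilon)$, $\rho_E(\varepsilon)$, $\sigma_E(\varepsilon)$ and hence all their real powers are block-diagonal with respect to $\tilde P, \tilde P^\perp$; consequently so is $M(\varepsilon)$. On the $\tilde P^\perp$ block every factor is a scalar multiple of $\tilde P^\perp$ (namely $\sqrt{\varepsilon/(d_A d_E)}$, $(\varepsilon/d_E)^d$, $(\varepsilon/d_E)^e$), so $\tilde P^\perp M(\varepsilon)\tilde P^\perp = C_1\varepsilon^{1/\alpha}\tilde P^\perp$ for a positive constant $C_1$ depending only on $d_A,d_E,\alpha,\lambda$, and its $\alpha$-power has trace $O(\varepsilon)\to 0$. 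On the $\tilde P$ block, $\sigma_E(\varepsilon)|_{P_\sigma} = (1-\varepsilon)\sigma_E|_{P_\sigma} + (\varepsilon/d_E)P_\sigma$ has eigenvalues uniformly bounded away from zero for small $\varepsilon$, since $\sigma_E|_{P_\sigma}$ is invertible; continuity of the functional calculus then gives $\sigma_E(\varepsilon)^e|_{P_\sigma}\to\sigma_E^e|_{P_\sigma}$ in operator norm. Since $1/2$ and $d$ are positive, $\rho_{AE}(\varepsilon)^{1/2}$ and $\rho_E(\varepsilon)^d$ likewise converge in operator norm to $\rho_{AE}^{1/2}$ and $\rho_E^d$ (with negative powers understood on the supports). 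Operator-norm continuity of $M\mapsto M^\alpha$ on positive operators, together with continuity of the trace, yields $\Tr[(\tilde P M(\varepsilon)\tilde P)^\alpha]\to \Tr[(\rho_{AE}^{1/2}\rho_E^d\sigma_E^e\rho_E^d\rho_{AE}^{1/2})^\alpha]$, proving the first assertion.

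\textbf{Case 2 ($\supp(\rho_{AE})\not\subseteq \supp(I_A\otimes\sigma_E)$).} A vector witnessing the failure, after tracing out $A$, shows $\supp(\rho_E)\not\subseteq \supp(\sigma_E)$; spectrally decomposing $\rho_E$ then gives $\Tr[P_\sigma^\perp \rho_E^{2d+1}]>0$. Apply the chain $\Tr[M(\varepsilon)^\alpha]\geq \|M(\varepsilon)\|^\alpha \geq (\Tr[M(\varepsilon)]/(d_A d_E))^\alpha$, and use cyclicity together with $\Tr_A[\rho_{AE}(\varepsilon)]=\rho_E(\varepsilon)$ to get $\Tr[M(\varepsilon)]=\Tr_E[\sigma_E(\varepsilon)^e\rho_E(\varepsilon)^{2d+1}]$. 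Since $\sigma_E(\varepsilon)$ commutes with $P_\sigma$ and restricts to $(\varepsilon/d_E)P_\sigma^\perp$ on the complement, the operator inequality $\sigma_E(\varepsilon)^e\geq (\varepsilon/d_E)^e P_\sigma^\perp$ holds, whence
\[
\Tr[M(\varepsilon)]\geq (\varepsilon/d_E)^e\,\Tr\bigl[P_\sigma^\perp \rho_E(\varepsilon)^{2d+1}\bigr].
\]
By continuity the right-hand trace converges to $\Tr[P_\sigma^\perp \rho_E^{2d+1}]>0$, while $(\varepsilon/d_E)^e\to\infty$ since $e<0$; hence $\Tr[M(\varepsilon)^\alpha]\to\infty$. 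The main subtlety is the exactness of the block decomposition in Case~1 despite $\rho_E$ and $\sigma_E$ generally not commuting: the support hypothesis is exactly what confines the non-commutativity to the $\tilde P$ block, where all eigenvalues of $\sigma_E(\varepsilon)$ remain bounded below and routine continuity arguments apply.
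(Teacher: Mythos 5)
Your proposal is correct; both cases go through. It follows the same overall strategy as the paper — isolate the dangerous negative power $\sigma_E(\varepsilon)^{\lambda\frac{1-\alpha}{\alpha}}$ on $\supp(\sigma_E)^\perp$, where it either gets damped to $\varepsilon^{1/\alpha}$ by the support hypothesis or diverges against a strictly positive overlap $\Tr[P_\sigma^\perp\rho_E^{2d+1}]>0$ when the hypothesis fails — but the execution differs in ways worth noting. The paper actually proves the statement for a surrogate regularization $\widetilde{\Phi}'$ in which $\varepsilon$ times the kernel projectors is added (so that all powers expand exactly into orthogonal pieces), and only asserts that the depolarized case follows ``with minor modifications''; you instead treat the depolarized $\widetilde{\Phi}$ directly, replacing the exact expansion by an exact two-block decomposition with respect to $\tilde P=I_A\otimes P_\sigma$, which is legitimate precisely because the support hypothesis makes $\rho_{AE}$, $I_A\otimes\rho_E$ and $I_A\otimes\sigma_E$ (hence their depolarizations and all their powers) block diagonal; the off-support block is then an explicit scalar $\propto\varepsilon^{1/\alpha}$ and the on-support block converges by norm continuity of the functional calculus, since $\sigma_E(\varepsilon)$ is uniformly invertible there. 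In the divergent case you replace the paper's monotonicity manipulations inside the $\alpha$-power by the elementary chain $\Tr[M^\alpha]\ge(\Tr[M]/(d_Ad_E))^\alpha$ combined with $\Tr_A[\rho_{AE}(\varepsilon)]=\rho_E(\varepsilon)$ and the block bound $\sigma_E(\varepsilon)^{\lambda\frac{1-\alpha}{\alpha}}\ge(\varepsilon/d_E)^{\lambda\frac{1-\alpha}{\alpha}}P_\sigma^\perp$, which is somewhat simpler; the key positivity $\Tr[P_\sigma^\perp\rho_E^{2d+1}]>0$ is obtained from the same observation as in the paper. The only place you are terse is the claim that $\supp(\rho_{AE})\not\subseteq\supp(I_A\otimes\sigma_E)$ forces $\supp(\rho_E)\not\subseteq\supp(\sigma_E)$: this should be spelled out, e.g.\ as the contrapositive of Lemma~\ref{lem: support lemma} together with $\supp(\rho_{AE})\subseteq\supp(I_A\otimes\rho_E)$, or via $\rho_{AE}\ge\lambda_{\min}^{+}(\rho_{AE})\ketbra{\psi}{\psi}$ for a unit vector $\ket{\psi}$ in $\supp(\rho_{AE})$ with $(I_A\otimes P_\sigma^\perp)\ket{\psi}\neq 0$ and then taking the partial trace — essentially the argument the paper itself uses. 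With that detail filled in, your proof is a complete and, for the statement as literally formulated, arguably more self-contained argument.
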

\begin{proof}
     Let us first consider the case $\supp(\rho_{AE}) \not\subseteq \supp(I_A\otimes \sigma_E)$. We denote with $R_{AE},R_{E},S_E$ the projectors onto the support of $\rho_{AE},\rho_E,\sigma_E$, respectively. Moreover, we denote with $R_{AE}^\perp,R_{E}^\perp,S_E^\perp$ the projectors onto the kernel. Since the proof for the above function follows with minor modifications, for brevity, we consider the limit $\varepsilon \rightarrow 0$ of the function
     \begin{align}
        &\widetilde{\Phi}'_{\alpha,\lambda}(\rho_{AE},\sigma_E;\varepsilon) \\
        &\; = \Tr[\left((\rho_{AE}+\varepsilon R_{AE}^\perp)^\frac{1}{2}(\rho_E+\varepsilon R_{E}^\perp)^{\frac{1-\lambda}{2}\frac{1-\alpha}{\alpha}}(\sigma_E+\varepsilon S_{E}^\perp)^{\lambda\frac{1-\alpha}{\alpha}}(\rho_E+\varepsilon R_{E}^\perp)^{\frac{1-\lambda}{2}\frac{1-\alpha}{\alpha}}(\rho_{AE}+\varepsilon R_{AE}^\perp)^\frac{1}{2}\right)^\alpha] \,.
     \end{align}
Next we use that $\rho_{AE}+\varepsilon R_{AE}^\perp\geq \rho_{AE}$, that $\Tr((AA^\dagger)^\alpha) = \Tr((AA^\dagger)^\alpha)$ and the fact that the trace functional $M \rightarrow \Tr[f(M)]$ inherits the monotonicity from $f$~(see e.g.,~\cite{carlen2010trace}). In addition, we compute the power on $\sigma_E + \varepsilon S_E^\perp$ using the fact that the two terms are orthogonal. We obtain the lower bound
\begin{align}
     \widetilde{\Phi}'_{\alpha,\lambda}(\rho_{AE},\sigma_E;\varepsilon)  &\geq \Tr[\left(\rho_{AE}^\frac{1}{2}(\rho_E+\varepsilon R_{E}^\perp)^{\frac{1-\lambda}{2}\frac{1-\alpha}{\alpha}}(\sigma_E^{\lambda\frac{1-\alpha}{\alpha}}+\varepsilon^{\lambda\frac{1-\alpha}{\alpha}} S_{E}^\perp)(\rho_E+\varepsilon R_{E}^\perp)^{\frac{1-\lambda}{2}\frac{1-\alpha}{\alpha}}\rho_{AE} ^\frac{1}{2}\right)^\alpha] \\
     & \quad \geq \varepsilon^{\lambda\frac{1-\alpha}{\alpha}}\Tr[\left(\rho_{AE}^\frac{1}{2}(\rho_E+\varepsilon R_{E}^\perp)^{\frac{1-\lambda}{2}\frac{1-\alpha}{\alpha}} S_{E}^\perp (\rho_E+\varepsilon R_{E}^\perp)^{\frac{1-\lambda}{2}\frac{1-\alpha}{\alpha}}\rho_{AE} ^\frac{1}{2}\right)^\alpha]\,,
\end{align}
where in the last line we used that $\sigma_E^{\lambda\frac{1-\alpha}{\alpha}}$ is a positive semidefinite operator. Next, we need to show that the term inside the trace is larger than a fixed constant for any $\varepsilon$. Since the operator inside the trace and under the power is positive semidefinite, we can equivalently show that it is non-zero. Hence, it is enough to prove that its trace is non-zero. We have
\begin{align}
    \Tr \big[\rho_{AE}^\frac{1}{2}(\rho_E+\varepsilon R_{E}^\perp)^{\frac{1-\lambda}{2}\frac{1-\alpha}{\alpha}} S_{E}^\perp (\rho_E+\varepsilon R_{E}^\perp)^{\frac{1-\lambda}{2}\frac{1-\alpha}{\alpha}}\rho_{AE} ^\frac{1}{2}\big] = \Tr\big[\rho_E^{{\frac{1-\lambda}{2}\frac{1-\alpha}{\alpha}}+1} S_E^\perp \big]\,.
\end{align}
It is then enough to show that the positive semidefinite operator $\rho_E S_E^\perp \rho_E$ is non-zero. The assumption $\supp(\rho_{AE}) \not\subseteq \supp(I_A\otimes \sigma_E)$ implies that there exists an eigenvector $\ket{\psi}$ of $\rho_{AE}$, with eigenvalue $\lambda>0$, such that $c := \bra{\psi} I_A \otimes S_E^\perp \ket{\psi} > 0$. This implies that the vector $I_A \otimes S_E^\perp \ket{\psi} \neq 0$. This shows that
$(I_A\otimes S_E^\perp) \rho_{AE}(I_{A}\otimes S_E^\perp) \neq 0$ since 
\begin{align}
    (I_A\otimes S_E^\perp) \rho_{AE}(I_{A}\otimes S_E^\perp) \geq \lambda (I_A\otimes S_E^\perp) \ketbra{\psi}{\psi}(I_{A}\otimes S_E^\perp) \neq 0 \,.
\end{align}
Hence, the partial trace of the positive semidefinite operator $(I_A\otimes S_E^\perp) \rho_{AE}(I_{A}\otimes S_E^\perp) $ must be non-zero. Its partial trace is equal to 
\begin{align}
    \Tr_A[ (I_A\otimes S_E^\perp) \rho_{AE}(I_{A}\otimes S_E^\perp)] =  S_E^\perp \rho_{E} S_E^\perp
\end{align}
Hence, $S_E^\perp \rho_{E} S_E^\perp \neq 0$ and $\Tr[S_E^\perp \rho_{E}] \neq 0$. Finally, since $\rho_{E}^{{\frac{1-\lambda}{2}\frac{1-\alpha}{\alpha}}+1}$ has the same support of $\rho_E$, it holds also that $c =\Tr \big[\rho_{E}^{\frac{1-\lambda}{2}\frac{1-\alpha}{\alpha}+1} S_E^\perp \big] \neq 0$ (and it is positive as it can be written as the trace of the positive semidefinite operator). Overall, we obtain that
\begin{align}
     \widetilde{\Phi}'_{\alpha,\lambda}(\rho_{AE},\sigma_E;\varepsilon)  \geq  \varepsilon^{\lambda\frac{1-\alpha}{\alpha}} c \,.
\end{align}
Since ${\lambda\frac{1-\alpha}{\alpha}} < 0$, we have that $ \lim_{\varepsilon \rightarrow 0}\widetilde{\Phi}'_{\alpha,\lambda}(\rho_{AE},\sigma_E;\varepsilon)\rightarrow +\infty$.

\bigskip

Let us turn to the case $\supp(\rho_{AE}) \subseteq \supp(I_A\otimes \sigma_E)$. Lemma~\ref{lem: support lemma}  implies that $\supp(I_A \otimes \rho_{E}) \subseteq \supp(I_A \otimes\sigma_E)$. We can expand the terms inside the trace and under the power of $\widetilde{\Phi}'_{\alpha,\lambda}(\rho_{AE},\sigma_E;\varepsilon) $ by taking advantage of the fact that the states are orthogonal to the projector onto their kernel. All terms that contain $\varepsilon$ raised to some positive power go to zero whenever $\varepsilon$ vanishes. We are then left to show that the remaining term also vanishes. By the chain of support inclusions $\supp(\rho_{AE}) \subseteq \supp(I_A\otimes \rho_E) \subseteq \supp(I_A\otimes \sigma_E)$ we have
\begin{align}
  &(\rho_{AE} + \varepsilon R_{AE})^\frac{1}{2} (\rho_E+\varepsilon R_{E}^\perp)^{\frac{1-\lambda}{2}\frac{1-\alpha}{\alpha}}\varepsilon^{\lambda\frac{1-\alpha}{\alpha}} S_{E}^\perp(\rho_E+\varepsilon R_{E}^\perp)^{\frac{1-\lambda}{2}\frac{1-\alpha}{\alpha}} (\rho_{AE} + \varepsilon R_{AE})^\frac{1}{2} \\
  & \quad \qquad = \varepsilon^{1+ 2 \frac{1-\lambda}{2} \frac{1-\alpha}{\alpha} + \lambda \frac{1-\alpha}{\alpha}} I_A \otimes  S_{E}^\perp \\
  & \quad \qquad = \varepsilon^{\frac{1}{\alpha}} I_A \otimes  S_{E}^\perp \,.
\end{align}
Since $\frac{1}{\alpha} >0$, this term also converges to zero. Therefore, no infinity appears, and hence the result.
\end{proof}

In Section~\ref{sec: notation}, we introduced the LE conditional entropy. In what follows, we show that this definition is consistent with the extension obtained by taking limits of the corresponding objective function evaluated on full-rank states.

Let us denote
\begin{align}
   \Phi_{\alpha,\lambda}(\rho_{AE},\sigma_E;\varepsilon) = \Tr [\exp\Big(\alpha\log{\rho_{AE}(\varepsilon)}+(1-\alpha)\big((1-\lambda) \log{\rho_E(\varepsilon)} + \lambda \log{\sigma_E(\varepsilon)}\big) \Big)] \,,
\end{align}
where we defined the depolarized states as
\begin{align}
    \rho_{AE}(\varepsilon)=(1-\varepsilon)\rho_{AE}+\varepsilon\pi_{AE}\,,\quad \rho_{E}(\varepsilon)=(1-\varepsilon)\rho_{E}+\varepsilon\pi_{E} \,,\quad \sigma_{E}(\varepsilon)=(1-\varepsilon)\sigma_{E}+\varepsilon\pi_{E}
\end{align}
Here, $\pi_{AE}=I_{AE}/d_{AE}$ is the fully mixed state. The following proof is inspired by the arguments in~\cite[Lemma III.1]{mosonyi2017strong} and~\cite[Lemma 4.1]{hiai1993golden}.

\begin{lemma}
Let $\alpha \in (0,1)$, $\lambda <0$ and  $\rho_{AE}$ and $\sigma_E$ be quantum states. 
If $\supp(\rho_{AE}) \subseteq \supp(I_A\otimes \sigma_E)$ then
\begin{align}
     &\lim_{\varepsilon \rightarrow 0^+} \Phi_{\alpha,\lambda}(\rho_{AE},\sigma_E;\varepsilon) =\Tr [R_{AE}\mathrm{exp}\Big(\alpha\log{(\rho_{AE})}+(1-\alpha)R_{AE}\big((1-\lambda) \log{(\rho_E}) + \lambda \log{(\sigma_E})\big)R_{AE} \Big)]
\end{align}
Here,
and $R_{AE}$ is the projector onto the support of $\rho_{AE}$, and the logarithms are taken only on the support of the states. Moreover, if $\supp(\rho_{AE}) \not\subseteq \supp(I_A\otimes \sigma_E)$ then
    \begin{align}
        \lim_{\varepsilon \rightarrow 0^+}  \Phi_{\alpha,\lambda}(\rho_{AE},\sigma_E;\varepsilon) = +\infty \,.
    \end{align}
\end{lemma}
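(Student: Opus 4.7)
The plan is to analyze the behavior of the Hermitian exponent
\[
L_\varepsilon \;:=\; \alpha\log\rho_{AE}(\varepsilon) + (1-\alpha)(1-\lambda)\log\rho_E(\varepsilon) + (1-\alpha)\lambda\log\sigma_E(\varepsilon)
\]
by separating its finite contributions on the supports from its diverging contributions on the kernels. The key tool is the identity
\[
\log\tau(\varepsilon) \;=\; R_\tau\log\!\big((1-\varepsilon)\tau+\tfrac{\varepsilon}{d}R_\tau\big)R_\tau \;+\; \log(\varepsilon/d)\,R_\tau^\perp,
\]
which holds for each of the three states $\tau\in\{\rho_{AE},\rho_E,\sigma_E\}$ because $\tau$ and $I/d$ commute with the support projector $R_\tau$. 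The first summand is uniformly bounded in $\varepsilon$ and converges to $R_\tau\log\tau R_\tau$ by continuous functional calculus on $\supp(\tau)$; the second carries the entire $\log\varepsilon$ divergence, weighted by the projector onto $\supp(\tau)^\perp$.

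For the divergent case $\supp(\rho_{AE})\not\subseteq\supp(I_A\otimes\sigma_E)$, I would pick a unit vector $\ket{\psi}\in\supp(\rho_{AE})$ with nontrivial projection $c:=\|(I_A\otimes S_E^\perp)\ket{\psi}\|^2>0$ onto the kernel of $I_A\otimes\sigma_E$, and evaluate $\bra{\psi}L_\varepsilon\ket{\psi}$ termwise using the decomposition above. Term (i): $\alpha\bra{\psi}\log\rho_{AE}(\varepsilon)\ket{\psi}$ stays bounded since $R_{AE}\ket{\psi}=\ket{\psi}$ eliminates the divergent $\log(\varepsilon/d_{AE})R_{AE}^\perp$ contribution. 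Term (ii): $(1-\alpha)(1-\lambda)\bra{\psi}\log(I_A\otimes\rho_E(\varepsilon))\ket{\psi}$ stays bounded, because Lemma~\ref{lem: support lemma} gives $\supp(\rho_{AE})\subseteq\supp(I_A\otimes\rho_E)$ and hence $(I_A\otimes R_E^\perp)\ket{\psi}=0$. Term (iii): $(1-\alpha)\lambda\bra{\psi}\log(I_A\otimes\sigma_E(\varepsilon))\ket{\psi}$ equals $(1-\alpha)\lambda[\text{(bounded)}+c\log(\varepsilon/d_E)]$, which diverges to $+\infty$ since $(1-\alpha)\lambda<0$, $c>0$, and $\log\varepsilon\to-\infty$. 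Therefore $\lambda_{\max}(L_\varepsilon)\to+\infty$ and $\Phi_{\alpha,\lambda}(\rho_{AE},\sigma_E;\varepsilon)=\Tr e^{L_\varepsilon}\geq e^{\lambda_{\max}(L_\varepsilon)}\to+\infty$.

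For the convergent case $\supp(\rho_{AE})\subseteq\supp(I_A\otimes\sigma_E)$, Lemma~\ref{lem: support lemma} produces the nested chain $V_1:=\supp(\rho_{AE})\subseteq V_2:=\supp(I_A\otimes\rho_E)\subseteq V_3:=\supp(I_A\otimes\sigma_E)$, so the three support projectors commute pairwise and induce the joint decomposition $\mathcal H_{AE}=V_1\oplus(V_2\ominus V_1)\oplus(V_3\ominus V_2)\oplus V_3^\perp$. Substituting the log decomposition into $L_\varepsilon$ and collecting constants gives
\[
L_\varepsilon \;=\; \widetilde A_\varepsilon + (\log\varepsilon)\,K,
\]
where $\widetilde A_\varepsilon$ is uniformly bounded and converges to a finite operator $\widetilde T_0$ satisfying $R_{AE}\widetilde T_0 R_{AE} = \alpha\log\rho_{AE} + (1-\alpha)R_{AE}((1-\lambda)\log\rho_E+\lambda\log\sigma_E)R_{AE}$ (exactly the exponent appearing in the claim), and
\[
K \;:=\; \alpha R_{AE}^\perp + (1-\alpha)(1-\lambda)(I_A\otimes R_E^\perp) + (1-\alpha)\lambda(I_A\otimes S_E^\perp).
\]
Evaluating $K$ on the four blocks gives the eigenvalues $0$, $\alpha$, $\alpha+(1-\alpha)(1-\lambda)$, and $1$ respectively; all are non-negative (the third strictly positive because $\lambda<0$ makes $1-\lambda>1$), so $K\geq 0$ with $\ker K = V_1$.

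The final step is $\Tr e^{L_\varepsilon}\to\Tr[R_{AE}\exp(R_{AE}\widetilde T_0 R_{AE})]$. Writing $L_\varepsilon$ in block form with respect to $V_1\oplus V_1^\perp$, the $(1,1)$-block equals $R_{AE}\widetilde A_\varepsilon R_{AE}$ (since $K|_{V_1}=0$) and converges to $R_{AE}\widetilde T_0 R_{AE}|_{V_1}$, the $(2,2)$-block has all eigenvalues going to $-\infty$ at rate at least $\alpha\log\varepsilon$, and the off-diagonal $R_{AE}\widetilde A_\varepsilon R_{AE}^\perp$ is uniformly bounded. A Schur-complement argument then shows that for any $\mu$ in a bounded interval, the secular equation $\det(L_\varepsilon-\mu)=0$ reduces in the limit $\varepsilon\to 0^+$ to that of the $(1,1)$-block alone, because the correction $R_{AE}\widetilde A_\varepsilon R_{AE}^\perp\,(L_\varepsilon^{(22)}-\mu)^{-1}\,R_{AE}^\perp\widetilde A_\varepsilon R_{AE}$ vanishes in norm (the inverse has norm $O(1/|\log\varepsilon|)$). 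Hence the $\dim V_1$ ``stable'' eigenvalues of $L_\varepsilon$ converge to those of $R_{AE}\widetilde T_0 R_{AE}|_{V_1}$, while the remaining eigenvalues diverge to $-\infty$, and summing $e^{\lambda_i(L_\varepsilon)}$ over all eigenvalues yields the claim. The main obstacle is precisely this mixing between the stable and divergent sectors: because $\rho_{AE}$, $\rho_E$ and $\sigma_E$ generally fail to commute, the off-diagonal block is nonzero, and the Schur-complement argument is what cleanly isolates the $V_1$ limit without requiring any ad hoc commutativity assumption.
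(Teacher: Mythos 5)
Your proposal is correct and, for the divergent case, essentially coincides with the paper's argument: both exhibit a witness vector $\ket{\psi}\in\supp(\rho_{AE})$ with nonzero overlap $c$ with $I_A\otimes S_E^\perp$, note that the $\rho_{AE}$- and $\rho_E$-terms of the quadratic form stay bounded (the latter via Lemma~\ref{lem: support lemma}), and conclude divergence from $(1-\alpha)\lambda\,c\log\varepsilon\to+\infty$; the paper passes through the Peierls--Bogoliubov bound $\bra{\psi}e^{M}\ket{\psi}\geq e^{\bra{\psi}M\ket{\psi}}$ while you use $\Tr e^{L_\varepsilon}\geq e^{\lambda_{\max}(L_\varepsilon)}$, which is the same estimate in different clothing. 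For the convergent case the underlying idea is shared---split the exponent into a uniformly bounded part plus $(\log\varepsilon)$ times a positive semidefinite operator whose kernel is exactly $\supp(\rho_{AE})$, so that the ``stable'' spectrum is governed by the compression to that support---but the mechanism by which you extract eigenvalue convergence differs. The paper blocks $C(\varepsilon)$ with respect to $\supp(\rho)\oplus\ker(\rho)$, uses Cauchy-type majorization $\sum_{i\le l}\mu_i\le\sum_{i\le l}\lambda_i(\varepsilon)$, and runs an inductive subsequence/quadratic-form argument to show that the top $k$ eigenvectors collapse onto $\operatorname{ran}(R)$ and the eigenvalues converge to those of $C_{11}$. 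You instead write $L_\varepsilon=\widetilde A_\varepsilon+(\log\varepsilon)K$, diagonalize $K$ explicitly on the nested chain $V_1\subseteq V_2\subseteq V_3$ (which is a nice structural observation the paper leaves implicit), and use a Schur-complement/secular-equation argument with the resolvent bound $\|(L^{(22)}_\varepsilon-\mu)^{-1}\|=O(1/|\log\varepsilon|)$. Your route is arguably more transparent about where the rate $\alpha\log\varepsilon$ comes from, while the paper's route handles multiplicities automatically through the orthonormality of the eigenvector subsequences.

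The one place where your sketch needs to be completed is the eigenvalue bookkeeping at the end: the Schur-complement step localizes any bounded eigenvalue of $L_\varepsilon$ within $o(1)$ of $\operatorname{spec}(R_{AE}\widetilde T_0R_{AE}|_{V_1})$, but it does not by itself show that exactly $\dim V_1$ eigenvalues stay bounded, nor that they converge to the limit spectrum \emph{with the correct multiplicities}, which is what you need to sum $e^{\lambda_i}$. Both points are easily supplied: Weyl's inequality applied to $L_\varepsilon=\widetilde A_\varepsilon+(\log\varepsilon)K$ gives $\lambda_i(L_\varepsilon)\le \lambda_i((\log\varepsilon)K)+\|\widetilde A_\varepsilon\|$ and hence that all but the top $\dim V_1$ eigenvalues diverge, while Cauchy interlacing gives the matching lower bounds for the top ones; multiplicity-correct convergence then follows either by the paper's inductive argument or by a Hurwitz/Rouch\'e argument on $\det S_\varepsilon(\mu)$, which is holomorphic in a complex neighbourhood of the bounded interval and converges uniformly to $\det\bigl(R_{AE}\widetilde T_0R_{AE}|_{V_1}-\mu\bigr)$. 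With that addition your proof is complete and valid.
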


\begin{proof}
We first consider the case $\supp(\rho_{AE}) \not\subseteq \supp(I_A\otimes \sigma_E)$. 
It is well known and straightforward to verify that $\supp(\rho_{AE}) \subseteq \supp(I_A\otimes \rho_E)$, which, for instance, guarantees the finiteness of the von Neumann conditional entropy. For brevity, since the proof for our specific setting can be treated analogously, we consider the limit
$\varepsilon \to 0$ of the function
\begin{align}
\Tr\!\left[
\exp\!\Big(
\alpha \log(\rho+\varepsilon I)
+\beta\log(\tau+\varepsilon I)
+\gamma \log(\sigma+\varepsilon I)\big)
\Big)
\right],
\end{align}
under the assumptions that $\alpha,\beta \geq 0$, $\gamma<0$,
$\operatorname{supp}(\rho)\subseteq \operatorname{supp}(\tau)$ and
$\operatorname{supp}(\rho)\nsubseteq \operatorname{supp}(\sigma)$.
 In the following, we denote by $R$, $T$, and $S$ the orthogonal projectors onto the supports of $\rho$, $\tau$, and $\sigma$, respectively. Correspondingly, we denote by $R^\perp$, $T^\perp$, and $S^\perp$ the orthogonal projectors onto their kernels.
By the assumption $\operatorname{supp}(\rho)\nsubseteq \operatorname{supp}(\sigma)$, there exists an eigenvector $\ket{\psi}$ of $\rho$, with eigenvalue $\lambda>0$, such that $c := \bra{\psi}S^\perp \ket{\psi} > 0$. 
We first note that since $\tau +\varepsilon I \geq (\lambda_{\min}(\tau)+\varepsilon) T + \varepsilon T^\perp$, the operator monotonicity of the logarithm implies that
$\log(\tau+\varepsilon I) \geq (\log{\big(\lambda_{\min}(\tau)+\varepsilon\big)) T +\log{\varepsilon} T^\perp}$. Since $\operatorname{supp}(\rho)\subseteq \operatorname{supp}(\tau)$, we have that
\begin{align}
    \bra{\psi}\log(\tau+\varepsilon I) \ket{\psi} \geq \log{\big(\lambda_{\min}(\tau)+\varepsilon\big)} \,.
\end{align}
We then have
\begin{align}
    &\Tr\!\left[
\exp\!\big(
\alpha \log(\rho+\varepsilon I)
+\beta\log(\tau+\varepsilon I)
+\gamma \log(\sigma+\varepsilon I)
\big)
\right]\\
& \quad\geq \bra{\psi} \exp\!\big(
\alpha \log(\rho+\varepsilon I)
+\beta\log(\tau+\varepsilon I)
+\gamma \log(\sigma+\varepsilon I)
\big) \ket{\psi} \\
& \quad\geq \exp\!\left(\bra{\psi} 
\left(\alpha \log(\rho+\varepsilon I)
+\beta\log(\tau+\varepsilon I)
+\gamma\log(\sigma+\varepsilon I)\right)
 \ket{\psi} \right) \\
 & \quad =\exp\!\left(\alpha \log(\lambda+\varepsilon )
+\beta\bra{\psi}\log(\tau+\varepsilon I) \ket{\psi}
+\gamma \bra{\psi} \log(\sigma+ \varepsilon S) \ket{\psi} +\gamma \bra{\psi} \log(\varepsilon) S^\perp \ket{\psi}
  \right) \\
 & \quad \geq  (\lambda+\varepsilon)^\alpha (\lambda_{\min}(\tau)+\varepsilon)^{\beta} \exp(\gamma \bra{\psi} \log(\sigma+ \varepsilon S) \ket{\psi}) \varepsilon^{c\gamma} \,
\end{align}
where the second inequality follows from the convexity of the exponential function.
Since $\varepsilon^{c\gamma} \rightarrow +\infty$ and all other terms remain finite, we obtain the desired result.

\bigskip

Let us turn to the case $\supp(\rho_{AE}) \subseteq \supp(I_A\otimes \sigma_E)$. Lemma~\ref{lem: support lemma}  implies that $\supp(I_A \otimes \rho_{E}) \subseteq \supp(I_A \otimes\sigma_E)$. For brevity, since the proof for our specific setting can be treated with minor modifications, we consider the limit $\varepsilon \rightarrow 0$ of the function
\begin{align}
\Tr\!\left[
\exp\!\Big(
\alpha \log \big(\rho+\varepsilon R^\perp\big)
+\beta\log\big(\tau+\varepsilon T^\perp\big)
+\gamma \log\big(\sigma+\varepsilon S^\perp\big)
\Big)
\right],
\end{align}
under the assumption that $\alpha,\beta \geq 0$, $\gamma<0$, $\alpha+\beta+\gamma=1$, $\supp(\rho)\subseteq \supp(\tau)\subseteq \supp(\sigma)$.
For $\varepsilon>0$ let
 \begin{align}
     C(\varepsilon)
= \alpha \log \big(\rho+\varepsilon R^\perp\big)
+\beta\log\big(\tau+\varepsilon T^\perp\big)
+\gamma \log\big(\sigma+\varepsilon S^\perp\big)
 \end{align}
Then,
\begin{align}
    C(\varepsilon)
    = \alpha (\log \rho)\,R + \beta (\log \sigma)\,S + \gamma (\log \tau)\,T
    + (\log \varepsilon)\,(R^\perp + S^\perp + T^\perp)\, .
\end{align}
We decompose the Hilbert space according to the orthogonal decomposition
\(
\mathcal H = \mathrm{supp}(\rho) \oplus \ker(\rho)
\),
with corresponding projectors \(R\) and \(R^\perp\).
Inserting the resolution of the identity
\(I = R + R^\perp\) on both sides of \(C(\varepsilon)\) and expanding the resulting terms,
we obtain the following block-matrix representation of \(C(\varepsilon)\) with respect to this decomposition
\begin{align}
    C(\varepsilon)
    =
    \begin{pmatrix}
        C_{11} & C_{12} \\
        C_{12}^\dagger & C_{22} + D_{22}(\varepsilon)
    \end{pmatrix} \, .
\end{align}
where
\begin{align*}
&C_{11} = R(\alpha\log \rho + \beta\log \tau+\gamma \log{\sigma})R, \\
&C_{12} = R(\beta \log{\tau}+\gamma (\log{\sigma})S)R^\perp, \\
&C_{22} = (T-R)\beta\log{\tau}(T-R)+(S-R)\gamma\log{\sigma}(S-R) \\
&D_{22}(\varepsilon) = (\log \varepsilon)(\alpha R^\perp + \beta T^\perp+\gamma S^\perp) .
\end{align*}
Since $\alpha R^\perp + \beta T^\perp+\gamma S^\perp$ is positive on $\operatorname{ran}(P)^\perp$, there exists
$\delta = \min\{\alpha,\alpha+\beta,\alpha+\beta+\gamma\} =\alpha>0$ such that
\begin{align}
    \alpha R^\perp + \beta T^\perp+\gamma S^\perp \geq \delta R^\perp
\end{align}
Let \(k := \mathrm{rank}(R)\), and assume \(0 < k < n\); the boundary cases \(k = 0\) and
\(k = n\) are trivial and can be treated separately.  
We order the eigenvalues of \(C(\varepsilon)\), counted with their algebraic multiplicities,
in nonincreasing order,
\begin{align}
    \lambda_1(\varepsilon) \ge \cdots \ge \lambda_k(\varepsilon)
\ge \lambda_{k+1}(\varepsilon) \ge \cdots \ge \lambda_n(\varepsilon),
\end{align}
and denote by
\begin{align}
    u_1(\varepsilon), \ldots, u_k(\varepsilon),
u_{k+1}(\varepsilon), \ldots, u_n(\varepsilon)
\end{align}
the corresponding orthonormal family of eigenvectors.
Then
\begin{equation}
\label{eq: eig decomposition}
e^{C(\varepsilon)} = \sum_{i=1}^n e^{\lambda_i(\varepsilon)}\,u_i(\varepsilon)u_i(\varepsilon)^\ast .
\end{equation}
Let \(\mu_1 \ge \cdots \ge \mu_k\) denote the eigenvalues of the upper-left block \(C_{11}\),
ordered nonincreasingly.  
Standard majorization results (see, e.g.,~\cite[Corollary~7.2]{bhatia1997matrix}) implies the inequality
\begin{equation}
\label{eq: majorization result}
    \sum_{i=1}^l \mu_i \le \sum_{i=1}^l \lambda_i(\varepsilon),
    \qquad 1 \le l \le k .
\end{equation}
Moreover, since
\begin{align}
\lim_{\varepsilon \downarrow 0}
\frac{C(\varepsilon)}{\log \varepsilon}
=
\begin{pmatrix}
    0 & 0 \\
    0 & \alpha R^\perp+\beta T^\perp+\gamma S^\perp
\end{pmatrix},
\end{align}
it follows that, for indices \(k < i \le n\), the ratio
\(\lambda_i(\varepsilon)/\log \varepsilon\) converges to a strictly positive limit as
\(\varepsilon \downarrow 0\). Consequently, $\lambda_i(\varepsilon) \longrightarrow -\infty$ for $ k < i \le n$.
Invoking~\eqref{eq: eig decomposition}, it is enough to verify that for every
sequence \((\varepsilon_n)_{n\in\mathbb N}\) with \(\varepsilon_n \rightarrow 0^+\), there exists a converging subsequence \((\varepsilon_n')_{n\in\mathbb N}\), such that, for each \(1 \le i \le k\), it converges to
\begin{align}
\label{eq: supp verify 1}
    &\lambda_i(\varepsilon_n') \to \mu_i, \\
\label{eq: supp verify 2}
    &u_i(\varepsilon_n') \to u_i \in \operatorname{ran}(R), \\
\label{eq: supp verify 3}
    &C u_i = \mu_i u_i .
\end{align}
Hence, passing to a subsequence if necessary, we may assume that
\(u_i(\varepsilon_n) \to u_i\) for all \(1 \le i \le k\), and that
\(\{u_1,\dots,u_k\}\) forms an orthonormal family.
Define the components
\begin{align}
    u_i^{(1)} := R\,u_i(\varepsilon_n),
\qquad
u_i^{(2)} := R^\perp u_i(\varepsilon_n).
\end{align}
Then
\begin{align}
\lambda_i(\varepsilon_n)
&= \big\langle C(\varepsilon_n)u_i(\varepsilon_n),\,u_i(\varepsilon_n)\big\rangle \\
&= \langle C_{11}u_i^{(1)},u_i^{(1)}\rangle
   + 2\,\Re\langle C_{12}u_i^{(2)},u_i^{(1)}\rangle
   + \langle (C_{22}+D_{22}(\varepsilon_n))u_i^{(2)},u_i^{(2)}\rangle \\
   \label{eq: bound eigenvalues eps}
&\le \langle C_{11}u_i^{(1)},u_i^{(1)}\rangle
   + 2\,\Re\langle C_{12}u_i^{(2)},u_i^{(1)}\rangle
   + \langle C_{22}u_i^{(2)},u_i^{(2)}\rangle
   + (\log \varepsilon_n)\,\delta\,\|u_i^{(2)}\|^2 ,
\end{align}
where we used the bound
\(D_{22}(\varepsilon_n) \le (\log \varepsilon_n)\,\delta\,R^\perp\).
Since \(\mu_1 \le \lambda_1(\varepsilon_n)\) by~\eqref{eq: majorization result} and hence the eigenvalues must be bounded from below, relation~\eqref{eq: bound eigenvalues eps} implies that it must be that
\(u_1^{(2)} \to 0\), and hence \(u_1^{(1)} \to u_1 \in \operatorname{ran}(R)\).
Moreover, the same relation implies that
\begin{align}
    \mu_1 \leq \limsup_{n\to\infty} \lambda_1(\varepsilon_n)
\le \langle C_{11}u_1,u_1\rangle \le \mu_1,
\end{align}
which shows that
\(\lambda_1(\varepsilon_n) \to \langle C_{11}u_1,u_1\rangle = \mu_1\) and hence that converging subsequences must converge to a fixed value $\mu_1$.
In addition, since \(\mu_1\) is the maximal eigenvalue of \(C_{11}\), it follows that
\(C_{11}u_1 = \mu_1 u_1\).
If \(k \ge 2\), the same reasoning applies to \(\lambda_2(\varepsilon_n)\), which is
uniformly bounded from below by~\eqref{eq: majorization result}.
Thus \(u_2^{(2)} \to 0\) and \(u_2^{(1)} \to u_2 \in \operatorname{ran}(R)\).
Using again~\eqref{eq: majorization result} and~\eqref{eq: bound eigenvalues eps}, we conclude that
\begin{align}
    \lambda_2(\varepsilon_n) \to \langle C_{11}u_2,u_2\rangle = \mu_2 \,.
\end{align}
Since, as shown above, \(u_2\) is orthogonal to \(u_1\), it must belong to the subspace
\(\{u_1\}^\perp \cap \operatorname{ran}(R)\); hence \(C_{11}u_2=\mu_2u_2\), where \(\mu_2\)
is the largest eigenvalue of \(C_{11}\) restricted to this subspace. Repeating the same
argument inductively yields\eqref{eq: supp verify 1},~\eqref{eq: supp verify 2} and~\eqref{eq: supp verify 3}
for all \(1\le k\le n\).

\end{proof}

\section{Properties of the club-sandwiched conditional entropy}
\label{app: properties of lambda-conditional entropies}
The first lemma establishes a duality relation for the club-sandwiched conditional entropy. 
Specifically, it shows that this quantity is dual to the $\alpha$--$z$ conditional entropy arrow up, defined as
\begin{align}
H_{\alpha,z}^{\uparrow}(A|E)_{\rho}
:= \sup_{\sigma_{E}} - D_{\alpha,z}(\rho_{AE}\,\|\, I_A \otimes \sigma_E),
\end{align}
where the $\alpha$--$z$ R\'enyi relative entropy is given by~\cite{audenaert13_alphaz}
\begin{align}
D_{\alpha,z}(\rho \| \sigma)
:= \frac{1}{\alpha-1}
\log \Tr\!\left[\left(\rho^{\frac{\alpha}{2z}}
\sigma^{\frac{1-\alpha}{z}}
\rho^{\frac{\alpha}{2z}}\right)^{z}\right].
\end{align}
This is a specific case of the result established in~\cite[Theorem 3.1]{rubboli2024quantum}.
\begin{lemma}
 \label{lem: duality relation}
Let $\rho_{AEC}$ be a pure quantum state. Then, we have
    \begin{align}
\widetilde{H}_\alpha^{\frac{1-2\alpha}{1-\alpha}}(A|E)_{\rho} =  -H_{\beta(\alpha),\beta(\alpha)/2}^{\uparrow}(A|C)_{\rho} \,,
    \end{align}
where $\beta(\alpha)=2\alpha/(3\alpha-1)$.
\end{lemma}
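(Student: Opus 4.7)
The plan is to derive this duality as a direct specialization of the general duality relation for the three-parameter $(\alpha,z,\lambda)$-family of conditional entropies established in~\cite[Theorem 3.1]{rubboli2024quantum}. That theorem provides, for every pure tripartite state $\rho_{AEC}$, an explicit rational transformation of the parameters under which a conditional entropy of the family on the $AE$-side equals the negative of another member of the family on the $AC$-side. Different loci in parameter space on one side map to different loci on the other, and in particular the arrow-up sublocus (where the auxiliary state variable is maximized over and the marginal is not substituted in) is one of the possible images of this transformation.

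The first step is to substitute our specific parameters into the general duality map: the club-sandwiched line $z = \alpha$, together with $\lambda = (1-2\alpha)/(1-\alpha)$, which is the boundary of the data-processing region identified in Section~\ref{sec: notation}. The map then produces a candidate image triple $(\alpha', z', \lambda')$ depending rationally on $\alpha$.

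The second step is the algebraic verification that this image triple is exactly $(\alpha', z') = (\beta(\alpha), \beta(\alpha)/2)$ with $\beta(\alpha) = 2\alpha/(3\alpha-1)$, and that it lands on the arrow-up locus so that the right-hand side is the supremum over $\sigma_C$ of $-D_{\beta, \beta/2}(\rho_{AC}\,\|\, I_A \otimes \sigma_C)$, which by definition equals $H^\uparrow_{\beta, \beta/2}(A|C)_\rho$. One also has to check that the normalization prefactors $1/(1-\alpha)$ on the left and $1/(\beta - 1)$ on the right combine with the overall sign flip so as to produce the equality stated in the lemma, and to confirm that $\beta(\alpha) \in [1,2]$ whenever $\alpha \in [1/2,1]$, so that the arrow-up entropy is well defined throughout the relevant range.

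The main obstacle, assuming the general duality theorem is at hand, is purely the bookkeeping of rational parameter transformations: one must carefully track how $\alpha$, $z$, and $\lambda$ are permuted or inverted, and extract the clean closed form $\beta(\alpha) = 2\alpha/(3\alpha-1)$. Conceptually, however, the identification is natural: a minimization over $\sigma_E$ at the boundary of the data-processing region on one side should dualize to a maximization over $\sigma_C$ in the arrow-up form on the complementary side, which is the standard behavior of Rényi entropy dualities under purification and is the heuristic reason why the image lands exactly on the arrow-up locus $\lambda' = 1$.
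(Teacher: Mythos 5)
Your proposal follows essentially the same route as the paper: the paper gives no separate proof of this lemma, stating only that it is a specific case of the general duality for the $(\alpha,z,\lambda)$-family in~\cite[Theorem 3.1]{rubboli2024quantum}, which is precisely the theorem you specialize at $z=\alpha$, $\lambda=(1-2\alpha)/(1-\alpha)$. Your additional bookkeeping (checking the image parameters $(\beta(\alpha),\beta(\alpha)/2)$, the arrow-up locus, the prefactors and sign, and that $\beta(\alpha)\in[1,2]$ for $\alpha\in[\tfrac12,1]$) is exactly the verification implicit in the paper's citation.
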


\begin{lemma}
\label{lem: limit of lambda conditional entropy}
    Let $\rho_{AE}$ be a quantum state. Then, we have that
    \begin{align}
        \lim_{\alpha \rightarrow \frac{1}{2}^+} \widetilde{H}_\alpha^{\frac{1-2\alpha}{1-\alpha}}(A|E)_{\rho} = \widetilde{H}^\downarrow_\frac{1}{2}(A|E)_{\rho} \,, \quad \text{and} \qquad \lim_{\alpha \rightarrow 1^-} \widetilde{H}_\alpha^{\frac{1-2\alpha}{1-\alpha}}(A|E)_{\rho} = H(A|E)_{\rho} \,.
    \end{align}
\end{lemma}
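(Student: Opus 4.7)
My plan is to prove both limits by squeeze arguments anchored on the following uniform upper bound: for every $\alpha\in[1/2,1)$ one has $\widetilde{H}_\alpha^{(1-2\alpha)/(1-\alpha)}(A|E)_\rho\le \widetilde{H}_\alpha^\downarrow(A|E)_\rho$. This is obtained by evaluating the minimization over $\sigma_E$ at the choice $\sigma_E=\rho_E$: the exponents of $\rho_E$ telescope, since
\begin{equation}
2\cdot\tfrac{(1-\lambda)(1-\alpha)}{2\alpha}+\tfrac{\lambda(1-\alpha)}{\alpha}=\tfrac{1-\alpha}{\alpha},
\end{equation}
independently of $\lambda$, and the resulting expression is exactly the definition of the sandwiched arrow-down conditional entropy. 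The upper-bound side of both limits then follows from the standard convergences $\widetilde{H}_\alpha^\downarrow\to\widetilde{H}_{1/2}^\downarrow$ as $\alpha\to 1/2^+$ and $\widetilde{H}_\alpha^\downarrow\to H$ as $\alpha\to 1^-$.

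For the limit $\alpha\to 1^-$, the matching lower bound $\widetilde{H}_\alpha^{(1-2\alpha)/(1-\alpha)}(A|E)_\rho\ge H(A|E)_\rho$ is provided by Lemma~\ref{lem: lower bound lambda}, which the paper already invokes. The squeeze theorem then yields the limit directly.

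For the limit $\alpha\to 1/2^+$ the matching lower bound is the delicate part. The key observation is that at the boundary $\alpha=1/2$ we have $\lambda=0$, so the exponent $\lambda(1-\alpha)/\alpha$ vanishes; since admissible $\sigma_E$ must satisfy $\supp(\sigma_E)\supseteq\supp(\rho_E)$, the factor $\sigma_E^{\lambda(1-\alpha)/\alpha}$ reduces to the identity on the relevant subspace, and the objective $F_\alpha(\sigma_E)$ becomes constant in $\sigma_E$. Using $\Tr|AB|=\Tr|BA|$ this constant is identified with $2\log\|\rho_{AE}^{1/2}\rho_E^{1/2}\|_1=\widetilde{H}_{1/2}^\downarrow(A|E)_\rho$. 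For $\alpha>1/2$, the negative-power factor $\sigma_E^{(1-2\alpha)/\alpha}$ forces $F_\alpha(\sigma_E)\to+\infty$ as $\lambda_{\min}(\sigma_E)\to 0^+$, so the minimizer $\sigma_E^\star(\alpha)$ stays in a compact set of states; extracting a convergent subsequence $\sigma_E^\star(\alpha_n)\to\sigma_E^\star$ and using joint continuity of $F_\alpha$ in $(\alpha,\sigma_E)$ at $\alpha=1/2$ yields $\liminf_{\alpha\to 1/2^+}\min_{\sigma_E}F_\alpha(\sigma_E)\ge F_{1/2}(\sigma_E^\star)=\widetilde{H}_{1/2}^\downarrow(A|E)_\rho$, matching the upper bound.

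The main obstacle is making the compactness step quantitative: as $\alpha\to 1/2^+$ the negative-power penalty on small eigenvalues of $\sigma_E$ weakens, so extracting a uniform lower bound on $\lambda_{\min}(\sigma_E^\star(\alpha))$ independent of $\alpha$ requires care. An alternative route which avoids this issue is to apply Lemma~\ref{lem: duality relation} on a purification $\rho_{AEC}$, rewriting the quantity as $-H^\uparrow_{\beta(\alpha),\beta(\alpha)/2}(A|C)_\rho$, and then appeal to continuity of the $\alpha$--$z$ conditional R\'enyi entropy at $(\beta,z)=(2,1)$ together with a direct identification of $-H^\uparrow_{2,1}(A|C)_\rho$ with $\widetilde{H}_{1/2}^\downarrow(A|E)_\rho$ for pure $\rho_{AEC}$.
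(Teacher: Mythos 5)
Your upper bounds and the $\alpha\to 1^-$ limit are fine and essentially match the paper in substance: choosing $\sigma_E=\rho_E$ indeed collapses the middle factor to $\rho_E^{(1-\alpha)/\alpha}$ and (after cycling under the trace) gives $\widetilde{H}_\alpha^{(1-2\alpha)/(1-\alpha)}\le\widetilde{H}_\alpha^{\downarrow}$, which is even a bit more elementary than the paper's appeal to monotonicity in $\lambda$; and invoking Lemma~\ref{lem: lower bound lambda} for the matching lower bound as $\alpha\to1^-$ is legitimate (that lemma is proved independently via the duality relation, so there is no circularity), which is exactly the content of the paper's own argument there.

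The gap is in your primary argument (route 1) for the lower bound as $\alpha\to\tfrac12^+$, and you flag it yourself: the compactness/continuity step is not established. The set of states is compact, so a convergent subsequence of minimizers $\sigma_E^\star(\alpha_n)$ always exists, but the limit point may be rank-deficient (the negative-power penalty $\sigma_E^{(1-2\alpha)/\alpha}$ weakens as $\alpha\to\tfrac12^+$, so nothing prevents the minimizers from drifting to the boundary), and joint continuity of $F_\alpha(\sigma_E)$ at $(\tfrac12,\sigma_E^\star)$ with degenerate $\sigma_E^\star$ is precisely what would need proof; without it the interchange $\liminf_n F_{\alpha_n}(\sigma^\star_E(\alpha_n))\ge F_{1/2}$ is unjustified. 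Note the issue can be closed without any compactness argument: every admissible $\sigma_E$ satisfies $\supp(\rho_E)\subseteq\supp(\sigma_E)$ and has eigenvalues at most one, so $\sigma_E^{(1-2\alpha)/\alpha}$ dominates the support projector of $\rho_E$, whence $\rho_E^{1/2}\sigma_E^{(1-2\alpha)/\alpha}\rho_E^{1/2}\ge\rho_E$ and, by monotonicity of $M\mapsto\Tr[M^\alpha]$, one gets the $\sigma$-independent bound $F_\alpha(\sigma_E)\ge\frac{1}{1-\alpha}\log\Tr[(\rho_{AE}^{1/2}\rho_E\rho_{AE}^{1/2})^\alpha]$, whose limit as $\alpha\to\tfrac12^+$ is $\widetilde{H}^{\downarrow}_{1/2}(A|E)_\rho$. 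Your fallback route 2 is in fact the paper's proof, but as stated it leans on joint continuity of $H^{\uparrow}_{\alpha,z}$ at $(\beta,z)=(2,1)$, which you do not prove; the paper avoids this by first using antitonicity in $z$ (to pass from $z=\beta(\alpha)/2$ to $z=1$) and then only needing continuity in the single parameter $\alpha$ of the Petz arrow-up quantity, together with the duality $-\widebar{H}^{\uparrow}_{2}(A|C)_\rho=\widetilde{H}^{\downarrow}_{1/2}(A|E)_\rho$. So either supply the uniform lower bound above, or spell out the monotonicity-plus-one-parameter-continuity argument; as written, neither route is complete.
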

\begin{proof}
In both cases, we prove both an upper and a lower bound. 

Let us start with the case $\alpha \rightarrow 1/2$. The monotonicity in $\lambda$ established in~\cite[Proposition 8.3]{rubboli2022new} and the fact that $\frac{1-2\alpha}{1-\alpha} \leq  0$ imply that
\begin{align}
    \lim_{\alpha\rightarrow \frac{1}{2}^+} \widetilde{H}_\alpha^{\frac{1-2\alpha}{1-\alpha}}(A|E)_{\rho} \leq \lim_{\alpha\rightarrow \frac{1}{2}^+} \widetilde{H}_\alpha^{\downarrow}(A|E)_{\rho} = \widetilde{H}_{\frac{1}{2}}^{\downarrow}(A|E)_{\rho} \,.
\end{align}
Next, we prove a lower bound. In particular, we leverage the duality relation in Lemma~\ref{lem: duality relation}. 
Consider a purification $\rho_{AEC}$ of $\rho_{AE}$. We have 
\begin{align}
    \lim_{\alpha\rightarrow \frac{1}{2}^+} \widetilde{H}_\alpha^{\frac{1-2\alpha}{1-\alpha}}(A|E)_{\rho} = \lim_{\alpha\rightarrow \frac{1}{2}^+ }-H_{\beta(\alpha),\beta(\alpha)/2}^{\uparrow}(A|C)_{\rho} \geq \lim_{\alpha\rightarrow \frac{1}{2}^+} -H_{\beta(\alpha),1}^{\uparrow}(A|C)_{\rho} = -H_{2,1}^{\uparrow}(A|C)_{\rho} = \widetilde{H}_{\frac{1}{2}}^{\downarrow}(A|E)_{\rho} \,.
\end{align}
Here, the inequality follows from the monotonicity in $z$ for $\alpha>1$ (see~\cite[Proposition 10.2]{rubboli2024quantum}). Moreover, we use that $\beta(\alpha)\rightarrow 2^-$ for $\alpha \rightarrow1/2^+$ and that $\beta(\alpha)/2 \leq 1$.

Let us now turn to the case $\alpha \rightarrow 1^-$. Let us start with the upper bound. The monotonicity in $\lambda$ established in~\cite[Proposition 8.3]{rubboli2022new} and the fact that $(1-2\alpha)/(1-\alpha) \leq 0$ for $\alpha \geq 1/2$ imply that
   \begin{equation}
       \limsup_{\alpha \rightarrow 1^-} \widetilde{H}^{\frac{1-2\alpha}{1-\alpha}}_{\alpha}(A|E)_{\rho} \leq \limsup_{\alpha \rightarrow 1^-} \widetilde{H}^\downarrow_{\alpha}(A|E)_{\rho} = H(A|E)_{\rho} \,,
   \end{equation}
   where the latter inequality follows from the fact that the sandwiched divergence converges to the Umegaki relative entropy for $\alpha\rightarrow 1$.

Next, consider a purification $\rho_{AEC}$ of $\rho_{AE}$. We have that
    \begin{align}
        \liminf_{\alpha \rightarrow 1^-} \widetilde{H}_\alpha^{\frac{1-2\alpha}{1-\alpha}}(A|E)_{\rho} = \liminf_{\alpha \rightarrow 1^-} -H_{\beta(\alpha),\beta(\alpha)/2}^{\uparrow}(A|C)_{\rho} \geq \liminf_{\alpha \rightarrow 1^-} -H_{\beta(\alpha),1}^{\uparrow}(A|C)_{\rho} =-H(A|C)_{\rho} = H(A|E)_{\rho}
    \end{align}
    Here, the inequality follows from the fact that the $\alpha$-$z$ Rényi relative entropy is monotonically decreasing in $z$ for $\alpha>1$~\cite{lin2015investigating} and $\beta(\alpha)/2 \leq 1$. The final equality follows from the known fact that the Petz conditional entropy arrow up converges to the von Neumann conditional entropy (see e.g. Proposition 10.1~\cite{rubboli2024quantum}).
\end{proof}

\begin{lemma}
\label{lem: lower bound lambda}
    Let $\rho_{XE}$ be a quantum state. Then for all $\alpha \in [1/2,1]$ it holds that $\widetilde{H}_{\alpha}^\frac{1-2\alpha}{1-\alpha}(X|E)_\rho \geq H(X|E)_\rho$.
\end{lemma}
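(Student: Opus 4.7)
The plan is to exploit the duality relation of Lemma~\ref{lem: duality relation} together with two standard monotonicity properties of the $\alpha$-$z$ R\'enyi divergence, essentially extending the chain of inequalities already used in the proof of Lemma~\ref{lem: limit of lambda conditional entropy} at the endpoint $\alpha\to 1^-$ to the whole interval $\alpha\in[1/2,1]$. First, I would fix a purification $\rho_{XEC}$ of the classical-quantum state $\rho_{XE}$ and apply Lemma~\ref{lem: duality relation} to rewrite
\begin{equation*}
    \widetilde{H}_\alpha^{\frac{1-2\alpha}{1-\alpha}}(X|E)_\rho \;=\; -H^{\uparrow}_{\beta(\alpha),\beta(\alpha)/2}(X|C)_\rho, \qquad \beta(\alpha) := \frac{2\alpha}{3\alpha-1}.
\end{equation*}
A short computation shows $\beta(\alpha)$ decreases from $2$ at $\alpha=1/2$ to $1$ at $\alpha=1$, so $\beta(\alpha)\in[1,2]$ and $\beta(\alpha)/2\leq 1$ throughout $\alpha\in[1/2,1]$.

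Next, I would invoke the monotonicity in $z$ of the $\alpha$-$z$ R\'enyi divergence: for $\beta>1$, the map $z\mapsto D_{\beta,z}$ is non-increasing~\cite[Proposition~10.2]{rubboli2024quantum}. Since $\beta(\alpha)/2 \leq 1$, this yields, pointwise in $\sigma_C$,
\begin{equation*}
    D_{\beta(\alpha),\beta(\alpha)/2}(\rho_{XC}\|I_X\otimes \sigma_C)\;\geq\; D_{\beta(\alpha),1}(\rho_{XC}\|I_X\otimes \sigma_C) \;=\; \widebar{D}_{\beta(\alpha)}(\rho_{XC}\|I_X\otimes \sigma_C),
\end{equation*}
and taking the supremum of the negatives over $\sigma_C$ gives $H^{\uparrow}_{\beta(\alpha),\beta(\alpha)/2}(X|C)_\rho \leq \widebar{H}^{\uparrow}_{\beta(\alpha)}(X|C)_\rho$.

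Finally, the Petz R\'enyi divergence is non-decreasing in its order, so $\widebar{D}_{\beta(\alpha)}\geq \widebar{D}_1 = D$ pointwise in $\sigma_C$ (using $\beta(\alpha)\geq 1$), giving the conditional-entropy inequality $\widebar{H}^{\uparrow}_{\beta(\alpha)}(X|C)_\rho \leq H(X|C)_\rho$. Chaining these bounds and using the standard duality $H(X|E)_\rho = -H(X|C)_\rho$ for the von Neumann conditional entropy on a pure tripartite state yields
\begin{equation*}
    \widetilde{H}_\alpha^{\frac{1-2\alpha}{1-\alpha}}(X|E)_\rho \;=\; -H^{\uparrow}_{\beta(\alpha),\beta(\alpha)/2}(X|C)_\rho \;\geq\; -H(X|C)_\rho \;=\; H(X|E)_\rho,
\end{equation*}
which is the claim; the endpoint $\alpha=1$ is immediate from Lemma~\ref{lem: limit of lambda conditional entropy}. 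There is no real obstacle: the proof is a composition of well-established monotonicities, and the only point requiring care is tracking the parameter ranges so that both the $z$-monotonicity of $D_{\alpha,z}$ and the order-monotonicity of the Petz R\'enyi divergence apply in the correct direction.
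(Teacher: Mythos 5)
Your proof is correct and follows essentially the same route as the paper: duality via Lemma~\ref{lem: duality relation}, monotonicity in $z$ of the $\alpha$-$z$ divergence (noting $\beta(\alpha)/2\le 1$), then monotonicity of the Petz divergence in its order (equivalently, antimonotonicity of $H^{\uparrow}_{\beta,1}$ in $\beta$ with $\beta(\alpha)\ge 1$), and finally the von Neumann duality $H(X|E)_\rho=-H(X|C)_\rho$. Your explicit treatment of the endpoint $\alpha=1$ via Lemma~\ref{lem: limit of lambda conditional entropy} is a harmless (indeed slightly more careful) addition to the paper's argument.
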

\begin{proof}
    We use the duality relation of Lemma~\ref{lem: duality relation}.
Consider a purification $\rho_{AEC}$ of $\rho_{AE}$. We have
\begin{align}
     \widetilde{H}_\alpha^{\frac{1-2\alpha}{1-\alpha}}(A|E)_{\rho} = -H_{\beta(\alpha),\beta(\alpha)/2}^{\uparrow}(A|C)_{\rho} \geq  -H_{\beta(\alpha),1}^{\uparrow}(A|C)_{\rho} \geq -H(A|C)_{\rho} = H(A|E)_\rho \,.
\end{align}
Here, the first inequality follows from the monotonicity in $z$ for $\alpha>1$ (see~\cite[Proposition 10.2]{rubboli2024quantum}). The second inequality follows from the fact that $\beta(\alpha) \geq 1$ and the antimonotonicity in $\alpha$ (see~\cite[Proposition 10.1]{rubboli2024quantum}). The last equality is the duality relation of the von Neumann conditional entropy.
\end{proof}

The next asymptotic characterization of the club-sandwiched conditional entropy has been proved in~\cite[Proposition 8.1]{rubboli2024quantum}.

\begin{lemma}
\label{lem: old asymptotic}
Let $\rho_{AE}$ be a bipartite quantum state.
    For $1-\alpha/(1-\alpha) \leq \lambda \leq 0$ and $\alpha \in (0,1)$, it holds that
\begin{align}
        \lim_{n \rightarrow \infty} \frac{1}{n}\frac{1}{1-\alpha}\log{\Tr(\omega_{E^n}^{\frac{\lambda}{2}\frac{1-\alpha}{\alpha}}\rho_{E^n}^{\frac{1-\lambda}{2}\frac{1-\alpha}{\alpha}}\rho_{A^n E^n }\rho_{E^n}^{\frac{1-\lambda}{2}\frac{1-\alpha}{\alpha}}\omega_{E^n}^{\frac{\lambda}{2}\frac{1-\alpha}{\alpha}})^\alpha} = \widetilde{H}^\lambda_\alpha(A|E)_{\rho_{AE}}
    \end{align}
    where $\rho_{E^n} = \rho_E^{\otimes n}$ and $\rho_{A^n E^n} = \rho_{AE}^{\otimes n}$ and $\omega_{E^n}$ is the universal state of Lemma~\ref{lem: universal state}.
\end{lemma}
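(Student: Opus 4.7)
The strategy is to sandwich the limit between matching $\liminf$ and $\limsup$ bounds, both equal to $\widetilde{H}^\lambda_\alpha(A|E)_\rho$. The $\liminf$ direction is essentially tautological: the universal state $\omega_{E^n}$ is a particular feasible point in the infimum defining $\widetilde{H}^\lambda_\alpha(A^n|E^n)_{\rho^{\otimes n}}$, so plugging it in upper bounds the trace functional inside the minimum. Since $1-\alpha>0$, this transfers to a lower bound $\widetilde{H}^\lambda_\alpha(A^n|E^n)_{\rho^{\otimes n}} = n\,\widetilde{H}^\lambda_\alpha(A|E)_\rho$ (using additivity, Theorem~4.1 of~\cite{rubboli2024quantum}) on the normalized $\tfrac{1}{1-\alpha}\log$ of the trace expression. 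Dividing by $n$ gives the $\liminf$ bound.

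\textbf{Upper bound.} For the matching $\limsup$ bound, the plan is to show that $\omega_{E^n}$ is nearly optimal, up to a polynomially bounded factor. By a symmetrization argument analogous to Lemma~\ref{lem: permutation invariant optimum}, the optimizer $\sigma^*_{E^n}$ of the $n$-copy problem can be chosen permutation-invariant, in which case the defining property of the universal state yields the operator dominance
\begin{align}
\sigma^*_{E^n} \leq p(n)\,\omega_{E^n}, \qquad p(n) = (n+1)^{d_E^2-1}.
\end{align}
Setting $s := \lambda(1-\alpha)/\alpha$ and $t := (1-\lambda)(1-\alpha)/\alpha$, the hypothesis $\lambda \in \bigl[(1-2\alpha)/(1-\alpha),\,0\bigr]$ forces $s \in [-1,0]$, so $x \mapsto x^s$ is operator antitone and the dominance flips to $\omega_{E^n}^{s} \leq p(n)^{-s}\,(\sigma^*_{E^n})^{s}$. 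Sandwiching successively by the positive operators $\rho_{E^n}^{t/2}$ and $\rho_{A^nE^n}^{1/2}$ preserves this operator inequality, and applying the operator-monotone $\alpha$-power ($\alpha \in (0,1)$) followed by the trace gives, after using the cyclic identity $\Tr[(XYX)^\alpha] = \Tr[(Y^{1/2}X^2 Y^{1/2})^\alpha]$ to isolate $\omega^{s}$ in the middle,
\begin{align}
\Tr\bigl[(\omega_{E^n}^{s/2}\rho_{E^n}^{t/2}\rho_{A^nE^n}\rho_{E^n}^{t/2}\omega_{E^n}^{s/2})^\alpha\bigr] \leq p(n)^{-s\alpha}\,\Tr\bigl[((\sigma^*_{E^n})^{s/2}\rho_{E^n}^{t/2}\rho_{A^nE^n}\rho_{E^n}^{t/2}(\sigma^*_{E^n})^{s/2})^\alpha\bigr].
\end{align}
Taking $\tfrac{1}{n(1-\alpha)}\log$ of both sides and invoking additivity on the right bounds the left-hand side by $\widetilde{H}^\lambda_\alpha(A|E)_\rho + \tfrac{-s\alpha}{1-\alpha}\cdot\tfrac{\log p(n)}{n}$. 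Since $\log p(n) = O(\log n)$, the correction vanishes, giving the matching $\limsup$ bound.

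\textbf{Main obstacle.} The critical technical step is the operator-antitonicity of $x \mapsto x^s$, which requires $s \in [-1,0]$; this is exactly what the data-processing constraint $\lambda \geq (1-2\alpha)/(1-\alpha)$ guarantees (for $\alpha < 1/2$ the hypothesized range of $\lambda$ is empty and the statement is vacuous). Outside this regime, a direct operator inequality between $\omega^s$ and $(\sigma^*)^s$ is no longer available, and one would need more delicate tools such as Araki--Lieb--Thirring trace inequalities or complex interpolation to transfer the polynomial dominance to the $\alpha$-power functional. A secondary point is the justification of the permutation-symmetrization of the optimizer, which requires joint convexity of $\sigma \mapsto \Tr[(\cdots\sigma^s\cdots)^\alpha]$ in this parameter regime, exactly the property underlying Lemma~\ref{lem: permutation invariant optimum}.
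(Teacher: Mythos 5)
The paper never proves this lemma: it is imported verbatim from Proposition~8.1 of~\cite{rubboli2024quantum}, so there is no in-paper argument to compare against. Your two-sided strategy is the natural one and is sound in outline: the lower bound by treating $\omega_{E^n}$ as a feasible point plus additivity is immediate, and for the upper bound the chain ``permutation-invariant (near-)optimizer $\le p(n)\,\omega_{E^n}$, flip with the operator antitonicity of $x\mapsto x^{s}$ for $s=\lambda\tfrac{1-\alpha}{\alpha}\in[-1,0]$, conjugate, apply the operator-monotone $\alpha$-power, absorb the $\mathrm{poly}(n)$ factor'' is exactly the mechanism that makes the universal state asymptotically optimal; your observation that $\lambda\ge\tfrac{1-2\alpha}{1-\alpha}$ (with $\lambda\le 0$) is precisely what puts $s$ in $[-1,0]$ is correct.

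Two steps need repair. First, your symmetrization of the $n$-copy optimizer rests on convexity of $\sigma_{E^n}\mapsto\Tr\bigl[(\rho_{A^nE^n}^{1/2}\rho_{E^n}^{t/2}\sigma_{E^n}^{s}\rho_{E^n}^{t/2}\rho_{A^nE^n}^{1/2})^{\alpha}\bigr]$; Lemma~\ref{lem: permutation invariant optimum} establishes the analogous convexity only for the Log-Euclidean functional (via the Gibbs variational formula), and convexity of the club-sandwiched trace functional in $\sigma$ for $s\in[-1,0)$ and $\alpha\in(\tfrac12,1)$ is a nontrivial matrix-convexity fact that you neither prove nor cite. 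The simple fix is to bypass symmetrization altogether: take a single-copy optimizer $\sigma^{*}_{E}$ and use $\sigma^{*\otimes n}_{E}$, which is automatically permutation invariant and whose objective equals $n\widetilde{H}^{\lambda}_{\alpha}(A|E)_{\rho}$ by the same multiplicativity/additivity you already invoke. Second, the implication ``$\sigma^{*}_{E^n}\le p(n)\omega_{E^n}\Rightarrow\omega_{E^n}^{s}\le p(n)^{-s}(\sigma^{*}_{E^n})^{s}$'' is false as a global operator inequality when $\sigma^{*}$ is singular, because generalized negative powers vanish on the kernel (e.g.\ $\mathrm{diag}(1,0)\le I$ but $I^{-1}\not\le\mathrm{diag}(1,0)$), and nothing forces the optimizer to be full rank. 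You must either pass to a full-rank near-optimizer or note that the subsequent conjugation by $\rho_{E^n}^{t/2}$ compresses everything onto $\supp\rho_{E}^{\otimes n}\subseteq\supp\sigma^{*\otimes n}_{E}$, and on vectors in that subspace the inequality $\langle v,\omega_{E^n}^{s}v\rangle\le p(n)^{-s}\langle v,(\sigma^{*\otimes n}_{E})^{s}v\rangle$ does hold (for instance via the integral representation of $x^{s}$, $s\in(-1,0)$, and the resolvent comparison $(B+t)^{-1}\le(A+t)^{-1}$ on $\supp A$). With these two repairs the argument closes, and the correction $\tfrac{-s\alpha}{1-\alpha}\tfrac{\log p(n)}{n}$ indeed vanishes, giving the matching $\limsup$ bound.
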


The next lemma concerns the symmetry of the optimizer of the LE conditional entropy.
\begin{lemma}
\label{lem: permutation invariant optimum}
    Let $\alpha \in (0,1)$ and $\lambda \leq 0$ and $\rho_{XE}$ be a quantum state. The optimizer in 
    \begin{align}
&H_{\alpha,\infty}^\lambda(X^m|E^m)_{\mathcal{P}_{\rho_E^{\otimes m}}\circ \mathcal{P}_{\omega_{E^m}}(\rho_{XE}^{\otimes m})}\notag\\
&\; =\inf_{\sigma_{E^m}}\frac{1}{1-\alpha}\log\Tr\Big[R_{X^m E^m}\exp\Big(\alpha\log{\big(\mathcal{P}_{\rho_E^{\otimes m}}\circ \mathcal{P}_{\omega_{E^m}}(\rho_{XE}^{\otimes m})\big)}\notag\\
& \hspace{10em}+(1-\alpha)R_{X^m E^m}\big((1-\lambda)\log{\rho_E^{\otimes m}}+\lambda\log{\sigma_{E^m}}\big)R_{X^m E^m}\Big)\Big] 
    \end{align}
can be chosen to be permutation invariant under the exchange of the $E$ systems.
    \end{lemma}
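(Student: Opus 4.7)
The plan is to establish the claim by combining a symmetry observation for the pinched state $\tilde\rho_{X^m E^m} := \mathcal{P}_{\rho_E^{\otimes m}} \circ \mathcal{P}_{\omega_{E^m}}(\rho_{XE}^{\otimes m})$ with a convexity property of the objective function in $\sigma_{E^m}$, and then to symmetrize any optimizer. First, I would verify that $\tilde\rho_{X^m E^m}$ is invariant under the simultaneous permutation action $V_\pi := U_\pi^X \otimes U_\pi^E$ of the $m$ copies of $X$ and $E$. This is because $\rho_{XE}^{\otimes m}$ is manifestly invariant and the two pinchings are taken with respect to operators on $E^m$ that are themselves invariant under $U_\pi^E$, namely $\rho_E^{\otimes m}$ and the universal state $\omega_{E^m}$; hence their spectral projectors commute with $U_\pi^E$ and the pinching maps preserve the simultaneous symmetry. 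In particular, the support projector $R_{X^m E^m}$ and the $\sigma$-independent part $H := \alpha \log \tilde\rho + (1-\alpha)(1-\lambda)\, R \log(\rho_E^{\otimes m}) R$ of the exponent commute with $V_\pi$ for every $\pi \in S_m$.

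The second step is to show that the objective function
\[
f(\sigma_{E^m}) := \frac{1}{1-\alpha}\log\Tr\!\left[R \exp\!\left(H + \lambda(1-\alpha)\, R(\log \sigma_{E^m})R\right)\right]
\]
is convex in $\sigma_{E^m}$ on the admissible set $\{\sigma : \supp(\tilde\rho_{X^m E^m}) \subseteq \supp(I_{X^m}\otimes \sigma)\}$. The cleanest route is via the Gibbs variational formula (Lemma~\ref{lem: Gibbs principle}) applied to the exponent, which gives
\[
(1-\alpha) f(\sigma_{E^m}) = \sup_{\tau} \Bigl\{-\Tr[\tau \log \tau] + \Tr[\tau H] + \lambda(1-\alpha)\,\Tr[\tau R(\log \sigma_{E^m}) R]\Bigr\},
\]
where the supremum ranges over states $\tau$ supported in the range of $R$. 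For each fixed $\tau$, the map $\sigma \mapsto \Tr[\tau R(\log \sigma) R]$ is concave by operator concavity of the logarithm (applied to the positive operator $R\tau R$); since $\lambda(1-\alpha) \le 0$, the expression inside the supremum is convex in $\sigma_{E^m}$, and a pointwise supremum of convex functions is convex.

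Finally, I would conclude by symmetrization. The infimum defining the LE conditional entropy is attained because $f$ is continuous on the compact admissible set. Given an optimizer $\sigma^*_{E^m}$, form the group average
\[
\bar\sigma_{E^m} := \frac{1}{m!}\sum_{\pi \in S_m} U_\pi^E \sigma^*_{E^m} (U_\pi^E)^\dagger,
\]
which is manifestly $E^m$-permutation invariant. Each summand is admissible (conjugate the support inclusion $\supp(\tilde\rho) \subseteq \supp(I\otimes\sigma^*)$ by $V_\pi$ and use invariance of $\tilde\rho$), and admissibility is preserved under convex combinations. Using $V_\pi R V_\pi^\dagger = R$, $V_\pi H V_\pi^\dagger = H$, and $V_\pi(I\otimes \sigma) V_\pi^\dagger = I\otimes U_\pi^E \sigma U_\pi^{E\dagger}$, cyclicity of the trace yields $f(U_\pi^E \sigma^* U_\pi^{E\dagger}) = f(\sigma^*)$ for every $\pi$. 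Jensen's inequality applied to the convex function $f$ then gives $f(\bar\sigma_{E^m}) \leq f(\sigma^*_{E^m})$, so $\bar\sigma_{E^m}$ is itself an optimizer. The main technical hurdle is the convexity step: the projector $R$ sandwiching $\log\sigma$ inside the exponential prevents a direct appeal to Lieb concavity, and the Gibbs variational reformulation is what cleanly linearizes the $\sigma$-dependence and exposes the convex structure.
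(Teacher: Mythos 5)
Your proof is correct and follows essentially the same route as the paper: convexity of the objective in $\sigma_{E^m}$ via the Gibbs variational formula (linearizing the $\log\sigma_{E^m}$ dependence and using operator concavity of the logarithm together with $\lambda(1-\alpha)\le 0$), then group-averaging over $S_m$ using the simultaneous permutation invariance of the pinched state, of $R_{X^mE^m}$ and of $\rho_E^{\otimes m}$, plus Jensen's inequality. One minor caveat: your attainment step ("continuous on the compact admissible set") is shaky because $\{\sigma_{E^m}:\supp(\tilde\rho_{X^mE^m})\subseteq\supp(I_{X^m}\otimes\sigma_{E^m})\}$ is not closed, but this is harmless since, exactly as in the paper, it suffices to show that symmetrizing an arbitrary (or $\varepsilon$-optimal) $\sigma_{E^m}$ does not increase the objective, which your invariance-plus-Jensen step already establishes without needing an exact optimizer.
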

\begin{proof}
    The above objective function is convex in $\sigma_{E^m}$. To see this, we can use the variational formula in the proof of Lemma~\ref{lem: variational LE}
    \begin{align}
        & -\log\Tr\Big[R_{X^m E^m}\exp\Big(\alpha\log{\big(\mathcal{P}_{\rho_E^{\otimes m}}\circ \mathcal{P}_{\omega_{E^m}}(\rho_{XE}^{\otimes m})\big)}\notag\\
& \hspace{10em}+(1-\alpha)R_{X^m E^m}\big((1-\lambda)\log{\rho_E^{\otimes m}}+\lambda\log{\sigma_{E^m}}\big)R_{X^m E^m}\Big)\Big] \\& =\inf_{\tau_{X^m E^m}}\Big\{\Tr[\tau_{X^m E^m}\log{\tau_{X^m E^m}}] \\
    & \qquad    - \Tr[\tau_{X^m E^m}(\alpha\log{\big(\mathcal{P}_{\rho_E^{\otimes m}}\circ \mathcal{P}_{\omega_{E^m}}(\rho_{XE}^{\otimes m})\big)}+(1-\alpha)R_{X^m E^m}\big((1-\lambda)\log{\rho_E^{\otimes m}}+\lambda(1-\alpha)\log{\sigma_{E^m}}\big)R_{X^m E^m})]\Big\}
    \end{align}
Since the logarithm is operator concave and the infimum of concave functions is concave, the latter function is concave in $\sigma_{E^m}$. Due to the minus sign, the original function is convex.

Next, we leverage convexity to show that for a given $\sigma_{E^m}$, the symmetrized state 
\begin{align}
\sigma_{E^m}^{\mathrm{sym}} = \frac{1}{m!}\sum_{\pi\in S_m}
U_\pi^{B^m}\sigma_{E^m}(U_\pi^{B^m})^\dagger
\end{align}
performs at least as well as any state $\sigma_{E^m}$. 
We have 
\begin{align}
    &\log\Tr[R_{X^m E^m}\exp\Big(\alpha\log{\big(\mathcal{P}_{\rho_E^{\otimes m}}\circ \mathcal{P}_{\omega_{E^m}}(\rho_{XE}^{\otimes m})\big)}+(1-\alpha)R_{X^m E^m}\left((1-\lambda)\log{\rho_E^{\otimes m}}+\lambda\log{\sigma_{E^m}^{\mathrm{sym}}}\right)R_{X^m E^m}\Big)] \\
&  \leq \frac{1}{m!}\sum_{\pi\in S_m} \log\Tr \Big[R_{X^m E^m}\exp\Big(\alpha\log{\big(\mathcal{P}_{\rho_E^{\otimes m}}\circ \mathcal{P}_{\omega_{E^m}}(\rho_{XE}^{\otimes m})\big)}\\
&\qquad \qquad \qquad +(1-\alpha)R_{X^m E^m}\Big((1-\lambda)\log \rho_E^{\otimes m}+\lambda\log  U_\pi^{E^m} \sigma_{E^m} (U_\pi^{E^m})^\dagger \Big) R_{X^m E^m}\Big)\Big] \\
&  = \frac{1}{m!}\sum_{\pi\in S_m} \log\Tr \Big[R_{X^m E^m}\exp\Big(\alpha\log{\big(\mathcal{P}_{\rho_E^{\otimes m}}\circ \mathcal{P}_{\omega_{E^m}}(\rho_{XE}^{\otimes m})\big)}\\
&\, +(1-\alpha)R_{X^m E^m}\Big((1-\lambda)\log (I_{X^m}\otimes \rho_E^{\otimes m})+\lambda\log  \big((U_\pi^{X^m} \otimes U_\pi^{E^m}) (I_{X^m} \otimes \sigma_{E^m}) (U_\pi^{X^m} \otimes U_\pi^{E^m})^\dagger \big)\Big) R_{X^m E^m}\Big)\Big] \\
&  = \frac{1}{m!}\sum_{\pi\in S_m} \log\Tr \Big[R_{X^m E^m}\exp\Big( \big(U_\pi^{X^m} \otimes U_\pi^{E^m}\big)^\dagger\Big(\alpha\log{\big(\mathcal{P}_{\rho_E^{\otimes m}}\circ \mathcal{P}_{\omega_{E^m}}(\rho_{XE}^{\otimes m})\big)}\\
&\qquad \qquad \qquad +(1-\alpha)R_{X^m E^m}\big((1-\lambda)\log \rho_E^{\otimes m}+\lambda\log  \sigma_{E^m}\big)  R_{X^m E^m} \Big) \big(U_\pi^{X^m}\otimes U_\pi^{E^m}\big)^\dagger\Big)\Big] \\
& = \log\Tr \Big[R_{X^m E^m}\exp\Big( \alpha\log{\big(\mathcal{P}_{\rho_E^{\otimes m}}\circ \mathcal{P}_{\omega_{E^m}}(\rho_{XE}^{\otimes m})\big)}\\
&\qquad \qquad \qquad +(1-\alpha)R_{X^m E^m}\Big((1-\lambda)\log \rho_E^{\otimes m}+\lambda\log  \sigma_{E^m}\Big)  R_{X^m E^m} \Big)\Big]
\end{align}
Here, the first inequality follows from the convexity in $\sigma_{E^m}$ established earlier. 
In the second equality, we insert the unitary implementing a permutation of the systems $X^m$ 
and use that $U_\pi^{X^m}(U_\pi^{X^m})^\dagger = I_{X^m}$. 
In the third equality, we exploit the fact that all remaining operators inside the exponential 
are invariant under conjugation by the product unitary $U_\pi^{X^m}\otimes U_\pi^{E^m}$, 
since they are invariant under permutations on the $m$ copies of the composite system $XE$. 
Finally, in the last equality, we pull the unitaries out of the exponential and use the 
unitary invariance of the trace.
\end{proof}

\section{Useful facts}
\label{app: lemmas}

\begin{lemma}
\label{lem: support lemma}
Let $\rho_{AE}$ and $\sigma_E$ be two quantum states. Then, it holds that $\supp(\rho_{AE})\subseteq\supp(I_A \otimes \rho_E)$. Moreover, if $\supp(\rho_{AE})\subseteq \supp(I_A \otimes \sigma_E)$, then it also holds that $\supp(I_A \otimes \rho_E)\subseteq \supp(I_A\otimes \sigma_E)$.
\end{lemma}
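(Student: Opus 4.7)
The plan is to prove both inclusions by exploiting the standard trick that if a positive semidefinite operator has trace zero against a projector, then its ``sandwich'' by the projector vanishes, which in turn translates into a support inclusion.

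For the first inclusion, let $P_E^{\perp}$ denote the projector onto $\ker(\rho_E)$. I would compute
\begin{align}
\Tr\!\left[(I_A\otimes P_E^{\perp})\,\rho_{AE}\right] = \Tr_E\!\left[P_E^{\perp}\rho_E\right] = 0,
\end{align}
using that $P_E^{\perp}\rho_E = 0$ by definition. Since $(I_A\otimes P_E^{\perp})\,\rho_{AE}\,(I_A\otimes P_E^{\perp})$ is positive semidefinite with vanishing trace, it equals zero. This forces $\rho_{AE}^{1/2}(I_A\otimes P_E^{\perp}) = 0$, hence $(I_A\otimes P_E^{\perp})\,\rho_{AE} = 0$, which is exactly the statement $\supp(\rho_{AE})\subseteq \ker(I_A\otimes P_E^{\perp}) = \supp(I_A\otimes\rho_E)$.

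For the second inclusion, I would argue it suffices to prove $\supp(\rho_E)\subseteq \supp(\sigma_E)$, since tensoring with $I_A$ preserves the inclusion of supports. Let $Q_E^{\perp}$ be the projector onto $\ker(\sigma_E)$. The hypothesis $\supp(\rho_{AE})\subseteq \supp(I_A\otimes\sigma_E)$ is equivalent to $(I_A\otimes Q_E^{\perp})\,\rho_{AE} = 0$. Taking the partial trace over $A$ yields $Q_E^{\perp}\rho_E = 0$, which is the desired support inclusion $\supp(\rho_E)\subseteq \supp(\sigma_E)$.

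There is no real obstacle here; the only thing to be careful with is the standard passage from ``trace of a PSD operator is zero'' to ``the operator is zero,'' and the compatibility between the kernel and the support of a Hermitian operator. Both steps are routine and do not require any of the more elaborate machinery developed elsewhere in the paper.
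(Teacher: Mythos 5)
Your proof is correct and follows essentially the same route as the paper: the second inclusion is obtained by hitting the hypothesis with the projector associated to $\sigma_E$ and taking the partial trace over $A$ (you phrase it with the kernel projector $Q_E^\perp$, the paper with the support projector $S_E$, which is an equivalent formulation). The only difference is that you spell out the first inclusion via the standard trace-zero argument, whereas the paper simply cites it as well known.
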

\begin{proof}
It is well known and straightforward to
verify that $\supp(\rho_{AE})\subseteq \supp(I_A \otimes \rho_E)$, which, for instance, guarantees the finiteness of the von Neumann
conditional entropy.

 Let us assume that $\supp(\rho_{AE}) \subseteq \supp(I_A\otimes \sigma_E)$. We need to prove that the latter condition implies $\supp(\rho_{E}) \subseteq \supp(\sigma_E)$. Let us denote with $S_E$ the projector onto the support of $\sigma_E$. The condition $\supp(\rho_{AE}) \subseteq \supp(I_A\otimes \sigma_E)$ implies that $(I_A\otimes S_E)\rho_{AE}(I_A\otimes S_E)$. Then, we have that
\begin{align}
    \rho_E=\Tr_A[\rho_{AE}] = \Tr_A[(I_A\otimes S_E)\rho_{AE}(I_A \otimes S_E)] = S_E \rho_E S_E \,,
\end{align}
which implies that $\supp(\rho_{E}) \subseteq \supp(\sigma_E)$.
\end{proof}

\begin{lemma}
\label{lem: pinching with identity}
    \begin{align}
         \mathcal{P}_{{I_A}\otimes \rho_E}(\cdot) = I_A \otimes \mathcal{P}_{\rho_E}(\cdot)
    \end{align}
\end{lemma}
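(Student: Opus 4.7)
The plan is to prove the identity by a direct unfolding of the definition of the pinching map. Recall that for any Hermitian operator $M$ with spectral decomposition $M = \sum_i \lambda_i P_i$ grouped by distinct eigenvalues, the pinching channel is $\mathcal{P}_M(X) = \sum_i P_i X P_i$. Since the spectrum of $I_A \otimes \rho_E$ consists exactly of the distinct eigenvalues of $\rho_E$, with corresponding spectral projectors $I_A \otimes P_i^E$ (where $P_i^E$ are the spectral projectors of $\rho_E$), the two sides of the claimed equation collapse to the same sum.

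More concretely, first I would write $\rho_E = \sum_i \lambda_i P_i^E$ with distinct $\lambda_i$ and mutually orthogonal spectral projectors $P_i^E$ summing to $I_E$. Tensoring with $I_A$ preserves mutual orthogonality and yields $I_A \otimes \rho_E = \sum_i \lambda_i (I_A \otimes P_i^E)$ with the same distinct eigenvalues, so this is the spectral decomposition of $I_A \otimes \rho_E$. Applying the pinching definition then gives, for any bipartite operator $X_{AE}$,
\begin{align}
\mathcal{P}_{I_A \otimes \rho_E}(X_{AE}) = \sum_i (I_A \otimes P_i^E) X_{AE} (I_A \otimes P_i^E) = (\mathcal{I}_A \otimes \mathcal{P}_{\rho_E})(X_{AE}),
\end{align}
which is exactly the stated identity, reading the right-hand side of the lemma as the tensor product of the identity channel on $A$ with the pinching channel on $E$.

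There is no substantive obstacle here: the whole statement is a bookkeeping observation that tensoring a Hermitian operator with an identity factor does not split any eigenspaces on the new factor, so the spectral projectors, and hence the pinching map, factor accordingly. The only minor care required is to interpret the $I_A$ on the right-hand side of the lemma's display as the identity superoperator $\mathcal{I}_A$ rather than the identity operator, which is consistent with how the notation is used elsewhere in the manuscript.
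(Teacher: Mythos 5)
Your proof is correct and follows essentially the same route as the paper: write the spectral decomposition $\rho_E=\sum_i \lambda_i P_i^E$, observe that $I_A\otimes\rho_E$ has spectral projectors $I_A\otimes P_i^E$ with the same distinct eigenvalues, and conclude that the pinching factorizes. Your extra remarks on eigenvalue distinctness and reading $I_A$ as the identity superoperator are fine clarifications but do not change the argument.
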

\begin{proof}
We have $I_A \otimes \rho_E = I_A \otimes (\sum_x \Pi_{E,x} \lambda_x) = \sum_x \lambda_x I_A \otimes \Pi_{E,x}$. Hence, we have that
    \begin{align}
        \mathcal{P}_{I_A \otimes \rho_E}(\cdot) = \sum_x I_A \otimes \Pi_{E,x} (\cdot) I_A \otimes \Pi_{E,x} = I_A \otimes \mathcal{P}_{\rho_E}(\cdot)
    \end{align}
    \end{proof}
Note that the latter pinching is a map acting only on the side information $E$. 
Moreover, pinching with $\rho_E$ does not change the $E$ marginal of $\rho_{AE}$.
\begin{lemma}
\label{lem: marginal of commuting pinching}
   Let $\rho_{AE}$ be a bipartite quantum state and let $\omega_E$ a state such that $[\omega_E,\rho_E] = 0$.  Then, we have that 
    \begin{align}
        \Tr_{A}[\mathcal{P}_{I_A \otimes \omega_E}(\rho_{AE})] 
         =\rho_E
    \end{align}
\end{lemma}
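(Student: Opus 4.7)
The plan is to combine the previous lemma (Lemma~\ref{lem: pinching with identity}) with the commutativity hypothesis $[\omega_E,\rho_E]=0$ and a standard partial-trace identity. First, I will invoke Lemma~\ref{lem: pinching with identity} to rewrite the pinching on the composite system as acting trivially on $A$, namely
\begin{align}
\mathcal{P}_{I_A\otimes\omega_E}(\rho_{AE}) = \sum_x (I_A\otimes\Pi_{E,x})\,\rho_{AE}\,(I_A\otimes\Pi_{E,x}),
\end{align}
where $\{\Pi_{E,x}\}$ are the spectral projectors of $\omega_E$.

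Next, I will take the partial trace over $A$ and push it through the sum, using that for any operator $M_E$ we have the identity $\Tr_A[(I_A\otimes M_E)\,\rho_{AE}\,(I_A\otimes M_E)] = M_E\,\rho_E\,M_E$, which follows directly from the cyclicity of the trace applied to the $A$-factor. This yields
\begin{align}
\Tr_A[\mathcal{P}_{I_A\otimes\omega_E}(\rho_{AE})] = \sum_x \Pi_{E,x}\,\rho_E\,\Pi_{E,x} = \mathcal{P}_{\omega_E}(\rho_E).
\end{align}

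Finally, I will use the assumption $[\omega_E,\rho_E]=0$: since $\rho_E$ commutes with $\omega_E$, it commutes with every spectral projector $\Pi_{E,x}$, and hence is block-diagonal with respect to the eigenspaces of $\omega_E$. This means pinching with $\omega_E$ leaves $\rho_E$ invariant, so $\mathcal{P}_{\omega_E}(\rho_E)=\rho_E$. Combining the two displayed equalities gives the claim. There is no real obstacle here—the proof is a short chain of routine identities, and the only ingredient beyond the preceding lemma is the elementary fact that pinching by an operator that commutes with the input is the identity.
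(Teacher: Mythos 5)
Your proposal is correct and follows essentially the same route as the paper: factorize the pinching via Lemma~\ref{lem: pinching with identity}, pull the partial trace through to obtain $\mathcal{P}_{\omega_E}(\rho_E)$, and use $[\omega_E,\rho_E]=0$ to conclude the pinching acts as the identity on $\rho_E$. The only cosmetic difference is that you spell out the spectral-projector sum and the partial-trace identity explicitly, which the paper leaves implicit.
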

\begin{proof}
    Using Lemma~\ref{lem: pinching with identity} we obtain that
    \begin{align}
        \Tr_{A}[\mathcal{P}_{I_A \otimes \omega_E}(\rho_{AE})] = \Tr_{A}[I_A \otimes \mathcal{P}_{\omega_E}(\rho_{AE})]
        = \mathcal{P}_{\omega_E}(\rho_E) =\rho_E
    \end{align}
\end{proof}

The following lemma will be instrumental in deriving a variational characterization of the LE conditional entropy. It is commonly referred to in the literature as the Gibbs variational principle and was originally established in~\cite{petz1988variational} (see also~\cite[Lemma~1.2]{hiai1993golden} for a simpler proof in finite dimensions). In our setting, however, we require a slight modification of this result, as the optimization is restricted to states supported on a fixed reference state. This constraint leads to the appearance of an additional support projector, as already observed in~\cite[Theorem~III.6]{mosonyi2017strong}.
\begin{lemma}[Constrained Gibbs variational principle]
\label{lem: Gibbs principle}
Let $H$ be a Hermitian operator and $P_H$ be the projector onto its support. Then, 
\begin{align}
    -\log{\Tr[P_H\exp(-H)]}= \min_{\supp(\tau) \subseteq \supp(H)} \{\Tr[\tau\log{\tau}]+\Tr[\tau H]\}
\end{align}
where the infimum is performed over all states $\tau$ whose support is contained in the one of $H$. Moreover, the minimum is attained for
\begin{align}
    \tau = \frac{P_H\exp(-H)}{\Tr[P_H \exp(-H)]}\,.
\end{align}
    
\end{lemma}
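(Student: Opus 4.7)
The plan is to prove the constrained Gibbs variational principle by exhibiting the candidate minimizer and using nonnegativity of the quantum relative entropy, which is essentially Klein's inequality. Define
\begin{align}
Z := \Tr[P_H \exp(-H)], \qquad \sigma := \frac{P_H \exp(-H)}{Z}.
\end{align}
Since $H$ commutes with $P_H$ on $\supp(H)$, $\sigma$ is a well-defined density operator supported exactly on $\supp(H)$, and substituting $\tau=\sigma$ into the objective will yield the value $-\log Z$ (this is a direct calculation using $\log \sigma = -H - (\log Z) P_H$ on $\supp(H)$). Thus it suffices to show that $\sigma$ is the minimizer.

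For the lower bound, let $\tau$ be any state with $\supp(\tau)\subseteq \supp(H) = \supp(\sigma)$. I would compute $D(\tau\|\sigma) = \Tr[\tau\log\tau] - \Tr[\tau\log\sigma]$, where $\log\sigma$ is taken on its support. The key identity is that, on $\supp(H)$,
\begin{align}
\log\sigma = -H - (\log Z)\, P_H,
\end{align}
so that, using the support condition $\Tr[\tau P_H] = \Tr[\tau] = 1$,
\begin{align}
\Tr[\tau \log\sigma] = -\Tr[\tau H] - \log Z.
\end{align}
Substituting, $D(\tau\|\sigma) = \Tr[\tau\log\tau] + \Tr[\tau H] + \log Z$. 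Nonnegativity of the quantum relative entropy (which only requires $\supp(\tau)\subseteq \supp(\sigma)$, as is the case here) then yields
\begin{align}
\Tr[\tau\log\tau] + \Tr[\tau H] \geq -\log Z = -\log\Tr[P_H \exp(-H)],
\end{align}
with equality iff $\tau=\sigma$ (using strict positivity of $D(\tau\|\sigma)$ for $\tau\neq\sigma$ on the same support). Combining with the direct evaluation at $\sigma$ gives both the equality and the identification of the minimizer.

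The only subtlety, which is also the reason a ``constrained'' version is needed, is bookkeeping for the support. If $\tau$ were allowed to place weight on $\ker(H)$, then the expression $\Tr[\tau H]$ might be ill-defined (or trivially zero, depending on convention), and $\log\sigma$ would have to be extended to $\ker(H)$ by $-\infty$, so $D(\tau\|\sigma)=+\infty$ and the inequality becomes vacuous; equivalently, the minimum could only be attained on the restricted domain. Restricting the infimum to states with $\supp(\tau)\subseteq\supp(H)$ eliminates this issue, and the projector $P_H$ in the definition of $Z$ ensures that the exponential is effectively evaluated on the same subspace. With these conventions in place, the argument above goes through without further technicalities, and no step beyond Klein's inequality is needed.
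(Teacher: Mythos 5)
Your proof is correct. Note that the paper does not actually prove this lemma: it is stated with a citation to the unconstrained Gibbs variational principle (Petz; Hiai--Golden-Thompson) and to Mosonyi--Ogawa for the constrained variant with the support projector, so your write-up supplies a self-contained argument where the paper defers to the literature. The argument you give --- set $\sigma = P_H e^{-H}/Z$, use $\log\sigma = -H - (\log Z)P_H$ on $\supp(H)$, and invoke $D(\tau\|\sigma)\ge 0$ for $\supp(\tau)\subseteq\supp(\sigma)$, with equality checked directly at $\tau=\sigma$ --- is exactly the standard route taken in those references, and all the support bookkeeping is handled correctly (in particular $\supp(\sigma)=\supp(H)$ because $e^{-H}$ is positive definite on the support, and $\Tr[\tau P_H]=1$ under the constraint). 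One small inaccuracy in your closing discussion: in finite dimensions $\Tr[\tau H]$ is perfectly well defined for any state, so that is not the reason the constraint is needed; the point is rather that without the restriction the minimizer would be $e^{-H}/\Tr[e^{-H}]$ (with $e^{-H}$ acting as the identity on $\ker H$), giving the strictly smaller value $-\log\Tr[e^{-H}]$ whenever $H$ has a nontrivial kernel. This remark does not affect the validity of your proof of the stated, constrained, claim.
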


The following result relates to complex interpolation theory and offers a version of Hadamard's three-line theorem specifically for Schatten norms, as established in~\cite{beigi13_sandwiched}.
\begin{lemma}
\label{Hadamard}
Let $F:S \rightarrow L(\mathcal{H})$ be a bounded map that is holomorphic in the interior of $S$ and continuous up to the boundary. Assume that $1 \leq p_0,p_1 \leq \infty$ and for $0<\theta<1$ define $p_\theta$ by
\begin{equation}
\frac{1}{p_\theta} = \frac{1-\theta}{p_0}+\frac{\theta}{p_1}.
\end{equation}
For $k=0,1$, define 
\begin{equation}
M_k=\sup_{t \in \mathbb{R}}\|F(k+it)\|_{p_k}.
\end{equation}
Then, we have
\begin{equation}
\|F(\theta)\|_{p_{\theta}}\leq M_0^{1-\theta}M_1^\theta.
\end{equation}
\end{lemma}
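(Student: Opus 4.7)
The plan is to reduce the statement to the classical scalar Hadamard three-line theorem, exploiting the duality between Schatten norms. For Hölder-conjugate exponents $p,q$ with $1/p+1/q=1$, one has $\|A\|_p=\sup_{\|G\|_q\leq 1}|\Tr(AG)|$. Applied at $z=\theta$ with $q_\theta$ the Hölder conjugate of $p_\theta$, it suffices to show that $|\Tr(F(\theta)G)|\leq M_0^{1-\theta}M_1^\theta$ for every operator $G$ with $\|G\|_{q_\theta}=1$.

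To exploit the analyticity of $F$, I would embed the scalar quantity $\Tr(F(\theta)G)$ into a scalar analytic function on the strip $S$ by interpolating the dual operator. Let $q_k$ denote the Hölder conjugate of $p_k$, so that $1/q_\theta=(1-\theta)/q_0+\theta/q_1$. Writing the polar decomposition $G=W|G|$, I would define
\begin{equation}
G(z):=W\,|G|^{\,q_\theta\left(\frac{1-z}{q_0}+\frac{z}{q_1}\right)},
\end{equation}
interpreting the complex power through the spectral decomposition of $|G|$. By construction $G(\theta)=G$, and since $|G|^{it}$ is a unitary on the support of $|G|$, on the vertical line $\Re(z)=k$ we have
\begin{equation}
\|G(k+it)\|_{q_k}=\bigl\||G|^{q_\theta/q_k}\bigr\|_{q_k}=\|G\|_{q_\theta}^{\,q_\theta/q_k}=1.
\end{equation}

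Next I would consider the scalar function $h(z):=\Tr\bigl(F(z)G(z)\bigr)$. The operator-valued map $z\mapsto G(z)$ is entire (it is a finite linear combination of scalar exponentials in $z$ with coefficients given by $W$ composed with the eigenprojectors of $|G|$), so together with the boundedness and holomorphy of $F$ on $S$ and its continuity up to $\bar{S}$, the function $h$ is bounded and holomorphic on $S$ and continuous on $\bar{S}$. Hölder's inequality for Schatten norms on the boundary gives
\begin{equation}
|h(k+it)|\leq\|F(k+it)\|_{p_k}\,\|G(k+it)\|_{q_k}\leq M_k,\qquad k\in\{0,1\}.
\end{equation}
The classical scalar Hadamard three-line theorem then yields $|h(\theta)|\leq M_0^{1-\theta}M_1^\theta$, and taking the supremum over admissible $G$ concludes the proof.

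The main point requiring care is the bookkeeping in the degenerate endpoint cases $p_k\in\{1,\infty\}$, where $q_k\in\{\infty,1\}$, the corresponding exponent in the interpolation either vanishes or sits on the boundary, and one must verify directly that the relevant Hölder bound continues to hold (for $q_k=\infty$ the factor $|G|^{q_\theta/q_k}$ degenerates to the support projector of $|G|$, while for $q_k=1$ the associated bound is the trace-class/operator-norm duality). These adjustments are standard and do not alter the structure of the argument.
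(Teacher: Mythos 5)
The paper does not prove this lemma itself; it imports it from the reference \cite{beigi13_sandwiched}. Your argument---dualizing the Schatten norm, interpolating the dual operator $G(z)$ through its polar decomposition so that $G(\theta)=G$ and $\|G(k+it)\|_{q_k}\le 1$, and applying the scalar three-line theorem to $h(z)=\Tr\bigl(F(z)G(z)\bigr)$---is exactly the standard proof used in that reference, and it is correct (in the finite-dimensional setting relevant here), including your remarks on the degenerate endpoint cases $p_k\in\{1,\infty\}$.
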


The next lemma collects the key properties of the universal state~\cite[Lemma 1]{hayashitomamichel15c}. The conditional entropies can be defined in terms of the universal state, and its properties will be essential for our proofs.
\begin{lemma}
\label{lem: universal state}
Let $A$ be a system with $|A|=d$. For any $n \in \mathbb{N}$, there exists a state $\omega_{A^n} \in \mathcal{S}(A^n)$, referred to as the universal state, such that:
\begin{itemize}
\item For any permutation-invariant state $\tau_{A^n} \in \mathcal{S}(A^n)$, we have $\tau_{A^n} \leq g_{n,d} \omega_{A^n}$ with $g_{n,d} \leq (n+1)^{d^2-1}$.
\item The state $\omega_{A^n}$ is permutation invariant and invariant under $n$-fold product unitaries and commutes with all permutation invariant states.
\end{itemize}
\end{lemma}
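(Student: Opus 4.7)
The plan is to construct $\omega_{A^n}$ explicitly using Schur--Weyl duality, which provides the joint decomposition of $(\mathbb{C}^d)^{\otimes n}$ under the commuting actions of $U(d)$ (via $U \mapsto U^{\otimes n}$) and $S_n$ (via permutations),
\begin{align*}
(\mathbb{C}^d)^{\otimes n} \;\cong\; \bigoplus_{\lambda \in \Lambda_{n,d}} V_\lambda \otimes W_\lambda,
\end{align*}
where $\Lambda_{n,d}$ indexes Young diagrams with $n$ boxes and at most $d$ rows, $V_\lambda$ is an irreducible $U(d)$-module, and $W_\lambda$ is an irreducible $S_n$-module. Let $P_\lambda$ denote the projector onto the $\lambda$-isotypic block, so that $\Tr(P_\lambda) = \dim V_\lambda \cdot \dim W_\lambda$. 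I would define the universal state as the uniform mixture of the block-wise maximally mixed states,
\begin{align*}
\omega_{A^n} \;:=\; \frac{1}{|\Lambda_{n,d}|}\sum_{\lambda \in \Lambda_{n,d}} \frac{I_{V_\lambda}}{\dim V_\lambda} \otimes \frac{I_{W_\lambda}}{\dim W_\lambda}.
\end{align*}

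Permutation invariance and $U^{\otimes n}$-invariance then follow immediately because every $P_\lambda$ is a central projector that commutes with both the full $S_n$- and $U(d)^{\otimes n}$-action. For the commutation with an arbitrary permutation-invariant $\tau_{A^n}$, I would invoke the other half of Schur--Weyl duality: the commutant of the $S_n$-representation equals the algebra $\mathcal{A}$ generated by $U(d)^{\otimes n}$, which acts on block $\lambda$ as $\mathcal{L}(V_\lambda) \otimes I_{W_\lambda}$. Consequently any permutation-invariant state has the form $\tau_{A^n} = \sum_\lambda p_\lambda \, \tau_{V_\lambda} \otimes \frac{I_{W_\lambda}}{\dim W_\lambda}$ for states $\tau_{V_\lambda}$ on $V_\lambda$ and probabilities $p_\lambda$. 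Since $\omega_{A^n}$ lies in the center $\mathrm{span}\{P_\lambda\}_\lambda$ of $\mathcal{A}$, it commutes with every such $\tau_{A^n}$. The domination bound then drops out of $\tau_{V_\lambda} \le I_{V_\lambda}$ and $p_\lambda \le 1$,
\begin{align*}
\tau_{A^n} \;\le\; \sum_\lambda \dim V_\lambda \cdot \frac{P_\lambda}{\Tr(P_\lambda)} \;\le\; |\Lambda_{n,d}| \cdot \max_\lambda \dim V_\lambda \cdot \omega_{A^n}.
\end{align*}

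It remains to control the two combinatorial factors: $|\Lambda_{n,d}| \le (n+1)^{d-1}$ since a partition of $n$ into at most $d$ parts is determined by its first $d-1$ parts, each an integer in $\{0,\dots,n\}$, while the Weyl dimension formula gives $\dim V_\lambda \le (n+1)^{d(d-1)/2}$. Multiplying these and using $(d-1) + d(d-1)/2 = (d-1)(d+2)/2 \le d^2 - 1$ for $d \ge 1$ yields $g_{n,d} \le (n+1)^{d^2-1}$. The main subtle point is obtaining all three structural properties simultaneously: a naive $U(d)^{\otimes n}$-invariant state (e.g.\ the normalized projector onto the symmetric subspace) satisfies domination but not universal commutation, while a single maximally mixed block fails the domination bound. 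Recognizing that the mixture of Schur--Weyl block centers places $\omega_{A^n}$ inside the center of the algebra generated by $U(d)^{\otimes n}$ is the key input, after which each property is a routine consequence of standard representation-theoretic facts.
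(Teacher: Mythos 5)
Your construction is correct and coincides with the standard argument behind this lemma, which the paper itself imports without proof from~\cite{hayashitomamichel15c}: the uniform mixture of normalized Schur--Weyl isotypic projectors, the commutant characterization of permutation-invariant states, and the counting bounds on the number of Young diagrams and on $\dim V_\lambda$ are exactly the ingredients used there. Your exponent $(d-1)(d+2)/2$ is in fact slightly tighter than the stated $d^2-1$, so the claimed bound $g_{n,d}\leq (n+1)^{d^2-1}$ follows a fortiori.
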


The next lemma is a consequence of Lemma~\cite[Lemma 4.11]{tomamichel16_book} for $\alpha=1/2$. 
\begin{lemma}
\label{lem: pinching lemma}
    Let  $\sigma, \rho , \omega \in \mathcal{S}(\mathcal{H})$. Moreover, let $\sigma$ be block diagonal with respect to the eigenspace projectors of $\omega$, i.e., $\sigma = \P_\omega(\sigma)$ and $\R_k$ be a map that commutes with $\P_\omega^{\otimes k}$. Then, for any $k \geq 1$, it holds that
    \begin{align}
        -\log{F(\R_k(\rho^{\otimes k}),\sigma^{\otimes k})} \leq -\log{F(P^{\otimes k}_\omega(\R_k(\rho^{\otimes k})),\sigma^{\otimes k})} + k\log{|\mathrm{spec}(\omega)|}
    \end{align}
\end{lemma}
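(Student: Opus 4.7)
The plan is to prove a single-copy pinching bound $F(\P_\omega(\tau),\sigma)\le |\mathrm{spec}(\omega)|\,F(\tau,\sigma)$ whenever $[\omega,\sigma]=0$, and then lift it to $k$ copies. Equivalently, working with $-\log F = \widetilde D_{1/2}$, the target becomes a pinching inequality for the sandwiched R\'enyi divergence at $\alpha=1/2$, which is the $\alpha=1/2$ instance of the cited Lemma 4.11 of \cite{tomamichel16_book}. To pass from a single pinching to the tensor-product pinching $\P_\omega^{\otimes k}$, I would replace $\omega$ on $\mathcal H^{\otimes k}$ by an operator $\widetilde\omega$ whose spectral projectors are the product projectors $P_{i_1}\otimes\cdots\otimes P_{i_k}$ (obtained by perturbing the eigenvalues of $\omega^{\otimes k}$ to make all products distinct). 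Then $\P_{\widetilde\omega}=\P_\omega^{\otimes k}$, $|\mathrm{spec}(\widetilde\omega)|\le|\mathrm{spec}(\omega)|^k$, and $[\widetilde\omega,\sigma^{\otimes k}]=0$ is inherited from $[\omega,\sigma]=0$, so the single-copy bound applied to $\tau=\R_k(\rho^{\otimes k})$ on $\mathcal H^{\otimes k}$ produces exactly the stated inequality with correction $k\log|\mathrm{spec}(\omega)|$. The commutation of $\R_k$ with $\P_\omega^{\otimes k}$ is not used in the pinching argument itself but is the hypothesis exploited in the downstream application of the lemma.

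For the single-copy bound I would proceed in three short steps. First, since $\sigma$ commutes with $\omega$, both $\sqrt\sigma$ and $\sqrt{\P_\omega(\tau)}$ are block-diagonal in the eigenspaces $\{P_i\}$ of $\omega$, so
\begin{align}
F(\P_\omega(\tau),\sigma)^{1/2} = \|\sqrt{\P_\omega(\tau)}\sqrt\sigma\|_1 = \sum_i F(P_i\tau P_i,\,P_i\sigma P_i)^{1/2}.
\end{align}
Second, Cauchy--Schwarz $(\sum_i a_i)^2 \le N\sum_i a_i^2$ with $N=|\mathrm{spec}(\omega)|$ gives $F(\P_\omega(\tau),\sigma)\le N\sum_i F(P_i\tau P_i,P_i\sigma P_i)$. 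Third, setting $Y=\sqrt\sigma\tau\sqrt\sigma$ and $Z=Y^{1/2}$, a direct computation yields $F(P_i\tau P_i,P_i\sigma P_i)^{1/2}=\|P_iZ\|_1$, and H\"older's inequality
\begin{align}
\|P_iZ\|_1 = \|Z^{1/2}\cdot Z^{1/2}P_i\|_1 \le \|Z^{1/2}\|_2\,\|Z^{1/2}P_i\|_2 = \sqrt{\mathrm{Tr}[Z]\,\mathrm{Tr}[P_iZ]},
\end{align}
squared and summed over $i$, gives $\sum_i F(P_i\tau P_i,P_i\sigma P_i)\le \mathrm{Tr}[Z]^2 = F(\tau,\sigma)$. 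Composing the three bounds yields $F(\P_\omega(\tau),\sigma)\le N\,F(\tau,\sigma)$, which is the $k=1$ case of the lemma, and transplanting to $(\widetilde\omega,\sigma^{\otimes k},\R_k(\rho^{\otimes k}))$ closes the argument.

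The main obstacle I anticipate is the third step. Naively one might try to exploit the pinching inequality $\tau \le N\,\P_\omega(\tau)$ together with the monotonicity of $F$ in its first argument, but this only produces $F(\tau,\sigma)\le N\,F(\P_\omega(\tau),\sigma)$, which translates into the trivial direction of the $-\log F$ inequality (pinching by a $\sigma$-invariant operator can only increase fidelity). The non-trivial content of the lemma is a quantitative control on how much the fidelity can grow under such a pinching; the H\"older step above---equivalently, the cited Tomamichel pinching inequality for $\widetilde D_{1/2}$---is what keeps the blow-up multiplicative in $N$, and therefore contributes only the additive $\log N$ on the $-\log F$ side.
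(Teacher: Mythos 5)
Your proposal is correct, and it reaches the lemma by a genuinely different route than the paper. The paper proves the bound directly at the $k$-copy level by writing $\sqrt{F}$ as $\Tr[A^{-1/2}A]$ with $A=\sigma^{\otimes k\,\frac12}\R_k(\rho^{\otimes k})\sigma^{\otimes k\,\frac12}$, lower-bounding the inverse-square-root factor via the operator pinching inequality $\R_k(\rho^{\otimes k})\le|\mathrm{spec}(\omega)|^{k}\,\P_\omega^{\otimes k}(\R_k(\rho^{\otimes k}))$ together with the operator anti-monotonicity of $t\mapsto t^{-1/2}$, and then using block-diagonality to insert the pinching into the remaining linear factor; this is the standard $\widetilde D_{1/2}$ pinching argument (the route of~\cite[Prop.~4.12]{tomamichel16_book}), and it invokes the commutation of $\R_k$ with $\P_\omega^{\otimes k}$ to move the map past the pinching. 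You instead prove the clean one-shot statement $F(\P_\omega(\tau),\sigma)\le N\,F(\tau,\sigma)$ for block-diagonal $\sigma$ by splitting $\|\sqrt{\P_\omega(\tau)}\sqrt{\sigma}\|_1$ into the $N$ blocks, applying Cauchy--Schwarz over blocks, and controlling each block by the Schatten--H\"older bound $\|P_iZ\|_1\le\sqrt{\Tr[Z]\,\Tr[P_iZ]}$ with $Z=(\sqrt\sigma\,\tau\sqrt\sigma)^{1/2}$; I checked the two identities your argument hinges on, namely $F(P_i\tau P_i,P_i\sigma P_i)^{1/2}=\|P_iZ\|_1$ (which uses $[\sigma,P_i]=0$) and $\sum_i\Tr[P_iZ]=\Tr[Z]$, and both are valid, as is the lift to $k$ copies via a relabeled $\widetilde\omega$ whose spectral projectors are the product projectors (equivalently, your one-shot argument applies verbatim to any pinching given by a family of $|\mathrm{spec}(\omega)|^{k}$ orthogonal projectors). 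What your route buys is elementariness---only norm inequalities, no operator monotonicity---and it makes explicit that the hypothesis that $\R_k$ commutes with $\P_\omega^{\otimes k}$ is not needed for the inequality itself but only for the downstream application, a point that is less visible in the paper's proof; what the paper's route buys is brevity given the standard sandwiched-divergence pinching machinery. Your remark about the naive attempt is also accurate: combining $\tau\le N\P_\omega(\tau)$ with monotonicity of $F$ in its first argument only recovers the trivial (data-processing) direction, and the nontrivial content is precisely the reverse control that your H\"older step (or the paper's anti-monotonicity step) provides.
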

\begin{proof}
The pinching inequality states that   
\begin{align}
    \rho  \leq |\mathrm{spec}(\omega)|\mathcal{P}_{\omega}(\rho) \;\implies \; \rho^{\otimes k}  \leq |\mathrm{spec}(\omega)|^k\mathcal{P}_{\omega}(\rho)^{\otimes k} \,.
\end{align}
This inequality follows from the fact that if \(A \le B\) are positive semidefinite operators, then \(A^{\otimes k} \le B^{\otimes k}\) for all \(k\in\mathbb{N}\), which can be established by a straightforward induction argument. We then have
    \begin{align}
        &-\log{F(\R_k(\rho^{\otimes k}),\sigma^{\otimes k})} \\ & \qquad =-2\log\Tr[(\sigma^{\otimes k \frac{1}{2}}\R_k(\rho^{\otimes k})\sigma^{\otimes k \frac{1}{2}})^{-\frac{1}{2}}\sigma^{\otimes k \frac{1}{2}}\R_k(\rho^{\otimes k})\sigma^{\otimes k \frac{1}{2}}] \\
        & \qquad \leq -2\log\Tr[(\sigma^{\otimes k \frac{1}{2}}\P_{\omega}^{\otimes k}(\R_k(\rho^{\otimes k}))\sigma^{\otimes k \frac{1}{2}})^{-\frac{1}{2}}\sigma^{\otimes k \frac{1}{2}}\R_k(\rho^{\otimes k})\sigma^{\otimes k \frac{1}{2}}]+ k\log{|\mathrm{spec}(\omega)|} \\
        & \qquad = -2\log\Tr[(\sigma^{\otimes k \frac{1}{2}}\P_{\omega}^{\otimes k}(\R_k(\rho^{\otimes k}))\sigma^{\otimes k \frac{1}{2}})^{-\frac{1}{2}}\sigma^{\otimes k \frac{1}{2}}\P^{\otimes k}_{\omega}(\R_k(\rho^{\otimes k}))\sigma^{\otimes k \frac{1}{2}}]+ k\log{|\mathrm{spec}(\omega)|} \\
         & \qquad = -\log{F(P^{\otimes k}_\omega(\R_k(\rho^{\otimes k})),\sigma^{\otimes k})} + k\log{|\mathrm{spec}(\omega)|}
    \end{align}
In the first inequality, we used the operator antimonotonicity of the power $-1/2$ and that the map $\R_k$ commutes with $\P_\omega^{\otimes k}$. In the second equality, we use that inserting an additional pinching with respect to $\omega$
inside the trace does not change its value, namely
\begin{align}
& \Tr[(\sigma^{\otimes k \frac{1}{2}}\P_{\omega}^{\otimes k}(\R_k(\rho^{\otimes k}))\sigma^{\otimes k \frac{1}{2}})^{-\frac{1}{2}}\sigma^{\otimes k \frac{1}{2}}\R_k(\rho^{\otimes k})\sigma^{\otimes k \frac{1}{2}}] \\
& \qquad \qquad = \Tr[(\sigma^{\otimes k \frac{1}{2}}\P_{\omega}^{\otimes k}(\R_k(\rho^{\otimes k}))\sigma^{\otimes k \frac{1}{2}})^{-\frac{1}{2}}\sigma^{\otimes k \frac{1}{2}}\P^{\otimes k}_{\omega}(\R_k(\rho^{\otimes k}))\sigma^{\otimes k \frac{1}{2}}] \,.
\end{align}
To justify this equality, let $\{\Pi_i\}_i$ denote the projectors onto the eigenspaces of $\omega$.
By assumption, $\sigma$ is block diagonal with respect to $\{\Pi_i\}_i$, and by definition
of the pinching map, $\mathcal P_\omega^{\otimes k}(\R_k(\rho^{\otimes k}))$ is block diagonal with respect to the tensor product of the projectors $\{\Pi_{i_1} \otimes \dots \otimes \Pi_{i_k}\}_{i_1,...,i_k}$. Consequently,
the operator $(\sigma^{\otimes k \frac{1}{2}}\P_{\omega}^{\otimes k}(\R_k(\rho^{\otimes k}))\sigma^{\otimes k \frac{1}{2}})^{-\frac{1}{2}}$
is block diagonal with respect to the same projectors.
Inserting the resolution of the identity inside the trace, using the
cyclicity of the trace, and the fact that all block-diagonal operators commute with the
projectors $\{\Pi_{i_1} \otimes \dots \otimes \Pi_{i_k}\}_{i_1,...,i_k}$, we obtain the desired result.
\end{proof}

\bibliographystyle{ultimate}
\bibliography{my}

@article{carlen2010trace,
  title={Trace inequalities and quantum entropy: an introductory course},
  author={Carlen, Eric},
  journal={Entropy and the quantum},
  volume={529},
  pages={73--140},
  year={2010}
}

@article{bhatia1997matrix,
  title={Matrix analysis, volume 169 of},
  author={Bhatia, Rajendra},
  journal={Graduate texts in mathematics},
  year={1997}
}

@article{rubboli2022new,
  title={New additivity properties of the relative entropy of entanglement and its generalizations},
  author={Rubboli, Roberto and Tomamichel, Marco},
  journal={Communications in Mathematical Physics},
  volume={405},
  number={7},
  pages={162},
  year={2024},
  publisher={Springer},
  doi={10.1007/s00220-024-05025-3}
}

@article{berta2024strong,
  title={Strong Converse Exponent of Quantum Dichotomies},
  author={Berta, Mario and Yao, Yongsheng},
  journal={arXiv:2410.12576},
  year={2024}
}

@article{rubboli2024quantum,
  title={Quantum conditional entropies from convex trace functionals},
  author={Rubboli, Roberto and Goodarzi, Milad M and Tomamichel, Marco},
  journal={arXiv:2410.21976},
  year={2024}
}

@article{hayashi2015precise,
  title={Precise evaluation of leaked information with secure randomness extraction in the presence of quantum attacker},
  author={Hayashi, Masahito},
  journal={Communications in Mathematical Physics},
  volume={333},
  number={1},
  pages={335--350},
  year={2015},
  publisher={Springer},
  doi={10.1007/s00220-014-2174-y}
}

@article{berta2025strong,
  title={Strong converse Exponents of Partially Smoothed Information Measures},
  author={Berta, Mario and Yao, Yongsheng},
  journal={arXiv:2505.06050},
  year={2025}
}

@Article{hiai1993golden,
  author    = {Hiai, Fumio and Petz, D{\'e}nes},
  journal   = {Linear Algebra Appl.},
  title     = {The {G}olden-{T}hompson trace inequality is complemented},
  year      = {1993},
  pages     = {153--185},
  volume    = {181},
  doi       = {10.1016/0024-3795(93)90029-N},
  publisher = {Elsevier},
}

@article{petz1988variational,
  title={A variational expression for the relative entropy},
  author={Petz, D{\'e}nes},
  journal={Communications in Mathematical Physics},
  volume={114},
  number={2},
  pages={345--349},
  year={1988}
}

@article{sion1958general,
  title={On general minimax theorems.},
  author={Sion, Maurice},
  year={1958},
  journal={Pacific Journal of Mathematics}
}

@article{lennert13_renyi,
   abstract = {The Renyi entropies constitute a family of information measures that generalizes the well-known Shannon entropy, inheriting many of its properties. They appear in the form of unconditional and conditional entropies, relative entropies or mutual information, and have found many applications in information theory and beyond. Various generalizations of Renyi entropies to the quantum setting have been proposed, most notably Petz's quasi-entropies and Renner's conditional min-, max- and collision entropy. Here, we argue that previous quantum extensions are incompatible and thus unsatisfactory. We propose a new quantum generalization of the family of Renyi entropies that contains the von Neumann entropy, min-entropy, collision entropy and the max-entropy as special cases, thus encompassing most quantum entropies in use today. We show several natural properties for this definition, including data-processing inequalities, a duality relation, and an entropic uncertainty relation.},
   author = {Martin Müller-Lennert and Frédéric Dupuis and Oleg Szehr and Serge Fehr and Marco Tomamichel},
   doi = {10.1063/1.4838856},
   issn = {00222488},
   issue = {12},
   journal = {Journal of Mathematical Physics},
   month = {6},
   pages = {122203},
   title = {On Quantum Rényi Entropies: A New Generalization and Some Properties},
   volume = {54},
   url = {http://link.aip.org/link/JMAPAQ/v54/i12/p122203/s1&Agg=doi},
   year = {2013},
}

@article{Wilde3,
  title={Strong converse for the classical capacity of entanglement-breaking and Hadamard channels via a sandwiched R{\'e}nyi relative entropy},
  author={Wilde, Mark M and Winter, Andreas and Yang, Dong},
  journal={Communications in Mathematical Physics},
  volume={331},
  number={2},
  pages={593--622},
  year={2014},
  publisher={Springer},
  doi={10.1007/s00220-014-2122-x}
}

@article{audenaert13_alphaz,
   author = {Koenraad M. R. Audenaert and Nilanjana Datta},
   doi = {10.1063/1.4906367},
   journal = {Journal of Mathematical Physics},
   month = {10},
   pages = {022202},
   title = {$\alpha$-z-Relative Renyi Entropies},
   volume = {56},
   url = {http://arxiv.org/abs/1310.7178},
   year = {2015},
}

@article{lin2015investigating,
  title={Investigating properties of a family of quantum R{\'e}nyi divergences},
  author={Lin, Simon M and Tomamichel, Marco},
  journal={Quantum Information Processing},
  volume={14},
  number={4},
  pages={1501--1512},
  year={2015},
  publisher={Springer},
  doi={10.1007/s11128-015-0935-y}
}

@article{hayashi2016equivocations,
  title={Equivocations, exponents, and second-order coding rates under various R{\'e}nyi information measures},
  author={Hayashi, Masahito and Tan, Vincent YF},
  journal={IEEE Transactions on Information Theory},
  volume={63},
  number={2},
  pages={975--1005},
  year={2016},
  publisher={IEEE},
  doi={10.1109/TIT.2016.2636154}
}

@article{portmann22_securityreview,
  title = {Security in quantum cryptography},
  author = {Portmann, Christopher and Renner, Renato},
  journal = {Rev. Mod. Phys.},
  volume = {94},
  issue = {2},
  pages = {025008},
  numpages = {56},
  year = {2022},
  month = {Jun},
  publisher = {American Physical Society},
  doi = {10.1103/RevModPhys.94.025008},
  url = {https://link.aps.org/doi/10.1103/RevModPhys.94.025008}
}

@article{li2025two,
  title={Two-Parameter R\'enyi Information Quantities with Applications to Privacy Amplification and Soft Covering},
  author={Li, Shi-Bing and Li, Ke and Yu, Lei},
  journal={arXiv:2511.02297},
  year={2025}
}

@article{salzmann2022total,
  title={Total insecurity of communication via strong converse for quantum privacy amplification},
  author={Salzmann, Robert and Datta, Nilanjana},
  journal={arXiv:2202.11090},
  year={2022}
}

@inproceedings{shen2022strong,
  title={Strong converse for privacy amplification against quantum side information},
  author={Shen, Yu-Chen and Gao, Li and Cheng, Hao-Chung},
  booktitle={2022 IEEE International Symposium on Information Theory (ISIT)},
  pages={1880--1885},
  year={2022},
  organization={IEEE},
  doi={10.1109/ISIT50566.2022.9834467}
}

@article{dupuis2023privacy,
  title={Privacy amplification and decoupling without smoothing},
  author={Dupuis, Fr{\'e}d{\'e}ric},
  journal={IEEE Transactions on Information Theory},
  volume={69},
  number={12},
  pages={7784--7792},
  year={2023},
  publisher={IEEE},
  doi={10.1109/TIT.2023.3301812}
}

@inproceedings{renner2005universally,
  title={Universally composable privacy amplification against quantum adversaries},
  author={Renner, Renato and K{\"o}nig, Robert},
  booktitle={Theory of Cryptography Conference},
  pages={407--425},
  year={2005},
  organization={Springer},
  doi={10.1007/978-3-540-30576-7_22}
}

@article{li2022tight,
  title={Tight exponential analysis for smoothing the max-relative entropy and for quantum privacy amplification},
  author={Li, Ke and Yao, Yongsheng and Hayashi, Masahito},
  journal={IEEE Transactions on Information Theory},
  volume={69},
  number={3},
  pages={1680--1694},
  year={2022},
  publisher={IEEE},
  doi={10.1109/TIT.2022.3217671}
}

@article{li2024operational,
  title={Operational interpretation of the sandwiched R{\'e}nyi divergence of order 1/2 to 1 as strong converse exponents},
  author={Li, Ke and Yao, Yongsheng},
  journal={Communications in Mathematical Physics},
  volume={405},
  number={2},
  pages={22},
  year={2024},
  publisher={Springer},
  doi={10.1007/s00220-023-04890-8}
}

@article{leditzky2016strong,
  title={Strong converse theorems using R{\'e}nyi entropies},
  author={Leditzky, Felix and Wilde, Mark M and Datta, Nilanjana},
  journal={Journal of Mathematical Physics},
  volume={57},
  number={8},
  year={2016},
  publisher={AIP Publishing},
  doi={10.1063/1.4960099}
}

@article{mosonyi2017strong,
  title={Strong converse exponent for classical-quantum channel coding},
  author={Mosonyi, Mil{\'a}n and Ogawa, Tomohiro},
  journal={Communications in Mathematical Physics},
  volume={355},
  number={1},
  pages={373--426},
  year={2017},
  publisher={Springer},
  doi={10.1007/s00220-017-2928-4}
}

@article{petz86_book,
   author = {Dénes Petz},
   doi = {10.1016/0034-4877(86)90067-4},
   issn = {00344877},
   issue = {1},
   journal = {Reports on Mathematical Physics},
   month = {2},
   pages = {57-65},
   title = {Quasi-entropies for Finite Quantum Systems},
   volume = {23},
   url = {http://linkinghub.elsevier.com/retrieve/pii/0034487786900674},
   year = {1986},
}

@book{tomamichel16_book,
   author = {Marco Tomamichel},
   doi = {10.1007/978-3-319-21891-5},
   isbn = {978-3-319-21890-8},
   publisher = {Springer International Publishing},
   title = {Quantum Information Processing with Finite Resources},
   volume = {5},
   url = {http://link.springer.com/10.1007/978-3-319-21891-5},
   year = {2016},
}

@article{beigi13_sandwiched,
   author = {Salman Beigi},
   doi = {10.1063/1.4838855},
   issn = {00222488},
   issue = {12},
   journal = {Journal of Mathematical Physics},
   month = {6},
   pages = {122202},
   title = {Sandwiched Rényi Divergence Satisfies Data Processing Inequality},
   volume = {54},
   url = {http://arxiv.org/abs/1306.5920 http://scitation.aip.org/content/aip/journal/jmp/54/12/10.1063/1.4838855},
   year = {2013},
}

@article{tomamichel08_aep,
   author = {Marco Tomamichel and Roger Colbeck and Renato Renner},
   issn = {0018-9448},
   issue = {12},
   journal = {IEEE Transactions on Information Theory},
   month = {12},
   pages = {5840-5847},
   title = {A Fully Quantum Asymptotic Equipartition Property},
   volume = {55},
   url = {http://ieeexplore.ieee.org/lpdocs/epic03/wrapper.htm?arnumber=5319753},
   year = {2009},
   doi={10.1109/TIT.2009.2032797}
}

@article{li2023strong,
  title={Strong converse exponent for entanglement-assisted communication},
  author={Li, Ke and Yao, Yongsheng},
  journal={IEEE Transactions on Information Theory},
  volume={70},
  number={7},
  pages={5017--5029},
  year={2023},
  publisher={IEEE},
  doi={10.1109/TIT.2023.3335219}
}

@inproceedings{hayashitomamichel15c,
abstract = {A variety of new measures of quantum R{\'{e}}nyi mutual information and quantum R{\'{e}}nyi conditional entropy have recently been proposed, and some of their mathematical properties explored. Here, we show that the R{\'{e}}nyi mutual information attains operational meaning in the context of composite hypothesis testing, when the null hypothesis is a fixed bipartite state and the alternative hypothesis consists of all product states that share one marginal with the null hypothesis. This hypothesis testing problem occurs naturally in channel coding, where it corresponds to testing whether a state is the output of a given quantum channel or of a "useless" channel whose output is decoupled from the environment. Similarly, we establish an operational interpretation of R{\'{e}}nyi conditional entropy by choosing an alternative hypothesis that consists of product states that are maximally mixed on one system. Specialized to classical probability distributions, our results also establish an operational interpretation of R{\'{e}}nyi mutual information and R{\'{e}}nyi conditional entropy.},
author = {Hayashi, Masahito and Tomamichel, Marco},
booktitle = {Proc. IEEE ISIT 2015},
isbn = {978-1-4673-7704-1},
issn = {0022-2488},
month = {jun},
number = {},
pages = {1447--1451},
publisher = {IEEE},
title = {Correlation detection and an operational interpretation of the R{\'{e}}nyi mutual information},
url = {http://ieeexplore.ieee.org/lpdocs/epic03/wrapper.htm?arnumber=7282695 http://aip.scitation.org/doi/10.1063/1.4964755},
volume = {57},
year = {2015},
doi = {10.1109/ISIT.2015.7282695}
}

@ARTICLE{10403868,
  author={Hayashi, Masahito and Tan, Vincent Y. F.},
  journal={IEEE Transactions on Information Theory}, 
  title={Corrections to “Equivocations, Exponents, and Second-Order Coding Rates Under Various Rényi Information Measures”}, 
  year={2024},
  volume={70},
  number={4},
  pages={3046-3048},
  keywords={Encoding;Hash functions;Mathematics;Security;Quantum mechanics;Mutual information;Indexes;Information-theoretic security;equivocation;conditional Rényi entropies;Rényi divergence;Sibson’s mutual information;Arimoto’s mutual information;error exponents;secrecy exponents;second-order coding rates},
  doi={10.1109/TIT.2024.3355506}}

@article{devetak2005distillation,
  title={Distillation of secret key and entanglement from quantum states},
  author={Devetak, Igor and Winter, Andreas},
  journal={Proceedings of the Royal Society A: Mathematical, Physical and engineering sciences},
  volume={461},
  number={2053},
  pages={207--235},
  year={2005},
  publisher={The Royal Society},
  doi={10.1098/rspa.2004.1372}
}

@article{tomamichel2013hierarchy,
  title={A hierarchy of information quantities for finite block length analysis of quantum tasks},
  author={Tomamichel, Marco and Hayashi, Masahito},
  journal={IEEE Transactions on Information Theory},
  volume={59},
  number={11},
  pages={7693--7710},
  year={2013},
  publisher={IEEE},
  doi={10.1109/TIT.2013.2276628}
}

@article{shen2024optimal,
  title={Optimal second-order rates for quantum soft covering and privacy amplification},
  author={Shen, Yu-Chen and Gao, Li and Cheng, Hao-Chung},
  journal={IEEE Transactions on Information Theory},
  volume={70},
  number={7},
  pages={5077--5091},
  year={2024},
  publisher={IEEE},
  doi={10.1109/TIT.2024.3351963}
}

@article{tomamichel2009fully,
  title={A fully quantum asymptotic equipartition property},
  author={Tomamichel, Marco and Colbeck, Roger and Renner, Renato},
  journal={IEEE Transactions on information theory},
  volume={55},
  number={12},
  pages={5840--5847},
  year={2009},
  publisher={IEEE},
  doi={10.1109/TIT.2009.2032797}
}

\end{document}